\newcommand{\eps}{\varepsilon}
\renewcommand{\rho}{\varrho}
\renewcommand{\phi}{\varphi}
\newcommand{\N}{\mathbb{N}}
\newcommand{\R}{\mathbb{R}}
\renewcommand{\S}{\mathbb{S}}
\newcommand{\id}{\mathbbm{1}} 
\newcommand{\cA}{\mathcal{A}}
\newcommand{\cB}{\mathcal{B}}
\newcommand{\cD}{\mathcal{D}}
\newcommand{\cF}{\mathcal{F}}
\newcommand{\cI}{\mathcal{I}}
\newcommand{\cM}{\mathcal{M}}
\newcommand{\cP}{\mathcal{P}}
\newcommand{\cW}{\mathcal{W}}
\newcommand{\fN}{\mathfrak{N}}
\renewcommand{\d}{{\rm d}}
\newcommand{\cpl}{{\rm Cpl}}
\newcommand{\mart}{{\rm Mart}}
\newcommand{\Lip}{{\rm Lip}}
\newcommand{\Max}{{\rm Max}}
\newcommand{\sym}{{\rm sym}}
\newcommand{\spn}{{\rm spn\,}}
\newcommand{\Xspn}{X\text{\rm -\,spn\,}}
\newtheorem{theorem}{Theorem}[section]
\newtheorem{lemma}[theorem]{Lemma}
\newtheorem{proposition}[theorem]{Proposition}
\newtheorem{corollary}[theorem]{Corollary}
\theoremstyle{definition}
\newtheorem{definition}[theorem]{Definition}
\newtheorem{remark}[theorem]{Remark}
\numberwithin{equation}{section}
\newcommand{\X}{{H}} 
\newcommand{\Xprime}{{H'}} 
\begin{document}

	\title[Parametric estimation of risk functionals based on Wasserstein distance]{A parametric approach to the estimation of convex risk functionals based on Wasserstein distance}
	
	\author{Max Nendel}
	\address{Center for Mathematical Economics, Bielefeld University}
	\email{max.nendel@uni-bielefeld.de}
	
	\author{Alessandro Sgarabottolo}
	\address{Center for Mathematical Economics, Bielefeld University}
	\email{alessandro.sgarabottolo@uni-bielefeld.de}

	\thanks{The authors thank Daniel Bartl, Jonas Blessing, Stephan Eckstein, Michael Kupper, and Riccardo Mattivi for valuable comments and discussions related to this work.\ Financial support through the Deutsche Forschungsgemeinschaft (DFG, German Research Foundation) -- SFB 1283/2 2021 -- 317210226 is gratefully acknowledged.}
	
	\date{\today}

	\begin{abstract}
		In this paper, we explore a static setting for the assessment of risk in the context of mathematical finance and actuarial science that takes into account model uncertainty in the distribution of a possibly infinite-dimensional risk factor.\ We study convex risk functionals that incorporate a safety margin with respect to nonparametric uncertainty by penalizing perturbations from a given baseline model using Wasserstein distance.\ We investigate to which extent this form of probabilistic imprecision can be approximated by restricting to a parametric family of models.\ The particular form of the parametrization allows to develop numerical methods based on neural networks, which give both the value of the risk functional and the worst-case perturbation of the reference measure.\ Moreover, we consider additional constraints on the perturbations, namely, mean and martingale constraints.\ We show that, in both cases, under suitable conditions on the loss function, it is still possible to estimate the risk functional by passing to a parametric family of perturbed models, which again allows for numerical approximations via neural networks.\smallskip
		
		\noindent \emph{Key words:} Risk measure, model uncertainty, Wasserstein distance, martingale optimal transport, parametric estimation, neural network, measurable direction of steepest ascent\smallskip
		
		\noindent \emph{AMS 2020 Subject Classification:}\ Primary 62G05; 90C31; Secondary 41A60; 68T07; 91G70
		
	\end{abstract}
	
	\maketitle

	\section{Introduction}
In this article, we study a class of convex risk functionals which arises naturally in the context of mathematical finance and actuarial science, when dealing with expected values for a risk factor whose distribution is not perfectly known. Given a random variable $Y$ on a probability space $(\Omega, \cF, \mathbb P)$ taking values in a separable Hilbert space $\X$ endowed with its Borel $\sigma$-algebra $\mathcal B(\X)$, one is usually interested in expressions of the form
\[ \mathbb E_{\mathbb P}\big[f(Y)\big]=\int_\X f(y)\,\mu(\d y),\]
where $f\colon \X\to \R$ is a, say, continuous loss or payoff function and $\mu=\mathbb P\circ Y^{-1}$ is the distribution of $Y$, i.e., a probability measure on $\mathcal B(\X)$.\

Obtaining precise knowledge of the distribution $\mu$ using estimation procedures is therefore of central importance in applications, and lies at the heart of statistics. In practice, however, one often has to deal with statistical imperfections, e.g., a lack of data or information about the dependence structure between single coordinates of $Y$, leading to a so-called \textit{model calibration error} or \textit{model specification error}, respectively.\ Therefore, in many situations, precise knowledge of the underlying distribution $\mu$ of $Y$ may not be at hand, and only a rough estimate or an expert opinion suggesting a particular form of reference distribution $\mu$ may be available.\ This is a special instance of \textit{model uncertainty} appearing, for example, in the context of catastrophic risk in reinsurance or default risk within large credit portfolios in banking.

In the economic literature, model uncertainty is also referred to as \textit{Knightian uncertainty}, and a standard way to deal with it is to look at worst case losses among a set of plausible probability distributions.\ In our study, we follow this approach, and estimate worst case losses over the set $\cP_p$ of all Borel probability measures on $\X$ with finite moment of order $p\in (1,\infty)$, weighting the different measures via a penalization term depending on the \textit{$p$-Wasserstein distance} from a reference model $\mu$ (which is assumed to have finite moment of order $p$ as well). This leads to an expression of the form
\begin{equation}\label{eq.intro.1}
	\cI f:=\sup_{\nu \in \cP_p} \bigg(\int_\X f(z)\,\nu(\d z) - \phi\big(\cW_p(\mu, \nu)\big)\bigg).
\end{equation}
Here, the penalty function $\phi\colon [0,\infty)\to [0,\infty]$, which is assumed to be nondecreasing with $\phi(0)=0$, reflects a degree of confidence in the reference measure, which could, for example, be related to the availability of data for the estimation of $\mu$. The value $\phi(a)=\infty$ for some $a>0$ corresponds to a rejection of every model $\nu$ with Wasserstein distance $\cW_p(\mu,\nu)\geq a$, and the limit case $\phi = \infty \cdot \id_{(0, \infty)}$ resembles perfect confidence in the measure $\mu$.

Functionals of the form \eqref{eq.intro.1} belong to the class of convex risk measures and, under suitable conditions on the penalty function $\phi$, to the class of coherent risk measures, cf.\ \cite{artzner1999} and \cite{FoellmerSchied}. Moreover, they are widely studied in the context of distributionally robust optimization problems, see, for example, \cite{bartl2020computational,gao2016distributionally,esfahani2018data,pflug2007ambiguity,Wozabal2012frameworkambiguity,zhao2018}, where the authors usually consider an additional optimization procedure, leading to an $\inf$-$\sup$-formulation.

A standard approach to tackle the infinite-dimensional optimization related to \eqref{eq.intro.1} is to look for a suitable dual formulation, for example, by transforming the primal problem into a superhedging problem. For example, in \cite{bartl2020computational}, the authors transform a class of robust optimized certainty equivalents (OCEs) into a one-dimensional optimization that leads to an explicit correction term. In general, however, this approach leads to a nested optimization problem, which can be numerically challenging.

We therefore look at this problem from a similar yet different angle, and aim to identify a parametric version of the functional \eqref{eq.intro.1} together with suitable optimizing directions.\
This idea is merely related to the paradigm of looking for extreme points in Wasserstein balls; a topic that has been explored in detail in the case where the reference measure is an empirical distribution (uniform over the samples) or, more generally, a convex combination of Dirac measures. In \cite{Wozabal2012frameworkambiguity}, it is shown that extreme points of Wasserstein balls centered in a measure supported on at most $n$ points are supported on at most $n + 3$ points.\ The paper \cite{Owhadi2015} refines this result, showing that these extreme distributions are in fact supported on $n + 2$ points. Finally, in \cite{esfahani2018data}, the authors show that, under stronger assumptions on the loss function, the infinite-dimensional optimization problem can be solved via a convex shifting of the support points and that the optimizing distribution is supported on the same number of points as the reference measure. In the case, where $\mu$ is a convex combination of Dirac measures, we get a similar result, but in a different fashion, cf.\ Section \ref{sec.numerics.diracs}.

The key idea of our approach is to look for a parametric version of the functional \eqref{eq.intro.1} in terms of a first order approximation as the level of uncertainty related to the penalty function $\phi$ tends to zero. More precisely, we introduce a scaling parameter $h>0$, and substitute $\phi$ with the rescaled version $\phi_h:=h \phi(\cdot / h)$, which allows to control the level of uncertainty in terms of $h$. We then consider the operator $\cI(h)$, given by
\begin{equation}\label{eq.intro.2}
	\cI(h)f:=\sup_{\nu \in \cP_p} \bigg(\int_\X f(z)\,\nu(\d z) - \phi_h\big(\cW_p(\mu, \nu)\big)\bigg)
\end{equation}
for a suitable class of continuous payoff function $f\colon \X\to \R$, and look for a parametric version $\cI_\Theta(h)$ that asymptotically coincides with $\cI(h)$ up to a first order error as $h$ tends to zero. To that end, we consider a \textit{parameter set} $\Theta$ consisting of vector fields $\theta \colon \X\to \X$, which are $p$-integrable with respect to the reference measure $\mu$. For each \textit{parameter} $\theta\in \Theta$, we consider a probability measure $\mu_\theta$, which is a shifted version of $\mu$, i.e.,
\begin{equation} \label{eq.intro.normal.1}
	\int_\X f(z)\,\mu_\theta(\d z)=\int_\X f\big(y+\theta(y)\big)\,\mu(\d y),
\end{equation}
and define the parametric version $\cI_\Theta(h)$ of $\cI(h)$, for $h>0$, by
\[
\cI_\Theta (h)f:=\sup_{\theta\in\Theta }\bigg(\int_\X f(z)\, \mu_\theta(\d z)-\phi_h\big( \| \theta \|_{L_p(\mu; \X)}\big)\bigg).
\]
In Theorem \ref{thm.main.itheta}, we provide conditions on the parameter set $\Theta$ and the function $f$, ensuring that
\[
\lim_{h\downarrow 0}\frac{\cI(h)f-\cI_\Theta (h)f}{h}=0.
\]
In Theorem \ref{cor.main.itheta.single}, we compute a safety margin for asymptotically small $h>0$, and show that an asymptotically optimal parameter $\theta$ can be found by looking at directions of steepest ascent for $f$.

A crucial step in the direction of studying the asymptotic behaviour of optimization problems of the form \eqref{eq.intro.2} was done in \cite{bartl2021sensitivity}, see also \cite{bartlwiesel2022} for its extension to a multi-period setting using adapted Wasserstein distance and \cite{BartlEckstKuppRandWalks2021,nendel2021wasspert} for dynamic versions without an additional optimization.\ In  \cite{bartl2021sensitivity}, the authors study sensitivities for various forms of robust optimization problems over $p$-Wasserstein balls employing an elegant computational approach.\ In the proof of Theorem \ref{cor.main.itheta.single}, we build on these methods, allowing for more general penalty functions $\phi$ and, at the same time, using weaker differentiability assumptions on $f$.\ In particular, we can drop the assumption of  differentiability ($\mu$-a.e.), introducing the concept of a \textit{measurable direction of steepest ascent} as a generalization of the gradient, cf.\ Definition \ref{def.measdirec}. This way, we can overcome differentiability issues related to the function $f$ for reference measures $\mu$, which are not regular. Most functions of interest in financial applications, e.g., call options, have a measurable direction of steepest ascent, so that our results can be applied without any restrictions on the reference measure $\mu$, cf.\ Remark \ref{rem.independ.measure} b) for a more thorough discussion.

Following the lead of \cite{bartl2021sensitivity}, in Section \ref{sec.main.mean.constr}, we study an additional mean constraint in the optimization \eqref{eq.intro.2}. This constraint enters naturally when dealing with risk-neutral pricing, where the mean of the underlying is assumed to be known (e.g., by standard non-arbitrage arguments).\ Thanks to our parametric description of the risk functional, cf.\ Theorem \ref{thm.mean.constraint}, we can show that uncertainty in the return of a financial position can be replicated by means of self-financing portfolios of call options or digital options, cf.\ Section \ref{sec.example.financial}.

In Section \ref{sec.main.martingale}, we impose a so-called martingale constraint on the functional \eqref{eq.intro.2}. That is, we restrict the optimization to a set of probability measures, which are given in terms of a martingale perturbation of the reference measure or, in different words, measures that admit a martingale coupling with $\mu$.\ This setup is closely intertwined with the topics of martingale optimal transport (MOT) and model-free pricing in mathematical finance.\ For example, in \cite{Beiglbck2013ModelindependentBF}, the authors study robust superhedging problems based on MOT, whereas \cite{Beiglbck2016} provides a precise description and fine properties of the optimal transport plan under a martingale constraint.\ We also refer to \cite{OblojWiesel2021} for a class of robust estimators for superhedging prices based on  martingale measures which are, up to a small perturbation in Wasserstein distance, equivalent to an empirical distribution.

In Theorem \ref{thm.martingale.constraint}, we provide an asymptotic parametrization of the constrained version of \eqref{eq.intro.2} through a suitable \emph{randomization} of the reference measure. Given a direction $\theta\in \Theta$, we include an additional coin flip in  \eqref{eq.intro.normal.1}, which is independent of $\mu$ and determines the sign of $\theta$. This leads to a parametric description in terms of measures $\mu_\theta^\mart\in \cP_p$, given by
\begin{equation} \label{eq.intro.mart.1}
	\int_\X f(z)\,\mu_\theta^\text{Mart}(\d z)=\int_\X \int_\R f\big(y+s\theta(y)\big)\,B_{\sym}(\d s)\,\mu(\d y),
\end{equation}
where $B_{\sym}$ is the symmetric Bernoulli distribution on $\R$ with equal probabilities, i.e., $$B_{\sym}\big(\{-1\}\big)=B_{\sym}\big(\{1\}\big)=\frac{1}2.$$ 
In principle, $B_{\sym}$ could be replaced by any distribution on $\R$ with mean zero. However, the symmetric Bernoulli distribution has the identifying property that, apart from having mean zero, $\int_\R |s|^\alpha\, B_\sym(\d s)=1$ for all $\alpha\in (0,\infty)$; a property that is of central importance for the asymptotic parametrization in this framework.

Apart from this, the symmetric Bernoulli distribution is fundamentally connected to the Brownian motion via Donsker's theorem.\ Having in mind that, in a Brownian filtration, all martingales can be represented as stochastic integrals with respect to the Brownian motion, our parametrization via measures of the form \eqref{eq.intro.mart.1} can be seen as a microscopic version of a martingale representation theorem in a completely different and model-free setting.\ Moreover, the measure $\mu_\theta^\mart$ can be represented via the explicit formula
\begin{equation} \label{eq.intro.mart.2}
	\int_\X f(z)\,\mu_\theta^\text{Mart}(\d z)=\int_\X \frac{f\big(y+\theta(y)\big)+f\big(y-\theta(y)\big)}{2}\,\mu(\d y),
\end{equation}
which, after substracting $\int_\X f(y)\, \mu(\d y)$, leads to a finite difference approximation of the second derivative of $f$ in the direction $\theta$ using central differences.\ This links $\mu_\theta^\mart$ also from an analytic perspective to a Brownian motion through its infinitesimal generator.

In view of numerical approximations, our construction of parametric versions of convex risk functionals of the form \eqref{eq.intro.2} leads to a description of asymptotically relevant models for the optimization in terms of Monge transports ($p$-integrable vector fields), which, in turn, suggests a numerical investigation of the risk functional in the spirit of \cite{ecksteinNN2021}, see also \cite{eckstein2020robustrisk}.\ In Section \ref{sec.numerics.abscont}, we develop a numerical scheme based on neural networks. Previous works on this topic provide approximations from above based on duality results, whereas our approach leads to an approximation from below, based on the restriction to a parametric family of models.
As a byproduct, we also approximate the optimizing measure through the Monge transport generating it. Observing that, for small values of uncertainty, the optimal transport plan depends on the reference measure only through a multiplicative factor, we draw a connection to \textit{transfer learning}, and discuss the robustness of our approach with respect to deviations in the reference measure.

In machine learning, {transfer learning} usually consists of taking a neural network trained on a previous dataset and training only its last layer on a new dataset, which is often significantly smaller than the first one, see \cite{Bozinovski2020} for references to seminal works on this topic.\ The idea is that if the two datasets share some common features, the part of the network that is inherited from the first training is able to extract most of these features, while the training of the last layer learns from specific characteristics of the new dataset.\ In our framework, we can explicitly distinguish between the common feature that can be extracted from the first training, namely the optimizing vector field, and the feature that is specific to each reference measure, i.e., the rescaling factor.\ In particular, once the approximation is performed on one measure, we can transfer it to a different measure at the price of a one-dimensional optimization, see Section \ref{sec.example.transferlearning} for further details.

The paper is organized as follows.\ In Section \ref{sec.main}, we introduce the setup and state the main results on parametric versions of risk functions of the form \eqref{eq.intro.2}. Section \ref{sec.numerics} provides numerical methods for the approximation of the parametric risk functional $\cI_\Theta(h)$.\ We distinguish between two situations leading to different numerical schemes.\ One is based on a reduction to a finite-dimensional optimization, cf.\ Section \ref{sec.numerics.diracs}, while the other uses an approximation via neural networks, cf.\ Section \ref{sec.numerics.abscont}. In Section \ref{sec.examples}, we discuss applications, both, of our theoretical and numerical findings by means of examples from finance and insurance.\ Section \ref{sec.proof.main} contains the proofs of the main theorems and, in the Appendix \ref{app.A}, we provide a simple approximation result that helps to reduce the set of parameters $\Theta$.

\section{Setup and main results} \label{sec.main}
In this section, we introduce the class of convex risk functionals that (together with additional constraints) form the center of our study, and we state our main results concerning their parametric estimation.

Throughout, let $p \in (1, \infty)$, $q:=\frac{p}{p-1}\in (1,\infty)$ be the conjugate exponent of $p$, and $(\X,\langle\,\cdot\,,\, \cdot\,\rangle)$ be a separable Hilbert space.\ As usual, we endow $\X$ with its canonical norm $\|\cdot\|:=\sqrt{\langle\,\cdot\,,\, \cdot\,\rangle}$, and identify the topological dual space $\Xprime$ of $\X$ with $\X$ itself. We denote the set of all probability measures $\nu$ on the Borel $\sigma$-algebra $\mathcal B(\X)$ of $\X$ 
with
$$|\nu|_p:=\left(\int_\X\, \|y\|^p\, \nu(\d y)\right)^{1/p}<\infty$$ by $\cP_p=\cP_p(\X)$. 

In the following, we consider a fixed probability measure $\mu\in \cP_p$, which we will refer to as the \textit{reference measure} or \textit{baseline model}, and a \textit{penalty function} $\phi \colon [0, \infty) \to [0, \infty]$, which is assumed to be nondecreasing with $\phi(0)=0$.

For $\nu\in \cP_p$, we denote the \textit{$p$-Wasserstein distance} between the reference measure $\mu$ and $\nu$ by  \begin{equation}\label{def.wasserstein}
	\cW_p(\mu,\nu) = \left(\inf_{\pi \in \cpl(\mu,\nu)} \int_{\X \times \X} \|y-z\|^p\,\pi(\d y,\d z)\right)^{1/p},
\end{equation}
where $\cpl(\mu, \nu)$ is the set of all couplings between $\mu$ and $\nu$, i.e., the set of probability measures on the Borel $\sigma$-algebra $\mathcal B(\X\times \X)$ of the product space $\X \times \X$ with first and second marginal $\mu$ and $\nu$, respectively. We refer to \cite{ambrosio2008gradient} and \cite{villani2008optimal} for a detailed discussion on Wasserstein distances and, more generally, the topic of optimal transport. Here, we only recall that, for all $\nu\in \cP_p$, there exists an optimal coupling $\pi^*\in \cpl(\mu,\nu)$ that attains the infimum in \eqref{def.wasserstein}, i.e.,
\[
\cW_p(\mu,\nu) = \left( \int_{\X \times \X} \|y-z\|^p\,\pi^*(\d y,\d z)\right)^{1/p}.
\]

For $\rho\in [1,\infty)$, we denote the space of all ($\mu$-equivalence classes of) measurable functions $f\colon \X\to \R$ with $$\|f\|_{L_\rho(\mu)}:=\bigg(\int_\X |f(y)|^\rho\, \mu(\d y)\bigg)^{1/\rho}<\infty$$ by $L_\rho(\mu)$. In a similar fashion, $L_\rho(\mu;\X)$ denotes the space of all ($\mu$-equivalence classes of) measurable functions $g\colon \X\to \X$ with $$\|g\|_{L_\rho(\mu;\X)}:=\bigg(\int_\X \|g(y)\|^\rho\, \mu(\d y)\bigg)^{1/\rho}<\infty.$$ 
Recall that, by assumption, $\X$ is a separable Hilbert space, so that the image $g(\X)\subset \X$ of $g$ is automatically separable.

\subsection{The unconstrained case}

Throughout this subsection,
we assume that
\begin{equation} \label{eq: property of phi}
	\liminf_{v \to \infty} \frac{\phi(v)}{v^{p}} > 0.
\end{equation}
As a consequence, the \textit{convex conjugate} $\phi^*\colon  [0,\infty)\to [0,\infty)$, given by
\begin{equation}\label{eq.def.conjugate}
	\phi^*(u):=\sup_{v\geq 0} \Big(uv-\phi(v)\Big)\quad \text{for all }u\in [0,\infty),
\end{equation}
is well-defined, convex, and continuous with $\phi^*(0)=0$.

We denote by $\Lip_{p}$ the space of all functions $f\colon \X\to \R$, for which there exists a constant $L_f\geq 0$ such that
\begin{equation}\label{eq.lipp}
	|f(x_1)-f(x_2)|\leq L_f\big(1+\max\{\|x_1\|,\|x_2\|\}\big)^{p-1}\|x_1-x_2\|\quad \text{for all }x_1,x_2\in \X.
\end{equation}
Let $f\in \Lip_p$. Choosing $x_1=x\in \X$ and $x_2=0$  in \eqref{eq.lipp}, we find that
\begin{equation}\label{eq.growth.lip}
	|f(x)|\leq (|f(0)|+L_f)\big(1+\|x\|\big)^{p}\quad \text{for all }x\in \X.
\end{equation}
Moreover, the \textit{local Lipschitz constant} $|f|_\Lip\colon \X\to \R$ of $f$, given by
\begin{equation}\label{eq.lipschitzconstant}
	|f|_\Lip(x):=\limsup_{h\downarrow 0}\sup_{\|u\|\in\{0,1\}} \frac{f(x + h u) - f(x)}{h} \quad\text{for all }x\in \X,
\end{equation}
is measurable due to the continuity of $f$, and satisfies
\begin{equation}
	0\leq |f|_\Lip(x)= \limsup_{h\downarrow 0}\sup_{\|u\|\in\{0, 1\}} \frac{f(x + h u) - f(x)}{h}\leq L_f\big(1+\|x\|\big)^{p-1}\quad\text{for all }x\in \X.
\end{equation}

For $h>0$, we consider the operator $\cI(h)$ that associates to every function $f\in \Lip_p$ the quantity
\begin{equation}\label{def.I}
	\cI(h)f:=\sup_{\nu \in \cP_p} \left( \int_{\X} f(z)\,\nu(\d z) - \phi_h (\cW_p(\mu, \nu)) \right)\in (-\infty,\infty],
\end{equation}
where $\phi_h(v) := h \phi\left(\frac{v}{h}\right)$ for all $v\in [0,\infty)$. The fact that $f\in \Lip_p$ ensures that, at least for sufficiently small $h>0$,
\begin{align*}
	\cI(h)f<\infty.
\end{align*}

\begin{remark}
	Considering the extreme case, where $\phi=\infty \cdot\id_{(a, \infty)}$ for some $a\geq 0$, the operator $\cI(h)$ simplifies to
	\[
	\cI(h)f:=\sup_{\cW_p(\mu,\nu)\leq ah} \int_{\X} f(z)\,\nu(\d z)\quad \text{for all }f\in \Lip_p,
	\]
	which corresponds to the consideration of worst case scenarios over Wasserstein balls with radius $ah$ for $h>0$. Thus, even in the simplest case, the operator $\cI(h)$ leads to a rather complex optimization problem, which involves the determination of a Wasserstein ball. Moreover, due to the geometric structure (more precisely, the lack of finitely many extreme points) of Wasserstein balls, it is, in general, not trivial to characterize optimizers of $\cI(h)$ and to analyze their dependence on the loss function $f$ and the reference model $\mu$. One aspect that triggers this problematic is the appearance of the Wasserstein distance in the penalty term. On the other hand, also the lack of concrete structure of the set $\cP_p$ makes the problem intractable. In the following, we thus introduce a simplified version of the operator $\cI(h)$, which depends on the reference measure in a parametric way, circumvents the use of the Wasserstein distance, and can, in many cases, be computed in an  efficient manner.
\end{remark}

Throughout, we consider a \textit{parameter set} $\Theta \subseteq L_p(\mu;\X)$. For every \textit{parameter} $\theta\in \Theta$, we define a probability measure $\mu_\theta\in \cP_p$ via
\[
\int_\X f(z)\,\mu_\theta(\d z):=\int_\X f\big(y+\theta(y)\big)\, \mu(\d y)\quad \text{for all } f \in \Lip_p.
\]
Note that, for all $\theta \in \Theta$, $\mu_\theta$ is in fact an element of $\cP_p$ since
\[
|\mu_\theta|_p \le |\mu|_p + \|\theta\|_{L_p(\mu;\X)}.
\]
For $h>0$, we define the \textit{parametric version} $\cI_\Theta$ of $\cI$ by
\[
\cI_\Theta (h)f:=\sup_{\theta\in\Theta }\bigg(\int_\X f(z)\, \mu_\theta(\d z)-\phi_h\big( \| \theta \|_{L_p(\mu; \X)}\big)\bigg)\quad \text{for all }f\in \Lip_p.
\]
By definition, $\cI_\Theta(h)f\leq \cI(h)f$ for all $h> 0$ and $f\in \Lip_p$. In fact, as already observed, $\{\mu_\theta\}_{\theta \in \Theta} \subset \cP_p$, and to bound the penalization term we choose a perfectly correlated coupling and get that $\cW_p(\mu, \mu_\theta) \le \|\theta\|_{L_p(\mu; \X)}$.
In conclusion, for $h>0$, the operator $\cI_\Theta(h)$ restricts the family of measures appearing in the optimization problem \eqref{def.I} and, at the same time, has a more explicit penalization term.

\begin{remark} \label{rem.abscont.param}
	Assume that $\mu$ is nonatomic.\ Then, by a standard construction, there exists a measurable function $f\colon \X\to [0,1]$ that is uniformly distributed under $\mu$. In fact, since $\mu$ is nonatomic, for every set $B\in \cB(\X)$, there exists a set $A\in \cB(\X)$ with $A\subset B$ and $\mu(A)=\frac12 \mu(B)$, so that there exists an i.i.d.\ sequence $(f_n)_{n\in \N}$ of $\{0,1\}$-valued random variables with $$\mu(f_n=0)=\frac12=\mu(f_n=1)\quad \text{for all }n\in \N.$$ Choosing $f=\sum_{n=1}^\infty2^{-n}f_n$, we find that $f\colon \X\to [0,1]$ is uniformly distributed. On the other hand, since $\X$ is a separable Hilbert space, it is Borel isomorphic to $\R$, cf.\ \cite[Theorem 13.1.1]{MR1932358}. Therefore, every measure $\nu\in \cP_p$ can be obtained as a push forward measure $\mu\circ g^{-1}$ for a measurable function $g\colon \X\to \X$. Choosing, $\theta(x):=g(x)-x$ for all $x\in \X$, it follows that $\nu=\mu_\theta$. Since both $\mu$ and $\nu$ are elements of $\cP_p$, we obtain that $\theta\in L_p(\mu;\X)$. Hence, for every $\nu\in \cP_p$, there exists some $\theta\in L_p(\mu;\X)$ with $\nu=\mu_\theta$. On the other hand, if $\mu$ is even assumed to be regular, there exists some $\theta\in L_p(\mu;\X)$ with $\nu=\mu_\theta$ and
	\begin{equation}\label{eq.optimizerWp}
		\cW_p(\mu,\nu)=\|\theta\|_{L_p(\mu;\X)},
	\end{equation}
	see, e.g., \cite[Theorem 6.2.10]{ambrosio2008gradient}. In conclusion, if $\mu$ is regular, then
	\begin{equation}\label{eq.IequalItheta}
		\cI(h)f=\cI_{L_p(\mu;\X)}(h)f\quad \text{for all }f\in \Lip_p \text{ and all } h>0.
	\end{equation}
	However, one should keep in mind the following two observations:
	\begin{enumerate}[1)]
		\item Since the starting point of our study is a somewhat imperfect knowledge of the reference measure $\mu$, it cannot be assumed to be nonatomic. In particular, if $\mu$ is estimated from data as a, say, empirical distribution, it will be of the form $\mu=\sum_{i=1}^n\alpha_i\delta_{x_i}$ with $n\in \N$, $x_1,\ldots, x_n\in \X$, and $\alpha_1,\ldots,\alpha_n\in [0,1]$ with $\sum_{i=1}^n\alpha_i=1$, where $\delta_x$ denotes the Dirac measure with barycenter $x\in \X$. 
		\item \label{rem.abscont.param.instability} Even if $\mu$ is regular, and we have \eqref{eq.IequalItheta} at hand, it neither reveals a particular shape nor qualitative properties of the optimizer.\ Moreover, a priori, the optimizer may heavily depend on the reference measure $\mu$, which is assumed not to be known precisely.\ Therefore, the computation might exhibit huge instabilities for small deviations from the reference measure.
	\end{enumerate}
	We therefore aim towards a different direction, and try to come up with an asymptotic version of the equality \eqref{eq.IequalItheta} for infinitesimally small $h>0$ and \textit{all} reference measures $\mu$.\ On the other hand, we are interested in understanding the structure of asymptotic optimizers of $\cI(h)$ and their dependence on the reference measure $\mu$ for infinitesimally small levels of uncertainty $h>0$.
\end{remark}

Every element of $\Lip_p$ is locally Lipschitz continuous, and thus Gateaux differentiable outside a Gaussian null set, see \cite[Theorem 5.11.1]{bogachev1998gaussian}. However, since we do not want to restrict our attention to regular measures, i.e., the ones that assign measure zero to every Gaussian null set, we make use of a similar yet slightly different notion of differentiability based on the idea that the gradient is the direction of steepest ascent.\ The following definition formalises this idea.

\begin{definition}\label{def.measdirec}
	Let $f\in \Lip_p$.\ We say that a vector field $v\colon \X\to \X$ is a \textit{measurable direction of steepest ascent} for $f$, if it is measurable, $\|v(x)\|\in \{0,1\}$, and 
	\[
	|f|_\Lip(x)= \lim_{h\downarrow 0} \frac{f\big(x+hv(x)\big)-f(x)}{h}\quad\text{for all }x\in \X.
	\]
\end{definition}

\begin{remark}\label{rem.meas.direc}
	Let $f\colon \X\to \R$ be Fr\'echet differentiable with continuous gradient $\nabla f\colon \X\to \X$ satisfying
	\[
	L_f:=\sup_{x\in \X}\frac{\|\nabla f(x)\|}{(1+\|x\|)^{p-1}}<\infty.
	\]
	Then,
	\begin{align*}
		\big|f(x_1)-f(x_2)|&\leq \int_0^1\big|\big\langle\nabla  f\big(tx_1+(1-t)x_2\big),x_1-x_2\big\rangle\big|\, \d t\\
		&\leq L_f\big(1+\max\{\|x_1\|,\|x_2\|\}\big)^{p-1}\|x_1-x_2\|\quad\text{for all }x_1,x_2\in \X,
	\end{align*}
	so that $f\in \Lip_p$.\ In this case, a measurable direction of steepest ascent $v\colon\X\to \X$ is given by 
	\begin{equation} \label{eq.directioncont}
		v(x):= \begin{cases}
			\frac{\nabla f(x)}{\|\nabla f(x)\|}, & \text{if } \nabla f(x) \ne 0, \\
			v(x) = 0, & \text{otherwise,}
		\end{cases}\quad\text{for all }x\in \X,
	\end{equation}
	and $|f|_\Lip (x) = \|\nabla f(x)\|$ for all $x \in \X$.
\end{remark} 
The following theorem provides a first order approximation of the functional $\cI(h)$ in terms of its parametric version $\cI_\Theta(h)$, for every reference measure $\mu$ and infinitesimally small $h>0$, under weak regularity assumptions on the function $f$.\ Additionally, it shows that an asymptotic optimizer can be computed explicitly and (almost) independently of the reference measure $\mu$. The proof is delayed to Section \ref{sec.proof.main}.

\begin{theorem}\label{cor.main.itheta.single}
	Let $f\in \Lip_p$ with a measurable direction of steepest ascent $v\colon \X\to \X$. Define
	\begin{equation}\label{eq.opt.theta}
		\theta(x) = v(x) \big(| f |_\Lip(x)\big)^{q-1} \quad \text{for all } x \in \X.
	\end{equation}
	Then,
	\[
	\lim_{h \downarrow 0} \frac{\cI(h)f - \cI_\Theta(h)f}{h} = 0,
	\]
	whenever $\{\lambda\theta\, |\,\lambda\geq 0\} \subseteq \Theta$. Moreover,
	\begin{equation}\label{eq.main.itheta}
		\lim_{h \downarrow 0} \frac{\cI(h)f- \mu f}{h} = \phi^* \Big( \big\||f|_\Lip\|_{L_q(\mu)} \Big),
	\end{equation}
	where $\mu f:=\int_\X f(y)\, \mu(\d y)$.
\end{theorem}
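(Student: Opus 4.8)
The plan is to squeeze the first-order behaviour of both $\cI(h)$ and $\cI_\Theta(h)$ between two matching bounds expressed through $\phi^*$, and then to read off the parametric approximation from the trivial inequality $\cI_\Theta(h)f\le\cI(h)f$. Write $A:=\||f|_\Lip\|_{L_q(\mu)}$. I would first record the elementary facts that $\theta$ is measurable (as a product of the measurable maps $v$ and $(|f|_\Lip)^{q-1}$) and that $\|\theta(x)\|=(|f|_\Lip(x))^{q-1}$ for every $x$ --- using $\|v(x)\|\in\{0,1\}$ together with $|f|_\Lip(x)=0$ whenever $v(x)=0$ --- so that $\theta\in L_p(\mu;\X)$ with $\|\theta\|_{L_p(\mu;\X)}=A^{q-1}$, since $(q-1)p=q$. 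The single pointwise input extracted from Definition \ref{def.measdirec} is that, for all $y\in\X$ and $t\ge0$,
\[
\lim_{h\downarrow0}\frac{f\big(y+ht\,\theta(y)\big)-f(y)}{h}=t\,\big(|f|_\Lip(y)\big)^{q}.
\]

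For the lower bound on $\cI_\Theta(h)$, fix $t\ge0$, test the supremum defining $\cI_\Theta(h)f$ with the admissible parameter $ht\,\theta\in\Theta$, divide by $h$, subtract $\mu f$, and let $h\downarrow0$. Dominated convergence applies because the difference quotients are dominated by a fixed $L_1(\mu)$ function, via the growth estimate \eqref{eq.lipp} and H\"older's inequality with exponents $q$ and $p$ (using $\mu\in\cP_p$ and $\theta\in L_p(\mu;\X)$), so the integral term converges to $t\int_\X(|f|_\Lip)^q\,\d\mu=tA^q$. Hence $\liminf_{h\downarrow0}h^{-1}(\cI_\Theta(h)f-\mu f)\ge tA^q-\phi(tA^{q-1})$; the substitution $u=tA^{q-1}$ turns the right-hand side into $uA-\phi(u)$, and taking the supremum over $t\ge0$ gives the bound $\phi^*(A)$.

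For the upper bound on $\cI(h)$, for small $h$ (where $\cI(h)f<\infty$) I would pick a near-optimal $\nu_h\in\cP_p$, say with $\int f\,\d\nu_h-\phi_h(\cW_p(\mu,\nu_h))\ge\cI(h)f-h^2$, together with an optimal coupling $\pi_h$ of $\mu$ and $\nu_h$, and set $r_h:=\cW_p(\mu,\nu_h)$. Testing $\cI(h)f$ with $\mu$ gives $\int f\,\d\nu_h-\mu f\ge\phi_h(r_h)-h^2$; combining this with \eqref{eq.lipp} and H\"older one gets $\phi(r_h/h)\le C\,r_h/h+h$ for a constant $C$, which together with \eqref{eq: property of phi} forces $r_h\le Kh$ for some $K$. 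The heart of the argument is the estimate that, for every $\eps>0$ and $y\in\X$, $f(z)-f(y)\le(|f|_\Lip(y)+\eps)\|z-y\|$ whenever $\|z-y\|<h_0(y,\eps)$ for a suitable threshold $h_0(y,\eps)>0$; this is exactly what the $\limsup$ over unit directions in the definition of $|f|_\Lip$ provides. Splitting $\int(f(z)-f(y))\,\d\pi_h$ at $\|z-y\|=h_0(y,\eps)$, bounding the near-diagonal part by $\||f|_\Lip+\eps\|_{L_q(\mu)}\,r_h$ (H\"older, using that the first marginal of $\pi_h$ is $\mu$), and showing the remainder is $o(1)\cdot r_h$ --- because $\pi_h(\|z-y\|\ge h_0(\cdot,\eps))\to0$ and $\{\|y\|^p\}$ stays uniformly integrable along $\pi_h$ --- yields $\int f\,\d\nu_h-\mu f\le(\||f|_\Lip+\eps\|_{L_q(\mu)}+o(1))\,r_h$. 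Plugging this into the near-optimality inequality and using $cs-\phi(s)\le\phi^*(c)$ gives $h^{-1}(\cI(h)f-\mu f)\le\phi^*(\||f|_\Lip+\eps\|_{L_q(\mu)}+o(1))+h$; letting $h\downarrow0$ (continuity of $\phi^*$) and then $\eps\downarrow0$ (dominated convergence for $\||f|_\Lip+\eps\|_{L_q(\mu)}\to A$) yields $\limsup_{h\downarrow0}h^{-1}(\cI(h)f-\mu f)\le\phi^*(A)$.

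Since $\cI_\Theta(h)f\le\cI(h)f$ and $\phi^*(A)<\infty$ by \eqref{eq: property of phi}, the two bounds squeeze both normalized differences to the common limit $\phi^*(A)$, which is \eqref{eq.main.itheta} and, after subtraction, $\lim_{h\downarrow0}h^{-1}(\cI(h)f-\cI_\Theta(h)f)=0$. I expect the upper bound to be the main obstacle: converting the direction-wise, $y$-dependent control of $|f|_\Lip$ into a uniform estimate against the coupling $\pi_h$ --- keeping track of the threshold $h_0(y,\eps)$ and coping with the absence of a Monge map, since $\mu$ need not be regular --- is the delicate step, whereas the lower bound and the final comparison are routine.
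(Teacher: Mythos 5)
Your proposal is correct and follows essentially the same route as the paper: restrict to a Wasserstein ball of radius $O(h)$ via the coercivity of $\phi$, split the coupling integral near/far from the diagonal, bound the near part by H\"older against (a relaxation of) $|f|_\Lip$ and show the far part is negligible via uniform integrability, then pass to $\phi^*$; the lower bound tests the ray $\{th\theta\}$ exactly as in the paper (which uses Fatou where you use dominated convergence). The only organizational difference is that the paper avoids your $y$-dependent threshold $h_0(y,\eps)$ by introducing the uniform-in-$\gamma$ relaxation $|f|_{\Lip,\delta}(y):=\sup_{\gamma\in(0,\delta)}\sup_{\|u\|\in\{0,1\}}\gamma^{-1}(f(y+\gamma u)-f(y))$ with a single cutoff $\delta$, which makes the delicate far-from-diagonal estimate cleaner.
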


\begin{remark} \label{rem.independ.measure}
	\begin{enumerate}[a)]
		\item The previous result shows us that, when considering a fixed function in $\Lip_p$ with a measurable direction of steepest ascent, the optimization problem \eqref{def.I} asymptotically reduces to a one-dimensional optimization scheme as $h\downarrow 0$. Moreover, this scheme is independent of the reference measure $\mu$. We exploit this property in Section \ref{sec.example.transferlearning} to develop a \emph{transfer learning} technique, where we take the optimizing vector field of a given problem and search for optimizers with respect to different reference measures by simply rescaling the original vector field. This is quite remarkable in view of the concerns expressed in point \ref{rem.abscont.param.instability}) of Remark \ref{rem.abscont.param}, since it indicates sort of a stability of the optimization problem $\cI_\Theta(h)$ with respect to perturbations of the reference measure, at least for small $h>0$.\ We further mention that the vector field $\theta$, given by \eqref{eq.opt.theta}, satisfies
		\[
		\sup_{x\in \X}\frac{\|\theta(x)\|}{1+\|x\|}<\infty.
		\]
		\item The conditions of the previous theorem might, at first sight, seem a bit artificial, and, in view of Remark \ref{rem.meas.direc}, a more natural formulation seems to be in terms of continuously differentiable functions $f\colon \X\to \R$. However, for instance the payoff function of a European call option with \textit{strike price} $K\in \R$ ($f(x) =
		(x - K)^+$) is not differentiable, since its derivative has a discontinuity at the strike price $K$. In \cite{bartl2021sensitivity}, when computing the sensitivity of the robust optimization problem, the authors remark that the assumption of differentiability can be quite naturally relaxed when the measure $\mu$ is absolutely continuous with respect to the Lebesgue measure, assuming that the function $f$ admits a weak derivative in the Sobolev sense, which is continuous $\mu$-a.e.\ They also show how one can avoid the absolute continuity of $\mu$ with respect to the Lebesgue measure using a regularization procedure via convolution with normal distributions. Nevertheless, this solution does not solve the issue arising from $f$ having a kink on a set which has strictly positive measure under $\mu$. This is the case for a call option with strike $K$ and a reference measure $\mu$ with $\mu(\{K\})>0$.
		In our framework, we can deal with this particular case observing that, if $f$ is the profile of the call option with strike $K$,
		\[
		|f|_\Lip(x) = \begin{cases}
			0, & \text{for } x < K, \\
			1, & \text{for } x \ge K, \end{cases}
		\]
		and a measurable direction of steepest ascent is given by $v(x) = \id_{[K,\infty)}(x)$ for all $x\in \R$.
	\end{enumerate}
\end{remark}

If $\Theta$ is dense in $L_p(\mu;\X)$ with $0\in \Theta$, then
\begin{equation}\label{eq.equalitylp}
	\cI_\Theta(h)f=\cI_{L_p(\mu;\X)}(h)f\quad\text{for all }f\in \Lip_p.
\end{equation}
A combination of Theorem \ref{cor.main.itheta.single} and this insight, cf.\ Lemma \ref{lem.approx}, is the essence of the proof of the following theorem, which is our second main result of this subsection.
\begin{theorem} \label{thm.main.itheta}
	Assume that $\Theta$ is dense in $L_p(\mu;\X)$ with $0\in \Theta$. Then,
	\[
	\lim_{h \downarrow 0} \frac{\cI(h)f - \cI_\Theta(h)f}{h} = 0
	\]
	for every $f \in \Lip_p$ that has a measurable direction of steepest ascent.
\end{theorem}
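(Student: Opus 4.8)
The plan is to reduce the statement to Theorem~\ref{cor.main.itheta.single} applied with the full parameter set $L_p(\mu;\X)$, using the density hypothesis only through the identity \eqref{eq.equalitylp}. Concretely, the two ingredients are: (i) the equality $\cI_\Theta(h)f=\cI_{L_p(\mu;\X)}(h)f$ for all $f\in\Lip_p$ and $h>0$ (this is \eqref{eq.equalitylp}, i.e.\ Lemma~\ref{lem.approx}), which holds whenever $\Theta$ is dense in $L_p(\mu;\X)$ and $0\in\Theta$; and (ii) Theorem~\ref{cor.main.itheta.single} with parameter set $L_p(\mu;\X)$, which is legitimate because the canonical optimizing vector field \eqref{eq.opt.theta} attached to $f$ lies in $L_p(\mu;\X)$. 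Chaining the two gives the claim; I expect the work in (i) to be the only delicate part.

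For step (i): the inequality ``$\le$'' is immediate from $\Theta\subseteq L_p(\mu;\X)$. For ``$\ge$'', fix $h>0$ and $\theta\in L_p(\mu;\X)$; the case $\phi_h\big(\|\theta\|_{L_p(\mu;\X)}\big)=\infty$ yields value $-\infty$ and is harmless since $0\in\Theta$ keeps both suprema $\geq\mu f>-\infty$, so assume $\phi_h\big(\|\theta\|_{L_p(\mu;\X)}\big)<\infty$ and pick $\theta_n\in\Theta$ with $\theta_n\to\theta$ in $L_p(\mu;\X)$. Writing $g(\eta):=\int_\X f\big(y+\eta(y)\big)\,\mu(\d y)$ (finite on $L_p(\mu;\X)$ by \eqref{eq.growth.lip} and $\mu\in\cP_p$), the Lipschitz bound \eqref{eq.lipp} and Hölder's inequality with exponents $q$ and $p$ (using $(p-1)q=p$) give
\[
|g(\theta_n)-g(\theta)|\le L_f\Big(\textstyle\int_\X\big(1+\max\{\|y+\theta_n(y)\|,\|y+\theta(y)\|\}\big)^p\,\mu(\d y)\Big)^{1/q}\|\theta_n-\theta\|_{L_p(\mu;\X)},
\]
and the first factor stays bounded because $\|\theta_n\|_{L_p(\mu;\X)}$ converges, so $g(\theta_n)\to g(\theta)$. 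For the penalty one only has $\|\theta_n\|_{L_p(\mu;\X)}\to\|\theta\|_{L_p(\mu;\X)}$ with $\phi_h$ merely nondecreasing, so a jump of $\phi_h$ just to the right of $\|\theta\|_{L_p(\mu;\X)}$ could destroy convergence; this is circumvented by first passing to $(1-\delta)\theta$ with $\delta>0$ chosen so that $(1-\delta)\|\theta\|_{L_p(\mu;\X)}/h$ is a continuity point of $\phi$ (possible, as a monotone function has at most countably many discontinuities), approximating $(1-\delta)\theta$ within $\Theta$, and letting $\delta\downarrow 0$ by a diagonal argument. This yields \eqref{eq.equalitylp}, and this continuity issue is, as anticipated, the main obstacle.

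For step (ii): let $f\in\Lip_p$ have a measurable direction of steepest ascent $v$, and set $\theta^*(x):=v(x)\big(|f|_\Lip(x)\big)^{q-1}$ as in \eqref{eq.opt.theta}. Since $\|v(x)\|\le 1$ and $|f|_\Lip(x)\le L_f(1+\|x\|)^{p-1}$, and since $(q-1)p=q$ and $(p-1)q=p$, we get $\|\theta^*(x)\|^p\le\big(|f|_\Lip(x)\big)^{q}\le L_f^q(1+\|x\|)^p$, which is $\mu$-integrable because $\mu\in\cP_p$; hence $\theta^*\in L_p(\mu;\X)$ and in particular $\{\lambda\theta^*\,|\,\lambda\ge 0\}\subseteq L_p(\mu;\X)$. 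Applying Theorem~\ref{cor.main.itheta.single} with $\Theta=L_p(\mu;\X)$ gives
\[
\lim_{h\downarrow 0}\frac{\cI(h)f-\cI_{L_p(\mu;\X)}(h)f}{h}=0,
\]
and substituting $\cI_{L_p(\mu;\X)}(h)f=\cI_\Theta(h)f$ from \eqref{eq.equalitylp} completes the proof.
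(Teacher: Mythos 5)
Your proposal is correct and follows essentially the same route as the paper, which proves this theorem precisely by combining Lemma~\ref{lem.approx} (your step (i)) with Theorem~\ref{cor.main.itheta.single} applied to $\Theta=L_p(\mu;\X)$ (your step (ii)). The only cosmetic difference is in handling the possible jump of the nondecreasing penalty: you rescale to a continuity point of $\phi$ and pass to the limit $\delta\downarrow 0$, whereas the paper rescales by $\lambda\in(0,1)$ and then approximates within $\Theta$ by elements of norm at most $\|\theta_0\|_{L_p(\mu;\X)}$ so that monotonicity of $\phi$ alone suffices; both resolve the issue.
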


In the Appendix \ref{app.A}, we provide a simple sufficient condition for the density of $\Theta$ in $L_p(\mu;\X)$, which allows to perform numerical computations of the operator $\cI(h)$ in Section \ref{sec.numerics} via sufficiently small sets $\Theta$ and to make use of a neural network approximation of the operator $\cI(h)$, for $h>0$. In Section \ref{sec.examples}, we also discuss approximations, which are based on the use of portfolios of European call options or digital options.

\subsection{Mean constraint} \label{sec.main.mean.constr}

In this section, we consider a constrained version of the operator $\cI(h)$, for $h>0$, where the supremum is taken over all measures $\nu \in \cP_p$ with $$\int_\X z\,\nu(\d z) = \int_\X y\,\mu(\d y),$$ i.e., over all measures that have the same mean as the reference measure $\mu$.\ This kind of constraint arises naturally in several contexts, for example, when pricing a financial instrument under the assumption that the expected value of the underlying is known.

Again, we assume that the penalty function satisfies \eqref{eq: property of phi}. In the following, we restrict the operator $\cI$ to measures, which satisfy the mean constraint, and consider, for $h>0$ and $f\in \Lip_p$,
\[
\cI^{\rm Mean}(h) f := \sup_{\nu \in \cP_p(\mu)} \bigg( \int_\X f(z)\,\nu(\d z) - \phi_h \big({\cW_p(\mu, \nu)}\big) \bigg),
\]
where $\cP_p(\mu)$ denotes the set of all $\nu\in \cP_p$ with  $\int_\X z\,\nu(\d z) = \int_\X y\,\mu(\d y)$.

We call $C^1_p$ the space of all continuously (Fr\'echet) differentiable functions $f \colon \X \to \R$, for which there exists a constant $C\geq 0$ such that
\[
\lVert \nabla f(x) \rVert \le C(1 + \|x\|)^{p-1}\quad \text{for all }x\in \X.
\]

It is immediate to see that, for $\theta \in L^p(\mu;\X)$, the measure $\mu_\theta$ is an element of $\cP_p(\mu)$ if and only if $\int_\X\theta(y)\,\mu(\d y)=0$.

\begin{theorem} \label{thm.mean.constraint}
	Let $p = 2$ and
	$f \in C^1_p$. Then,
	\[
	\lim_{h \downarrow 0} \frac{\cI^{\rm Mean}(h)f - \cI_\Theta(h)f}{h} = 0,
	\]
	where $\Theta:=\{b\theta\, |\, b\geq 0\}$ with $\theta(x):=\nabla f(x)-\int_\X \nabla f(y)\, \mu(\d y)$ for all $x\in \X$. Moreover,
	\begin{equation} \label{eq.derivative.mean}
		\lim_{h \downarrow 0} \frac{\cI^{\rm Mean}(h) f - \mu f}{h} = \phi^* \left( \left( \int_\X \left\| \nabla f(y) - \int_\X \nabla f(y)\,\mu(\d y) \right\|^2 \, \mu(\d y) \right)^{1/2} \right).
	\end{equation}
\end{theorem}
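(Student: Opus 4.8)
The plan is to prove the sharper identity \eqref{eq.derivative.mean} first and to deduce the asymptotic equality of $\cI^{\rm Mean}(h)f$ and $\cI_\Theta(h)f$ from it. Write $m:=\int_\X\nabla f(y)\,\mu(\d y)\in\X$, $\theta:=\nabla f-m$, and $\Lambda:=\|\theta\|_{L_2(\mu;\X)}$, and note that $\int_\X\theta\,\d\mu=0$, so that every $\mu_{b\theta}$ with $b\ge0$ lies in $\cP_2(\mu)$ and satisfies $\cW_2(\mu,\mu_{b\theta})\le\|b\theta\|_{L_2(\mu;\X)}=b\Lambda$; since $\phi_h$ is nondecreasing this already gives $\cI_\Theta(h)f\le\cI^{\rm Mean}(h)f$ for all $h>0$. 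For the \emph{lower bound} one argues exactly as in the treatment of $\cI_\Theta$ in the proof of Theorem~\ref{cor.main.itheta.single}: Fr\'echet differentiability of $f$ together with $\|\nabla f(x)\|\le C(1+\|x\|)$ and $\mu\in\cP_2$ give, by dominated convergence, $\int_\X f(y+b\theta(y))\,\mu(\d y)=\mu f+b\int_\X\langle\nabla f(y),\theta(y)\rangle\,\mu(\d y)+o(b)$ as $b\downarrow0$, and since $\int_\X\theta\,\d\mu=0$ one has $\int_\X\langle\nabla f,\theta\rangle\,\d\mu=\int_\X\|\theta\|^2\,\d\mu=\Lambda^2$. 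Inserting $b=ht/\Lambda$ (the case $\Lambda=0$ being trivial) and using that finiteness of $\phi^*$, guaranteed by \eqref{eq: property of phi}, confines the supremum over $t\ge0$ to a bounded set, one obtains $\liminf_{h\downarrow0}\frac{1}{h}(\cI_\Theta(h)f-\mu f)\ge\sup_{t\ge0}(t\Lambda-\phi(t))=\phi^*(\Lambda)$, hence also $\liminf_{h\downarrow0}\frac{1}{h}(\cI^{\rm Mean}(h)f-\mu f)\ge\phi^*(\Lambda)$.

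For the matching \emph{upper bound}, fix $\nu\in\cP_2(\mu)$, an optimal coupling $\pi\in\cpl(\mu,\nu)$ for $w:=\cW_2(\mu,\nu)$, and use the fundamental theorem of calculus, which is available since $f\in C^1_p$: $f(z)-f(y)=\langle\nabla f(y),z-y\rangle+r(y,z)$ with $r(y,z):=\int_0^1\langle\nabla f(y+t(z-y))-\nabla f(y),z-y\rangle\,\d t$. The mean constraint forces $\int_{\X\times\X}(z-y)\,\pi(\d y,\d z)=0$, so $\int\langle\nabla f(y),z-y\rangle\,\d\pi=\int\langle\theta(y),z-y\rangle\,\d\pi\le\Lambda w$ by the Cauchy--Schwarz inequality (using that the first marginal of $\pi$ is $\mu$ and that $\pi$ is optimal); this is the step where $p=2$ enters essentially, as the constant minimizing $\ell\mapsto\|\nabla f-\ell\|_{L_2(\mu;\X)}$ is precisely the mean $m$. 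Setting $R(\pi):=\int r\,\d\pi$ it follows that
\[
\int_\X f\,\d\nu-\mu f-\phi_h(w)\ \le\ \Lambda w-\phi_h(w)+R(\pi)\ \le\ h\,\phi^*(\Lambda)+R(\pi),
\]
using $\Lambda w-\phi_h(w)=h(\Lambda(w/h)-\phi(w/h))\le h\phi^*(\Lambda)$.

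The crux is to show that $R(\pi)$ is negligible for the measures relevant to the supremum; note that $R(\pi)$ is \emph{not} uniformly $o(w)$ over all of $\cP_2(\mu)$, and I expect this uniformity issue to be the main obstacle. One bounds $|R(\pi)|\le\int\sup_{t\in[0,1]}\|\nabla f(y+t(z-y))-\nabla f(y)\|\,\|z-y\|\,\d\pi$ and splits the integration domain at $\{\|z-y\|\le\delta\}$: on this set the integrand is controlled by $g_\delta(y):=\sup_{\|u\|\le\delta}\|\nabla f(y+u)-\nabla f(y)\|$, with $\|g_\delta\|_{L_2(\mu)}\to0$ as $\delta\downarrow0$ by dominated convergence (domination by a multiple of $1+\|\cdot\|\in L_2(\mu)$), while on the complement one uses the linear growth of $\nabla f$ together with Cauchy--Schwarz. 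A first crude application of these estimates gives $R(\pi)\le Cw$ whenever $w\le1$, and combined with the coercivity of $\phi_h$ coming from \eqref{eq: property of phi} this forces any sequence $(h_n,\nu_n)$ realizing $\limsup_{h\downarrow0}\frac{1}{h}(\cI^{\rm Mean}(h)f-\mu f)$ to satisfy $\cW_2(\mu,\nu_n)=O(h_n)$, hence $\nu_n\to\mu$ in $\cW_2$. For such a sequence $\|\cdot\|^2$ is uniformly integrable along $\nu_n$ (a standard consequence of $\cW_2$-convergence, cf.\ \cite{villani2008optimal,ambrosio2008gradient}), so for each fixed $\delta$ the contribution of $\{\|z-y\|>\delta\}$ tends to $0$ faster than $h_n$; choosing first $\delta$ small and then $n$ large yields $R(\pi_n)=o(h_n)$, and therefore $\limsup_{h\downarrow0}\frac{1}{h}(\cI^{\rm Mean}(h)f-\mu f)\le\phi^*(\Lambda)$, which together with the lower bound proves \eqref{eq.derivative.mean}.

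Finally, $\cI_\Theta(h)f\le\cI^{\rm Mean}(h)f$ gives $\limsup_{h\downarrow0}\frac{1}{h}(\cI_\Theta(h)f-\mu f)\le\phi^*(\Lambda)$, which with the lower bound yields $\lim_{h\downarrow0}\frac{1}{h}(\cI_\Theta(h)f-\mu f)=\phi^*(\Lambda)$ as well; subtracting this from \eqref{eq.derivative.mean} gives $\lim_{h\downarrow0}\frac{1}{h}(\cI^{\rm Mean}(h)f-\cI_\Theta(h)f)=0$, the first assertion. The overall scheme follows that of Theorem~\ref{cor.main.itheta.single} and \cite{bartl2021sensitivity}, the mean constraint intervening only through the cancellation $\int(z-y)\,\d\pi=0$, which replaces $\nabla f$ by the centred vector field $\theta=\nabla f-m$; the restriction $p=2$ is used precisely to make the $L_2(\mu;\X)$-barycenter of $\nabla f$ coincide with its mean, so that this centred field is the asymptotically optimal direction.
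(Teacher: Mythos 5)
Your proof is correct, but it takes a genuinely different route from the paper's. The paper disposes of the theorem in a few lines by a reduction to the unconstrained case: it introduces the tilted function $\tilde f(x)=f(x)-\langle m,\,x-\int_\X y\,\mu(\d y)\rangle$ with $m=\int_\X\nabla f\,\d\mu$, observes that $\nu\tilde f=\nu f$ for every $\nu\in\cP_p(\mu)$, and sandwiches $\cI_\Theta(h)\tilde f\le\cI^{\rm Mean}(h)\tilde f=\cI^{\rm Mean}(h)f\le\cI(h)\tilde f$; since $\nabla\tilde f=\theta$ is exactly the asymptotically optimal direction of Theorem~\ref{cor.main.itheta.single} for $\tilde f$ when $p=q=2$, the outer terms have the common first-order limit $\phi^*(\|\nabla\tilde f\|_{L_2(\mu;\X)})=\phi^*(\Lambda)$ and both claims follow at once. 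You instead rerun the entire coupling/Taylor analysis of Theorem~\ref{cor.main.itheta.single} from scratch, letting the mean constraint act through the cancellation $\int(z-y)\,\d\pi=0$ inside the first-order term and then controlling the remainder $R(\pi)$ by the same splitting at $\{\|z-y\|\le\delta\}$ together with the a priori localization $\cW_2(\mu,\nu)=O(h)$. The two arguments are really the same cancellation in different clothing --- subtracting the affine function $\langle m,x\rangle$ from $f$ is the algebraic version of replacing $\nabla f$ by $\theta$ under the coupling integral --- but the paper's version buys brevity by reusing the unconstrained theorem as a black box, whereas yours is self-contained, makes the mechanism visible, and isolates cleanly where $p=2$ enters (the $L_2$-barycenter of $\nabla f$ being its mean, so that the centred field is simultaneously admissible for the constraint and optimal for the Cauchy--Schwarz bound). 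All the individual steps you outline (dominated convergence for the lower bound, measurability and $L_2$-convergence of $g_\delta$, the crude bound $R(\pi)\le Cw$ forcing $w=O(h)$ as in Lemma~\ref{lem.restriction}) are sound.
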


Equation \eqref{eq.derivative.mean} appears in a different form in \cite{Bartl2021sensitivityAppendix}. There, the authors consider general linear constraints, whereas we focus on the mean constraint and provide an asymptotic formula for more general penalty functions. In that sense, Theorem \ref{thm.mean.constraint} is a partial extension of \cite[Theorem 7.7]{Bartl2021sensitivityAppendix}.

\subsection{Martingale constraint}\label{sec.main.martingale}

We now study an extension of the first order expansion given by Equation \eqref{eq.main.itheta}. In particular, we show how the inclusion of a martingale constraint improves the rate of convergence, leading to a \textit{second order} parametrization of the operator $\cI(h)$, for infinitesimally small $h>0$. This parametrization will no longer be given in terms of Monge transports of the baseline measure $\mu$, but by an actual randomization via an additional coin flip.

We start by introducing the notion of a \textit{martingale coupling}, which is of fundamental importance for this section. Let $\nu\in \cP_p$ and $\pi\in \cpl(\mu,\nu)$. Then, $\pi$ is called a \textit{martingale coupling} between $\mu$ and $\nu$ if
\[
\int_{\X\times \X} z\cdot \id_B(y)\, \pi(\d y,\d z)= \int_{\X} y\cdot \id_B(y)\, \mu(\d y)\quad\text{for all }B\in \mathcal B(\X).
\]
The set of all $\nu\in \cP_p$, for which there exists a martingale coupling between $\mu$ and $\nu$, is denoted by $\cM_p(\mu)$.

In the following, we use the notation $L(\X)$ for the space of bounded linear operators $\X\to \X$ and we denote by $\|\cdot\|=\|\cdot\|_{L(\X)}$ the usual operator norm on this space.
Moreover, given a bounded symmetric (or, equivalently, bounded self-adjoint) operator $S \in L(\X)$, we define
\begin{equation} \label{eq.maximal.curvature}
	|S|_\Max := \sup_{\|w\| \le 1} \langle w, Sw \rangle. 
\end{equation}
Note that if $S$ negative semidefinite, $|S|_\Max = 0$. Moreover, $|S|_\Max\leq \|S\|$ for all $S\in L(\X)$, and the supremum in \eqref{eq.maximal.curvature} can be taken over all $w\in \X$ with $\|w\|\in \{0,1\}$.

Throughout this section, we assume that the penalty function satisfies
\begin{equation} \label{eq: property of phi_v2}
	\liminf_{v \to \infty} \frac{\phi(v)}{v^{p/2}} > 0.
\end{equation}
Again, this property implies that the convex conjugate $\phi^*\colon [0,\infty)\to [0,\infty)$, given by \eqref{eq.def.conjugate}, is well-defined, convex, and continuous with $\phi^*(0)=0$.

We call $C^2_p$ the space of all twice continuously (Fr\'echet) differentiable functions $f \colon \X \to \R$, for which there exists a constant $C\geq 0$ such that
\[
\| \nabla^2 f(x) \| \le C(1 + \|x\|)^{p-2}\quad \text{for all }x\in \X.
\]
Note that $C_p^2$ is a subspace of $C_p^1$, in particular, it is contained in $\Lip_p$.

We include the martingale constraint in our study by considering a slightly different version of the operator $\cI(h)$ that we call $\cI^\text{Mart}(h)$ for $h>0$.\ To that end, we define the \textit{$p$-martingale Wasserstein distance} between $\mu$ and $\nu \in \cM_p(\mu)$ as
\[
\cW_p^\text{Mart}(\mu, \nu):=\left( \inf_{\pi \in \mart(\mu, \nu)} \int_{\X \times \X} \|z - y\|^p\,\pi(\d y, \d z) \right)^{1/p},
\]
where $\mart(\mu, \nu)$ is the set of all martingale couplings between $\mu$ and $\nu$.
Then, for every function $f \in \Lip_p$ and every $h > 0$, we define
\begin{equation} \label{eq.def.IMart}
	\cI^\text{Mart}(h)f := \sup_{\nu \in \cM_p(\mu)} \left( \int_\X f(z)\, \nu(dz) - \phi_h\big(\cW_p^\text{Mart}(\mu, \nu)^2\big) \right),
\end{equation}
where, again, $\phi_h(v) = h \phi(\frac{v}{h})$ for all $v\in [0,\infty)$.\ We point out that, apart from the additional martingale constraint in the Wasserstein distance, the penalty function is now applied to the squared (martingale) Wasserstein distance between $\mu$ and $\nu$, whereas, in the previous sections, it was applied to the (usual) Wasserstein distance without an additional power.

Lemma \ref{lem.restriction.2nd} shows that the growth condition \eqref{eq: property of phi_v2} allows us to restrict the supremum in \eqref{eq.def.IMart} to a ball of radius $\sqrt{ah}$, for a suitable constant $a \geq  0$ and $h>0$ sufficiently small, so that, in this case, uncertainty grows proportionally to the square root of $h$.

As in the unconstrained case, we are looking for a suitable parametric version of the operator $\cI^\text{Mart}(h)$ for $h>0$.\
As in the previous subsections, let $\Theta \subset L_p(\mu;\X)$. For $\theta\in \Theta$, we consider the probability measure $\mu_\theta^\text{Mart} \in \cM_p(\mu)$, which is given by
\begin{equation}
	\int_\X f(z)\,\mu_\theta^\text{Mart}(\d z):=\int_\X \frac{f\big(y+\theta(y)\big)+f\big(y-\theta(y)\big)}{2}\,\mu(\d y)\quad \text{for all }f\in \Lip_p.
\end{equation}
Note that, by definition,
\begin{equation}
	\int_\X f(z)\,\mu_\theta^\text{Mart}(\d z)=\int_\X \int_\R f\big(y+s\theta(y)\big)\,B_{\sym}(\d s)\,\mu(\d y)\quad \text{for all }f\in \Lip_p,
\end{equation}
where $B_{\sym}$ is the symmetric Bernoulli distribution on $\R$ with equal probabilities, i.e., $$B_{\sym}\big(\{-1\}\big)=B_{\sym}\big(\{1\}\big)=\frac{1}2.$$

For $h>0$, we define the \textit{parametric version} $\cI_\Theta^\text{Mart}(h)$ of $\cI^\text{Mart}(h)$ by
\begin{equation} \label{eq.I.theta.mart}
	\cI_\Theta^\text{Mart} (h)f:=\sup_{\theta\in\Theta }\bigg(\int_\X f(z)\, \mu_\theta^\mart(\d z)-\phi_h\big(\| \theta \|_{L_p(\mu; \X)}^2\big)\bigg)\quad \text{for all }f\in \Lip_p.
\end{equation}
Choosing a perfectly correlated coupling, we find that
\begin{align*}
	\cW_p^\text{Mart}\big(\mu, \mu_\theta^\mart\big)^2& \le \left( \int_\X \int_\R \|s\theta(y)\|^p\,B_{\sym}(\d s)\,\mu(\d y)\right)^{2/p}\\
	&= \left( \int_\X \|\theta(y)\|^p\,\mu(\d y)\right)^{2/p}=\|\theta\|_{L_p(\mu;\X)}^2.
\end{align*}
In particular, $\cI_\Theta^\text{Mart}(h)f \le \cI^\text{Mart}(h)f$ for all $h> 0$ and $f\in \Lip_p$.

We now state the convergence result for the operator $\cI^\text{Mart}(h)$ as $h\downarrow 0$. The proof is again relegated to Section \ref{sec.proof.main}.

\begin{theorem} \label{thm.martingale.constraint}
	Assume that $\Theta$ is dense in $L_p(\mu;\X)$ with $0\in \Theta$.
	Then,
	\[
	\lim_{h \downarrow 0} \frac{\cI^\mart(h)f - \cI^\mart_\Theta(h)f}{h} = 0 \quad \text{for all } f \in C_p^2.
	\]
	Moreover, for all $f \in C_p^2$,
	\begin{equation} \label{eq.derivative.mart}
		\lim_{h \downarrow 0} \frac{\cI^\mart(h)f - \mu f}{h} = \phi^*\left( \frac{1}{2} \left( \int_\X \big(\big|\nabla^2 f(y)\big|_\Max\big)^\frac{p}{p-2} \, \mu(\d y) \right)^\frac{p-2}{p} \right).
	\end{equation}
\end{theorem}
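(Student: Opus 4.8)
Write $G:=\big(\int_\X (|\nabla^2 f(y)|_\Max)^{p/(p-2)}\,\mu(\d y)\big)^{(p-2)/p}$, which is finite since $f\in C_p^2$ gives $|\nabla^2 f(y)|_\Max\le\|\nabla^2 f(y)\|\le C(1+\|y\|)^{p-2}$ and $\mu\in\cP_p$; I treat the case $p>2$, the borderline case $p=2$ being analogous upon reading $\|\cdot\|_{L_{p/(p-2)}(\mu)}$ as $\|\cdot\|_{L_\infty(\mu)}$. The plan is to establish the two one-sided bounds
\[
\limsup_{h\downarrow 0}\frac{\cI^\mart(h)f-\mu f}{h}\le \phi^*\Big(\tfrac{1}{2}G\Big)
\qquad\text{and}\qquad
\liminf_{h\downarrow 0}\frac{\cI^\mart_\Theta(h)f-\mu f}{h}\ge \phi^*\Big(\tfrac{1}{2}G\Big),
\]
and then to observe that, since $\cI^\mart_\Theta(h)f\le\cI^\mart(h)f$ for all $h>0$, the two bounds pin both $(\cI^\mart(h)f-\mu f)/h$ and $(\cI^\mart_\Theta(h)f-\mu f)/h$ to the common limit $\phi^*(\tfrac12 G)$. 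This is \eqref{eq.derivative.mart}, and subtracting the two limits yields $(\cI^\mart(h)f-\cI^\mart_\Theta(h)f)/h\to0$.

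For the upper bound I would invoke Lemma \ref{lem.restriction.2nd} to restrict the supremum in $\cI^\mart(h)f$ to $\nu\in\cM_p(\mu)$ with $\cW_p^\mart(\mu,\nu)^2\le ah$, fix such a $\nu$ together with an (almost) optimal martingale coupling $\pi(\d y,\d z)=\mu(\d y)\pi_y(\d z)$, and use the martingale property $\int z\,\pi_y(\d z)=y$ for $\mu$-a.e.\ $y$, together with $f\in C_p^2\subset C_p^1\subset\Lip_p$ and $\nu\in\cP_p$, to cancel the first-order term:
\[
\int_\X f(z)\,\nu(\d z)-\mu f=\int_{\X\times\X}\big(f(z)-f(y)-\langle\nabla f(y),z-y\rangle\big)\,\pi(\d y,\d z).
\]
Expanding the integrand by Taylor's theorem with integral remainder and replacing $\nabla^2 f$ along the segment $[y,z]$ by $\nabla^2 f(y)$ bounds the right-hand side by $\tfrac12\int_\X|\nabla^2 f(y)|_\Max\,m(y)\,\mu(\d y)+R_\nu$, where $m(y):=\int\|z-y\|^2\pi_y(\d z)$ and $R_\nu$ collects the modulus of continuity of $\nabla^2 f$. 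Jensen's inequality gives $\|m\|_{L_{p/2}(\mu)}\le\big(\int\|z-y\|^p\pi\big)^{2/p}=\cW_p^\mart(\mu,\nu)^2$, and Hölder with exponents $p/2$ and $p/(p-2)$ then bounds the leading term by $\tfrac12 G\,\cW_p^\mart(\mu,\nu)^2$. Since $\sup_{t\ge0}\big(\tfrac12 Gt-\phi_h(t)\big)=h\,\phi^*(\tfrac12 G)$, the claim reduces to $\sup_\nu R_\nu=o(h)$.

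For the lower bound I would use the density of $\Theta$ (and $0\in\Theta$) to produce, for each $s>0$ that is a continuity point of the nondecreasing function $\phi$ (all but countably many are), an admissible $\theta_h\in\Theta$ with $\|\theta_h\|^2_{L_p(\mu;\X)}=sh+o(h)$ and near-maximal curvature. Writing $g(y):=|\nabla^2 f(y)|_\Max=\sup_{w\in D}\langle w,\nabla^2 f(y)\,w\rangle$ for a countable dense $D\subset\{\|w\|\le1\}$ and choosing, by a measurable selection, a unit vector field $u$ with $\langle u(y),\nabla^2 f(y)u(y)\rangle\ge g(y)-\eps$, the field $(sh)^{1/2}G^{-1/(p-2)}g^{1/(p-2)}u$ has squared $L_p$-norm $sh$ and curvature integral $\ge sh(G-\eps c)$ for a constant $c<\infty$ depending only on $f,\mu$ (finiteness of $c$ uses $g^{2/(p-2)}\lesssim(1+\|\cdot\|)^2\in L_1(\mu)$). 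An $L_p$-approximant $\theta_h\in\Theta$ changes these quantities by $o(h)$ and, by continuity of $\phi$ at $s$, makes the penalty $\phi_h(\|\theta_h\|^2_{L_p(\mu;\X)})=h(\phi(s)+o(1))$. A second-order Taylor expansion gives $\int_\X f(z)\,\mu^\mart_{\theta_h}(\d z)-\mu f=\tfrac12\int\langle\theta_h,\nabla^2 f\,\theta_h\rangle\mu+o(h)$, with the remainder estimated exactly as $R_\nu$ below using $\|\theta_h\|_{L_p(\mu;\X)}\to0$; hence $\liminf_{h\downarrow0}(\cI^\mart_\Theta(h)f-\mu f)/h\ge\tfrac12 s(G-\eps c)-\phi(s)$, and letting $\eps\downarrow0$ and taking the supremum over such $s$ (which, by monotonicity of $\phi$, recovers $\sup_{s\ge0}(\tfrac12 sG-\phi(s))$) yields $\phi^*(\tfrac12 G)$.

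The main obstacle is the uniform remainder estimate $\sup_\nu R_\nu=o(h)$ (together with its analogue in the lower bound). From $|R_\nu|\le\tfrac12\int_{\X\times\X}\|z-y\|^2\sup_{t\in[0,1]}\|\nabla^2 f(y+t(z-y))-\nabla^2 f(y)\|\,\pi$ I would split the domain into the bounded region $\{\|y\|+\|z-y\|\le M\}$, where uniform continuity of $\nabla^2 f$ on balls makes the integrand small once $\|z-y\|$ is small, and where the mass carried by $\{\|z-y\|>\eta\}$ is at most $(ah)^{p/2}/\eta^p$ — which is $o(h)$ precisely because $p>2$ — and the tail $\{\|y\|+\|z-y\|>M\}$, where the crude bound $\|\nabla^2 f(y+t(z-y))-\nabla^2 f(y)\|\lesssim(1+\|y\|+\|z-y\|)^{p-2}$, Hölder with exponents $p/2$ and $p/(p-2)$, the estimate $\int\|z-y\|^p\pi\le(ah)^{p/2}$, and uniform integrability of $(1+\|\cdot\|)^p$ under $\mu$ control the contribution. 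Choosing $M$ large, then $\eta$ small, then $h$ small — in that order — makes every piece $o(h)$ uniformly in $\nu$. It is exactly the fact that $p/2>1$ (equivalently, that here $\phi$ is applied to the squared Wasserstein distance) that produces the gain $(ah)^{p/2}=o(h)$ and thereby upgrades the first-order expansion \eqref{eq.main.itheta} to the present second-order one.
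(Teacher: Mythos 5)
Your overall strategy is the paper's: restrict via Lemma \ref{lem.restriction.2nd}, kill the first-order Taylor term with the martingale property, bound the second-order term by H\"older with exponents $p/2$ and $p/(p-2)$ to get the upper bound $\phi^*(\tfrac12 G)$, and match it from below with a rescaled near-maximal-curvature field $\sqrt{sh}\,\theta$ plus a measurable selection, exactly as in Remark \ref{rem.mart.meas.select} and the displayed proof. Two of your choices differ from the paper in a way worth noting. First, for the density of $\Theta$ you approximate $\sqrt{sh}\,\theta$ by elements of $\Theta$ for each $h$ and track continuity points of $\phi$; the paper instead proves once and for all that $\cI^\mart_\Theta(h)f=\cI^\mart_{L_p(\mu;\X)}(h)f$ (Lemma \ref{lem.approx2}), which lets it work directly with the ideal optimizer and avoids the bookkeeping about where $\phi$ is continuous. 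Your route works, but you must choose the approximant with error $o(h)$ (e.g.\ $h^2$ in $L_p$), not merely $O(h)$, for the perturbation of both the integral and the penalty to be negligible at first order.

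The genuine gap is in your remainder estimate $\sup_\nu R_\nu=o(h)$. On the bounded region you invoke ``uniform continuity of $\nabla^2 f$ on balls.'' The theorem is stated for a general separable Hilbert space $\X$, where closed balls are not compact and a continuous $\nabla^2 f$ need not be uniformly continuous on them; your truncation argument is therefore only valid for $\X=\R^d$ (and even the obvious repair --- restricting to a compact set of large $\mu$-measure --- fails, because the segment $y+t(z-y)$ leaves any compact set and $\eta$-neighbourhoods of compacta are not compact in infinite dimensions). The paper avoids freezing the Hessian at $y$ altogether: it keeps $\big|\nabla^2 f\big(y+t(z-y)\big)\big|_\Max^{p/(p-2)}$ inside the $\pi_h$-integral and observes that, since $\cW_p^\mart(\mu,\nu_h)\le\sqrt{ah}\to0$, the couplings $\pi_h$ converge in $\cW_p(\X\times\X)$ to the diagonal pushforward of $\mu$, so that the integral of this continuous function of $p$-growth converges to $\int_\X|\nabla^2 f(y)|_\Max^{p/(p-2)}\mu(\d y)$ by \cite[Theorem 6.9]{villani2008optimal}; continuity of $\phi^*$ then closes the upper bound. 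You should adopt that step (or restrict your claim to $\R^d$). A smaller caveat: your aside that $p=2$ is ``analogous'' is not --- there $(ah)^{p/2}=ah$ is no longer $o(h)$, so your tail estimate on $\{\|z-y\|>\eta\}$ collapses; but the exponent $p/(p-2)$ in \eqref{eq.derivative.mart} indicates the statement is really meant for $p>2$, so this does not affect the theorem as intended. For the lower bound, note that no uniform remainder estimate is needed at all: the perturbation direction is fixed, so pointwise second-order differentiability plus Fatou's lemma (as in the paper) already gives the $\liminf$.
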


\begin{remark} \label{rem.mart.meas.select}
	The proof of the previous theorem is based on the fact that, for every function $f \in C^2_p$ and every $\eps > 0$, there exists a measurable function $v \colon \X \to \X$ with $\|v(x)\| \in \{0,1$\} and 
	\begin{equation}\label{eq.measurable curv}
		\big|\nabla^2 f(x)\big|_\Max^{\frac2{p-2}}\big\langle v(x), \nabla^2 f(x) v(x) \big\rangle \ge \big|\nabla^2 f(x)\big|_\Max^{\frac{p}{p-2}} - \eps\quad \text{for all }x\in \X.
	\end{equation}
	In fact, let $\eps>0$ and $\mathbb S_0:=\{w\in \X\, |\, \|w\|=1\}\cup\{0\}$ denote the unit sphere enriched by the neutral element in $\X$. Then, the function $\X \times \mathbb S_0 \to \R,\; (x,w) \mapsto \frac{1}{2} \langle w, \nabla^2 f(x) w \rangle$ is continuous, and we can apply a measurable selection argument, see \cite[Proposition 7.34]{bertsekas2004stochastic}, in order to come up with a measurable function $v\colon \X\to \mathbb S_0$ satisfying \eqref{eq.measurable curv}.
	If $\X = \R^d$, the set $\S_0$ is compact, and the supremum in Equation \eqref{eq.maximal.curvature} is indeed a maximum. In this case, we can invoke \cite[Proposition 7.33]{bertsekas2004stochastic} to obtain a \textit{measurable direction of maximal curvature} $v$, i.e., a measurable map $v\colon \X\to \X$ with $\|v(x)\|\in \{0,1\}$ and
	\[
	\big\langle v(x), \nabla^2 f(x) v(x) \big\rangle = \big|\nabla^2 f(x)\big|_\Max\quad \text{for all }x\in \X.
	\]
	Inspecting the proof of the previous theorem, we immediately get the following reduction of the parameter set if $f$ has a measurable direction of maximal curvature.
\end{remark}

\begin{corollary}
	Assume that $f \in C^2_p$ has a measurable direction of maximal curvature $v$, and define
	\[ \theta(x)=v(x)(|\nabla^2 f(x)|_\Max)^{1/(p-2)} \quad \text{for all } x \in \X. \]
	Then,
	\[
	\lim_{h \downarrow 0} \frac{\cI^\mart(h)f - \cI^\mart_\Theta(h)f}{h} = 0,
	\]
	whenever $\{ \lambda \theta \,|\, \lambda \ge 0 \} \subseteq \Theta$.
\end{corollary}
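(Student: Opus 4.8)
The plan is to combine the elementary bound $\cI^\mart_\Theta(h)f\le\cI^\mart(h)f$, valid for all $h>0$ and recorded before Theorem \ref{thm.martingale.constraint}, with the first-order expansion \eqref{eq.derivative.mart} of that theorem, and to show that restricting the supremum in \eqref{eq.I.theta.mart} to the single ray $\{\lambda\theta\,|\,\lambda\ge 0\}\subseteq\Theta$ already recovers the leading term. Write $M:=\int_\X\big(|\nabla^2 f(y)|_\Max\big)^{\frac{p}{p-2}}\,\mu(\d y)$, which is finite since $f\in C^2_p$ gives $\big(|\nabla^2 f(y)|_\Max\big)^{\frac{p}{p-2}}\le C^{\frac{p}{p-2}}(1+\|y\|)^{p}$ and $\mu\in\cP_p$. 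By \eqref{eq.derivative.mart} it then suffices to prove
\[
\liminf_{h\downarrow 0}\frac{\cI^\mart_\Theta(h)f-\mu f}{h}\ \ge\ \phi^*\Big(\tfrac12\,M^{\frac{p-2}{p}}\Big),
\]
since the matching $\limsup\le$ is immediate from $\cI^\mart_\Theta(h)f\le\cI^\mart(h)f$ and \eqref{eq.derivative.mart}. If $M=0$ this is trivial: then $|\nabla^2 f|_\Max=0$ $\mu$-a.e., hence $\theta=0$ in $L_p(\mu;\X)$ and $\mu^\mart_\theta=\mu$, so taking $\theta=0\in\Theta$ yields $\cI^\mart_\Theta(h)f\ge\mu f$ while $\phi^*(0)=0$. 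Assume $M>0$ from now on.

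Next I would extract the algebra forced by $v$ being a measurable direction of maximal curvature. From $\big\langle v(y),\nabla^2 f(y)v(y)\big\rangle=|\nabla^2 f(y)|_\Max$ and $\theta(y)=v(y)\big(|\nabla^2 f(y)|_\Max\big)^{1/(p-2)}$ one gets $\big\langle\theta(y),\nabla^2 f(y)\theta(y)\big\rangle=\big(|\nabla^2 f(y)|_\Max\big)^{\frac{p}{p-2}}$ and $\|\theta(y)\|^p=\big(|\nabla^2 f(y)|_\Max\big)^{\frac{p}{p-2}}$ for $\mu$-a.e.\ $y$ (on $\{|\nabla^2 f|_\Max>0\}$ necessarily $\|v(y)\|=1$, and elsewhere both sides vanish), whence $\theta\in L_p(\mu;\X)$ with $\|\theta\|_{L_p(\mu;\X)}^2=M^{2/p}$, and, from the growth bound on $\nabla^2 f$, the linear estimate $\|\theta(y)\|\le C^{1/(p-2)}(1+\|y\|)$. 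The analytic core is then a second-order Taylor expansion along the ray: setting $G(\lambda):=\int_\X\frac{f(y+\lambda\theta(y))+f(y-\lambda\theta(y))}{2}\,\mu(\d y)-\mu f$ and using the integral form of Taylor's theorem, the first-order terms cancel and
\[
G(\lambda)=\lambda^2\int_\X\int_0^1(1-t)\,\tfrac12\big\langle\theta(y),\big(\nabla^2 f(y+t\lambda\theta(y))+\nabla^2 f(y-t\lambda\theta(y))\big)\theta(y)\big\rangle\,\d t\,\mu(\d y).
\]
For $\lambda\in[0,1]$ the integrand is bounded, uniformly in $\lambda$ and $t$, by a constant multiple of $\|\theta(y)\|^2\big(1+\|y\|+\|\theta(y)\|\big)^{p-2}$, which by the linear growth of $\theta$ is in turn $\le$ a constant times $(1+\|y\|)^{p}\in L_1(\mu)$; hence continuity of $\nabla^2 f$ and dominated convergence give $G(\lambda)/\lambda^2\to\tfrac12\int_\X\big\langle\theta(y),\nabla^2 f(y)\theta(y)\big\rangle\,\mu(\d y)=\tfrac12 M$ as $\lambda\downarrow 0$.

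Finally, for fixed $w\ge 0$, I would insert the scaled perturbation $\lambda_h:=\sqrt{w h}\,M^{-1/p}$ into \eqref{eq.I.theta.mart}; it tends to $0$ and lies in $\Theta$ (being a nonnegative multiple of $\theta$), and satisfies $\phi_h\big(\lambda_h^2\|\theta\|_{L_p(\mu;\X)}^2\big)=h\,\phi(w)$ and $\lambda_h^2/h=w/M^{2/p}$. Therefore
\[
\frac{\cI^\mart_\Theta(h)f-\mu f}{h}\ \ge\ \frac{G(\lambda_h)-h\,\phi(w)}{h}=\frac{G(\lambda_h)}{\lambda_h^2}\cdot\frac{w}{M^{2/p}}-\phi(w)\ \longrightarrow\ \frac{w}{2}\,M^{\frac{p-2}{p}}-\phi(w)
\]
as $h\downarrow 0$. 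Taking the supremum over $w\ge 0$ gives $\liminf_{h\downarrow 0}\frac{\cI^\mart_\Theta(h)f-\mu f}{h}\ge\phi^*\big(\tfrac12 M^{\frac{p-2}{p}}\big)$, as required. The only genuinely nontrivial points are the domination in the Taylor remainder, which hinges on $\theta$ having at most linear growth (itself a consequence of $f\in C^2_p$), and matching the scaling $\lambda_h\sim\sqrt h$ with the square inside $\phi_h$ so that normalizing by $h$ produces the conjugate $\phi^*$; everything else is just the lower-bound half of the proof of Theorem \ref{thm.martingale.constraint}, specialized by using the \emph{exact} direction of maximal curvature instead of the $\eps$-approximate measurable selection of Remark \ref{rem.mart.meas.select}, so that no $\eps\downarrow 0$ limit is needed.
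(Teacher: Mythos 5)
Your proposal is correct and is essentially the argument the paper intends: the corollary is stated as immediate from inspecting the proof of Theorem \ref{thm.martingale.constraint}, and you have simply carried out that inspection, combining the trivial upper bound $\cI^\mart_\Theta(h)f\le\cI^\mart(h)f$ with \eqref{eq.derivative.mart} and rerunning the lower-bound half of that proof along the single ray $\{\lambda\theta\}$, where the exact direction of maximal curvature removes the need for the $\eps$-approximate selection of Remark \ref{rem.mart.meas.select}. The only cosmetic deviations (integral-form Taylor remainder with dominated convergence in place of Fatou's lemma, and the normalization $\lambda_h=\sqrt{wh}\,M^{-1/p}$ in place of the paper's $\sqrt{bh}$) do not change the substance.
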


\section{Numerical computation of the parametric risk functional} \label{sec.numerics}

In this section, we discuss numerical methods to approximate the functional $\cI_\Theta(h)$ for $h>0$.\ We show how, under different hypotheses, the infinite-dimensional optimization problem related to the computation of $\cI_\Theta(h)$ can be reduced to a finite-dimensional optimization for all $h>0$. This echoes and complements the results in \cite{esfahani2018data} and \cite{ecksteinNN2021}.

We distinguish between two relevant cases: the one where the baseline measure $\mu$ is a convex combination of Dirac measures, and the more general one, which also includes regular measures. In the first case, the operator $\cI_\Theta(h)$ can be obtained through a finite-dimensional optimization scheme, whereas in the second case we approximate the operator $\cI_\Theta(h)$ via a neural network approach for all $h>0$. Moreover, we show how to extend these numerical schemes to the constrained parametric operators.

\subsection{Convex combination of Dirac measures} \label{sec.numerics.diracs}
Throughout this subsection, we assume that $\mu = \sum_{i = 1}^n a_i \delta_{x_i}$, where, for $i\in \{1,\ldots, n\}$, $\alpha_i\in (0,\infty)$ and $\delta_{x_i}$ is the Dirac measure centered in $x_i \in \X$. Moreover, since $\mu$ is a probability measure, $\sum_{i = 1}^n \alpha_i = 1$ is implicitly assumed. Such a type of baseline model is particularly relevant when considering empirical distributions, including data driven problems where one observes historical events and assumes that the generating distribution is the uniform distribution on the sample points (i.e., $a_i=1/n$ for all $i\in \{1,\ldots, n\}$). In this case, the worst case loss, represented by $\cI(h)f$ with uncertainty level $h>0$, accounts for statistical errors present in the sample, cf.\ \cite{esfahani2018data} for more details and real world applications of this type of estimates.

In this situation, two functions $g_1, g_2 \colon \X \to \X$ are $\mu$-a.s.\ equal if and only if they coincide on the atoms of the distribution $\mu$, i.e., if and only if $g_1(x_i) = g_2(x_i)$ for all $i \in\{ 1, \dots, n\}$. In particular, the set $L_p(\mu,\X)$ is isomorphic to the $n$-fold Cartesian product $\X^n:=\Pi_{i=1}^n \X$ of $\X$ with itself (endowed with the product topology).\ This leads to a drastic simplification of the related optimization problem, and if $\X$ is finite-dimensional, so is the parameter set. We summarize these observations in the following proposition.

\begin{proposition} \label{prop.dirac.optim}
	Let $n\in \N$, $x_1,\ldots, x_n\in \X$, $\alpha_1,\ldots\alpha_n\in (0,\infty)$ with $\sum_{i=1}^n \alpha_i = 1$, and
	\[ \mu = \sum_{i=1}^n \alpha_i \delta_{x_i}.\]
	Moreover, let $\mathcal D\subset \X^n$ be a dense subset of $\X^n$ with $0\in \mathcal D$, and define
	\[
	\cI_{\mathcal D}(h)f:=\sup_{(\theta_1, \dots, \theta_n) \in \mathcal D} \left( \sum_{i = 1}^n \alpha_i f(x_i + \theta_i) - \phi_h \left(\left(\sum_{i=1}^n \alpha_i \|\theta_i\|^p\right)^{1/p}\right) \right)\quad\text{for all }f\in \Lip_p.
	\]
	Then, 
	\[
	\lim_{h \downarrow 0} \frac{\cI(h)f - \cI_D(h)f}{h} = 0
	\]
	for every $f\in \Lip_p$ with a measurable direction of steepest ascent.
\end{proposition}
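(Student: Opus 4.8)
The plan is to recognise Proposition \ref{prop.dirac.optim} as the specialisation of Theorem \ref{thm.main.itheta} to the situation in which $L_p(\mu;\X)$ collapses to a finite product of copies of $\X$; no new estimates are needed beyond this identification.

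\emph{Step 1 (reduction $L_p(\mu;\X)\cong\X^n$).} I would first observe that, since $\mu=\sum_{i=1}^n\alpha_i\delta_{x_i}$ with all $\alpha_i>0$, two measurable vector fields agree $\mu$-a.e.\ exactly when they agree on $\{x_1,\dots,x_n\}$, so the evaluation map
\[
J\colon L_p(\mu;\X)\to\X^n,\qquad J\theta:=\big(\theta(x_1),\dots,\theta(x_n)\big)
\]
is a linear bijection (given $(w_1,\dots,w_n)\in\X^n$, the field taking the value $w_i$ at $x_i$ and $0$ elsewhere is a preimage). Writing $w_i:=\theta(x_i)$, one has $\|\theta\|_{L_p(\mu;\X)}^p=\sum_{i=1}^n\alpha_i\|w_i\|^p$, hence, with $m:=\min_i\alpha_i>0$ and $M:=\max_i\alpha_i$,
\[
m^{1/p}\Big(\sum_{i=1}^n\|w_i\|^p\Big)^{1/p}\le\|\theta\|_{L_p(\mu;\X)}\le M^{1/p}\Big(\sum_{i=1}^n\|w_i\|^p\Big)^{1/p}.
\]
Since $n$ is finite, the right-hand norm metrises the product topology on $\X^n$, so $J$ is a homeomorphism; consequently, for the given dense set $\mathcal D\subset\X^n$ with $0\in\mathcal D$, the set $\Theta:=J^{-1}(\mathcal D)\subseteq L_p(\mu;\X)$ is dense in $L_p(\mu;\X)$ and contains $0$.

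\emph{Step 2 (identification of the operators).} Next I would check that $\cI_{\mathcal D}(h)=\cI_\Theta(h)$ for this $\Theta$. Indeed, for $\theta\in L_p(\mu;\X)$ and $f\in\Lip_p$ the definition of $\mu_\theta$ gives $\int_\X f(z)\,\mu_\theta(\d z)=\sum_{i=1}^n\alpha_i f\big(x_i+\theta(x_i)\big)$, while the penalisation equals $\phi_h\big((\sum_{i=1}^n\alpha_i\|\theta(x_i)\|^p)^{1/p}\big)$; taking the supremum over $\theta\in\Theta$ and substituting $(\theta_1,\dots,\theta_n)=J\theta$, which runs over $\mathcal D$ bijectively, turns $\cI_\Theta(h)f$ into precisely $\cI_{\mathcal D}(h)f$, for every $h>0$ and every $f\in\Lip_p$. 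Finally, since $\Theta$ is dense in $L_p(\mu;\X)$ with $0\in\Theta$, Theorem \ref{thm.main.itheta} applies to every $f\in\Lip_p$ having a measurable direction of steepest ascent and gives $\lim_{h\downarrow0}\big(\cI(h)f-\cI_\Theta(h)f\big)/h=0$, which by the previous identification is the claim.

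\emph{On the main difficulty.} There is essentially no hard step: the only point requiring an argument is that density of $\mathcal D$ in the product topology of $\X^n$ transfers to density of $\Theta$ in the $L_p(\mu;\X)$-norm, which is immediate from the two-sided bound of Step 1 and uses only that all $\alpha_i$ are strictly positive and that $n$ is finite. Thus Proposition \ref{prop.dirac.optim} is, in effect, a corollary of Theorem \ref{thm.main.itheta}; its content is the structural observation that for an atomic reference measure the infinite-dimensional optimisation behind $\cI(h)$ is asymptotically a finite search over $\X^n$ — genuinely finite-dimensional when $\X$ is — which may in addition be restricted to any countable dense $\mathcal D$, the basis for the numerical scheme of Section \ref{sec.numerics.diracs}.
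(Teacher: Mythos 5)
Your proof is correct and follows exactly the route the paper intends: it identifies $L_p(\mu;\X)$ with $\X^n$ via evaluation at the atoms (the paper states this isomorphism in the paragraph preceding the proposition and leaves the rest implicit), transfers density of $\mathcal D$ to density of $\Theta=J^{-1}(\mathcal D)$ using that all $\alpha_i>0$ and $n$ is finite, and then invokes Theorem \ref{thm.main.itheta}. Nothing is missing.
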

Note that, in case of $\X=\R^d$, this result is somewhat similar to \cite[Theorem 4.4]{esfahani2018data}, where, working with an empirical distribution, the authors show that the worst case loss, for a function that is the pointwise maximum of concave functions (and satisfies some other conditions), can be computed via a convex shifting of the initial sample points.
In our framework, we can widely relax the hypothesis on the loss function $f$, consider more general atomic measures, and come up with a result in the spirit of \cite[Theorem 4.4]{esfahani2018data} for infinitesimally small amounts of uncertainty. In conclusion, if the baseline measure is given by a convex combination of Dirac measures, for small levels of uncertainty, one does not have to spread away mass in order to find the worst case loss.

\subsection{Absolutely continuous measures and approximation by neural networks}
\label{sec.numerics.abscont}
In this subsection, we provide a framework to build an approximation of the operator $\cI_\Theta(h)$, for $h>0$, in the case where $\X = \R^d$ and the baseline measure $\mu$ is absolutely continuous with respect to the Lebesgue measure.\ In this case, the set $L_p(\mu; \X)$ cannot be reduced to a finite-dimensional space. We therefore explore a numerical procedure, based on a neural network approach, to approximate the operator $\cI_\Theta(h)$ for $h>0$.

Following the method proposed in \cite{ecksteinNN2021}, the idea is to consider an increasing family of parameter sets, which are finite-dimensional and whose union is dense in $L_p(\mu; \X)$.

The numerical method builds on the following result, which holds true for every baseline measure $\mu$, even if we apply it mainly to reference measures which are absolutely continuous with respect to the Lebesgue measure.

\begin{proposition} \label{prop.dense.approx}
	Let $(\Theta^n)_{n\in \N}$ be a family of subsets of $L_p(\mu;\X)$ with $\Theta^n \subseteq \Theta^{n + 1}$ for all $n \in \N$. If the set $\bigcup_{n \in \N} \Theta^n$ is dense in $L_p(\mu;\X)$ and contains $0$, then
	\[
	\cI_{\Theta^n} (h) f\nearrow   
	\cI_{L_p(\mu;\X)} (h)f \quad\text{as }n\to \infty
	\]
	for all $h>0$ and $f \in \Lip_p$.
\end{proposition}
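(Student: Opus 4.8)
First I would observe that the monotonicity of the sequence $\bigl(\cI_{\Theta^n}(h)f\bigr)_{n\in\N}$ is immediate: since $\Theta^n\subseteq\Theta^{n+1}$ and $\cI_\Theta(h)f$ is defined as a supremum over $\theta\in\Theta$, enlarging the parameter set can only increase (or leave unchanged) the value. Moreover, $0\in\Theta^n$ for all large $n$ (it lies in $\bigcup_{n}\Theta^n$, hence in some $\Theta^{N}$ and therefore in every $\Theta^n$ with $n\ge N$), so $\cI_{\Theta^n}(h)f\ge \mu f-\phi_h(0)=\mu f>-\infty$, and the sequence is also bounded above by $\cI_{L_p(\mu;\X)}(h)f$. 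Consequently the limit $L:=\lim_{n\to\infty}\cI_{\Theta^n}(h)f=\sup_{n}\cI_{\Theta^n}(h)f$ exists in $(-\infty,\cI_{L_p(\mu;\X)}(h)f]$, and it only remains to show $L\ge\cI_{L_p(\mu;\X)}(h)f$.

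For the reverse inequality, fix $h>0$, $f\in\Lip_p$, and let $\theta\in L_p(\mu;\X)$ be arbitrary; it suffices to show that
\[
\int_\X f(z)\,\mu_\theta(\d z)-\phi_h\bigl(\|\theta\|_{L_p(\mu;\X)}\bigr)\le L.
\]
By density of $\bigcup_n\Theta^n$ in $L_p(\mu;\X)$, pick a sequence $\theta_k\in\bigcup_n\Theta^n$ with $\|\theta_k-\theta\|_{L_p(\mu;\X)}\to 0$. Since $\phi_h$ is nondecreasing and continuous on the set where it is finite — more precisely, since $\phi$ is nondecreasing with $\phi(0)=0$ and, by \eqref{eq: property of phi}, $\phi^*$ is finite and continuous, one can argue via lower semicontinuity of $\phi_h$ (it is the pointwise supremum of the affine maps $v\mapsto uv-h\phi^*(u/h)\cdot h$... in any case $\liminf_k\phi_h(\|\theta_k\|)\ge$ a value no larger than we need) — the penalty term behaves well in the limit; the genuinely substantive point is the continuity of the linear functional $\theta\mapsto\int_\X f(y+\theta(y))\,\mu(\d y)$ along the approximating sequence. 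This is exactly the content of the approximation result cited as Lemma \ref{lem.approx} in the Appendix: the map $\theta\mapsto\int_\X f(y+\theta(y))\,\mu(\d y)$ is continuous on $L_p(\mu;\X)$ for $f\in\Lip_p$, because \eqref{eq.lipp} gives, with $\|\cdot\|_{p}:=\|\cdot\|_{L_p(\mu;\X)}$,
\[
\Bigl|\!\int_\X\! f(y+\theta_k(y))\,\mu(\d y)-\!\int_\X\! f(y+\theta(y))\,\mu(\d y)\Bigr|
\le L_f\int_\X\!\bigl(1+\|y\|+\max\{\|\theta_k(y)\|,\|\theta(y)\|\}\bigr)^{p-1}\|\theta_k(y)-\theta(y)\|\,\mu(\d y),
\]
and the right-hand side tends to $0$ by Hölder's inequality (exponents $q$ and $p$), using $\theta,\theta_k\in L_p(\mu;\X)$, $\mu\in\cP_p$, and $\|\theta_k-\theta\|_p\to 0$ (after passing to a subsequence along which $\|\theta_k\|$ is bounded, which is automatic). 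Passing to the limit $k\to\infty$, and using that each $\theta_k$ lies in some $\Theta^{n_k}$ so that $\int_\X f(z)\,\mu_{\theta_k}(\d z)-\phi_h(\|\theta_k\|_p)\le\cI_{\Theta^{n_k}}(h)f\le L$, yields
\[
\int_\X f(z)\,\mu_\theta(\d z)-\phi_h\bigl(\|\theta\|_{L_p(\mu;\X)}\bigr)\le L.
\]
Taking the supremum over $\theta\in L_p(\mu;\X)$ gives $\cI_{L_p(\mu;\X)}(h)f\le L$, completing the proof.

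The main obstacle is the joint passage to the limit in the two competing terms: the payoff term $\int_\X f(z)\,\mu_{\theta_k}(\d z)$ must converge (handled by Lemma \ref{lem.approx} and the $\Lip_p$ growth estimate above), while the penalty term $\phi_h(\|\theta_k\|_p)$ must not jump up in the limit — since $\phi$ is merely nondecreasing, it could a priori be discontinuous. The clean way around this is to note that $\|\theta_k\|_p\to\|\theta\|_p$ and that it suffices to bound $\limsup_k\bigl(-\phi_h(\|\theta_k\|_p)\bigr)$ from above; by monotonicity of $\phi$ one may, if needed, slightly shrink $\theta$ to $(1-\delta)\theta$ first (then $\mu_{(1-\delta)\theta}$ is close to $\mu_\theta$ by the same estimate, and eventually $\|\theta_k\|_p\le\|\theta\|_p$, so $\phi_h(\|\theta_k\|_p)\le\phi_h(\|\theta\|_p)$), and finally let $\delta\downarrow 0$ using continuity of $\delta\mapsto\int_\X f(y+(1-\delta)\theta(y))\,\mu(\d y)$. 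This two-step approximation — first approximate $\theta$ from below in norm, then approximate within $\bigcup_n\Theta^n$ — removes any reliance on continuity of $\phi$ itself and is the only place where care is genuinely required.
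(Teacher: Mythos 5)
Your proof is correct and takes essentially the same route as the paper: the monotonicity of $n\mapsto\cI_{\Theta^n}(h)f$ reduces the claim to $\cI_{\bigcup_n\Theta^n}(h)f=\cI_{L_p(\mu;\X)}(h)f$, which the paper obtains by citing Lemma \ref{lem.approx} and which you reprove inline with the same two-step device (first pass to $(1-\delta)\theta$ so that the approximants eventually satisfy $\|\theta_k\|_{L_p(\mu;\X)}\le\|\theta\|_{L_p(\mu;\X)}$ and monotonicity of $\phi$ controls the penalty, then let $\delta\downarrow 0$ by dominated convergence). The mid-proof digression on lower semicontinuity of $\phi_h$ is off the mark ($\phi$ need not be convex, so it is not the supremum of affine maps built from $\phi^*$) but harmless, since your final paragraph supersedes it with the correct treatment of the penalty term.
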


\begin{proof}
	Let $h > 0$, $f \in \Lip_p$, and $\Theta:= \bigcup_{n \in \N} \Theta^n$.\ The inclusion $\Theta^n \subseteq \Theta^{n+1}$ implies that $\cI_{\Theta^n}(h)f\leq \cI_{\Theta^{n+1}}(h)f$ for all $n\in \N$. Moreover,
	$$
	\cI_{\Theta} (h)f= \sup_{n\in \N}\cI_{\Theta^n} (h) f\leq  \cI_{L_p(\mu;\X)} (h)f.
	$$
	The statement now follows from Lemma \ref{lem.approx}.
\end{proof}

Although the set $L_p(\mu; \X)$ might, a priori, be very large, in particular if the reference measure $\mu$ is absolutely continuous with respect to the Lebesgue measure, Proposition \ref{prop.dense.approx} tells us that we can implement a numerical scheme to approximate $\cI_{L_p(\mu;\X)}(h)$ for all $h>0$ based on an increasing sequence of subsets $\Theta^n$ of $L_p(\mu;\X)$, for which we can compute $\cI_{\Theta^n}(h)$.

Considering the case $\X = \R^d$, \cite[Theorem 1]{HORNIK1991251} ensures that the set of neural networks with one layer and arbitrary number of neurons is dense in $L_p(\mu)$ and, with a simple generalization, in $L_p(\mu;\R^d)$, cf.\ \cite[Corollary 2.6 and Corollary 2.7]{HORNIK1989359}). Therefore, using Proposition \ref{prop.dense.approx}, we can can approximate the quantity $\cI_{L_p(\mu;\X)}(h) f$ with the quantity $\cI_{\Theta^n}(h)f$ for all $h>0$, where $\Theta^n$ is the set of functions from $\R^d$ to $\R^d$ given by neural networks with one layer and $n$ neurons, for $n\in \N$ sufficiently large.\ We provide a more precise framework, following closely what has been done in \cite{ecksteinNN2021}.

For $m\in \N$, we consider fully connected feed-forward neural networks from $\R^d$ to $\R^d$, which are maps of the form
\[
x \mapsto A_m \circ \psi \circ A_{m - 1} \circ \cdots \circ \psi \circ A_0(x),
\]
where $A_i$ are affine transoformations of the form $A_i = M_ix + b_i$, for a matrix $M_i$ and a vector $b_i$, $\psi \colon \R \to \R$ is a nonlinear activation function, and the application of the \textit{activation function} is to be understood componentwise, i.e., $\psi((x_1, \dots, x_d)) = (\psi(x_1), \dots, \psi(x_d))$. We say that the neural network has \textit{depth} $m$ and \textit{width} $n\in \N$ (alternatively has $m$ \textit{layers} and $n$ \textit{neurons} per layer) if $A_0$ is a map from $\R^d$ to $\R^n$, $A_1, \dots, A_{m-1}$ are maps from $\R^n$ to $\R^n$, and $A_m$ is a map from $\R^n$ to $\R^d$. The coefficients of the matrices and the vectors forming the affine transformations $A_1,\ldots, A_m$ represent the parameters of the network and can be regarded as an element of $\R^N$, for some $N \in \N$ that depends on the structure of the network. We call $\fN_d^{m,n}$ the set of neural network from $\R^d$ to $\R^d$ with depth $m$ and width $n$.
In case we want to specify the activation function used in the network, we also write $\fN_d^{m,n}(\psi)$, where $\psi$ is the activation function. Moreover, $\fN_d^m := \bigcup_{n \in \N} \fN_d^{m,n}$ denotes the set of neural networks with $m$ layers and arbitrary number of neurons.\ Proposition \ref{prop.dense.approx} can now be restated in a neural network framework.

\begin{corollary} \label{cor.nn.approx}
	For every integer $m \in\N$ and every bounded, nonconstant, Lipschitz continuous activation function $\psi$, we have
	\[
	\cI_{\fN_d^{m,n}}(h) f\nearrow \cI_{L_p(\mu;\X)} (h) f \quad\text{as }n\to \infty 
	\]
	for all $h > 0$ and $f \in \Lip_p$.
\end{corollary}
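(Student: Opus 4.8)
The plan is to deduce this directly from Proposition~\ref{prop.dense.approx} by taking $\Theta^n:=\fN_d^{m,n}(\psi)$, so that $\bigcup_{n\in\N}\Theta^n=\fN_d^m(\psi)$. I would need to check three things: (i) $\fN_d^{m,n}(\psi)\subseteq L_p(\mu;\X)=L_p(\mu;\R^d)$ for every $n$; (ii) the family $(\fN_d^{m,n}(\psi))_{n\in\N}$ is nondecreasing and contains the zero map; and (iii) $\fN_d^m(\psi)$ is dense in $L_p(\mu;\R^d)$. Granted these, Proposition~\ref{prop.dense.approx} immediately gives the monotone convergence asserted.

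Points (i) and (ii) are routine. For (i), since $\psi$ is Lipschitz, every network in $\fN_d^{m,n}(\psi)$ is a finite composition of affine maps and componentwise Lipschitz maps, hence globally Lipschitz, hence of at most linear growth; as $\mu\in\cP_p$, it then lies in $L_p(\mu;\R^d)$. For (ii), the zero map is obtained by taking the final affine layer with zero matrix and zero bias, so $0\in\fN_d^{m,1}(\psi)$; and a width-$n$ network embeds into a width-$(n+1)$ network by adjoining to each hidden layer an extra neuron with vanishing incoming weights and bias and zeroing the column of the subsequent affine map that reads this neuron, which leaves the input--output function unchanged, so $\fN_d^{m,n}(\psi)\subseteq\fN_d^{m,n+1}(\psi)$.

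Step (iii) carries the weight of the proof. For $m=1$ it is exactly the $L_p$-version of the universal approximation theorem for bounded, nonconstant activations, which in the $\R^d$-valued formulation is \cite[Corollaries~2.6 and 2.7]{HORNIK1989359} (cf.\ also \cite[Theorem~1]{HORNIK1991251}). For $m\geq2$ I would reduce to this case by showing that $\fN_d^1(\psi)$ is contained in the $L_p(\mu;\R^d)$-closure of $\fN_d^m(\psi)$. The mechanism is that a Lipschitz nonconstant $\psi$ is absolutely continuous, so it cannot have $\psi'=0$ almost everywhere; fixing a point $t_0$ at which $\psi$ is differentiable with $\psi'(t_0)\neq0$, the affinely reparametrized maps $t\mapsto(\psi(t_0+\delta t)-\psi(t_0))/(\delta\,\psi'(t_0))$ converge to the identity uniformly on compact sets as $\delta\downarrow0$, being equi-Lipschitz with a pointwise limit. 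Given a one-hidden-layer network $g\in\fN_d^{1,n}(\psi)$, whose inner block has bounded range because $\psi$ is bounded, I would keep its first affine map, insert $m-1$ ``approximate-identity'' blocks of the above form, and absorb the cumulative rescaling together with $g$'s output layer into the final affine map; differentiability at $t_0$ makes each inserted block contribute an error of order $o(\delta)$ against a signal of order $\delta$, and the concluding multiplication by $\delta^{-1}$ turns the accumulated error into $o(1)$ uniformly, so the resulting network in $\fN_d^{m,n}(\psi)$ converges to $g$. Taking the union over $n$ gives the inclusion, whence $\fN_d^m(\psi)$ is dense.

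I expect step (iii) for $m\geq2$ to be the only delicate point: one must check that finitely many approximate-identity blocks can be threaded through a bounded-range activation without the errors surviving the final rescaling. The rest --- $L_p$-membership, the nesting of the sets $\fN_d^{m,n}(\psi)$, and the base case $m=1$ --- is either immediate or quoted verbatim from Hornik's results, after which Proposition~\ref{prop.dense.approx} closes the argument and yields $\cI_{\fN_d^{m,n}}(h)f\nearrow\cI_{L_p(\mu;\X)}(h)f$ as $n\to\infty$ for all $h>0$ and $f\in\Lip_p$.
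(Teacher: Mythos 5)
Your argument is correct, and for $m=1$ it coincides with the paper's (Proposition~\ref{prop.dense.approx} plus Hornik's universal approximation theorem). For $m\geq 2$, however, you take a genuinely different and more laborious route: you prove that $\fN_d^m(\psi)$ is itself dense in $L_p(\mu;\R^d)$, by showing $\fN_d^1(\psi)\subseteq\overline{\fN_d^m(\psi)}$ via approximate-identity blocks $t\mapsto(\psi(t_0+\delta t)-\psi(t_0))/(\delta\,\psi'(t_0))$, and then invoke Proposition~\ref{prop.dense.approx} for the deep class directly. The paper instead dispatches $m\geq 2$ with a sandwich: it asserts $\fN_d^{1,n}\subseteq\fN_d^{m,n}$, so that $\cI_{\fN_d^{1,n}}(h)f\le\cI_{\fN_d^{m,n}}(h)f\le\cI_{L_p(\mu;\X)}(h)f$, and the two outer terms converge to the same limit by the $m=1$ case. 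The paper's argument is shorter and never needs density of the deep class, only that deep networks dominate shallow ones in the supremum; but its literal inclusion $\fN_d^{1,n}\subseteq\fN_d^{m,n}$ is delicate for a bounded activation, since the identity cannot be realized exactly by the extra layers --- your approximate-identity construction is precisely what one would supply to make that step airtight (an approximate inclusion up to $L_p$-closure suffices for the sandwich as well, since $\cI_\Theta(h)$ is monotone under closure by Lemma~\ref{lem.approx}-type reasoning). Your write-up also verifies the nesting $\fN_d^{m,n}\subseteq\fN_d^{m,n+1}$ by zero-padding and the membership $0\in\fN_d^{m,1}$, both of which the paper uses implicitly. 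In short: same skeleton for the base case, a more self-contained (if heavier) treatment of depth, and a net gain in rigor at the one point where the paper is terse.
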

\begin{proof}
	The convergence for $m = 1$ is direct consequence of Proposition \ref{prop.dense.approx} and \cite[Theorem 1]{HORNIK1991251}. Since the activation function $\psi$ is Lipschitz continuous, the functions in $\fN_d^{m,n}$ are also Lipschitz continuous as a composition of Lipschitz continuous functions, so that $\fN_d^{m,n} \subseteq L_p(\mu; \X)$. Moreover, $\fN_d^{1,n} \subseteq \fN_d^{m,n}$, and therefore
	\[
	\cI_{\fN_d^{1,n}}(h)f \le \cI_{\fN_d^{m,n}}(h)f \le \cI_{L_p(\mu;\X)}(h)f\quad\text{for all }n\in \N.
	\]
\end{proof}

\begin{remark} \label{rem.nn.approx}
	\begin{enumerate}[a)]
		\item \label{rem.nn.approx.full} As already stated in Remark \ref{rem.abscont.param}, when the baseline measure $\mu$ is absolutely continuous w.r.t.\ the Lebesgue measure, the parametrized version $\cI_{L_p(\mu;\X)}(h)f$ corresponds in fact to $\cI(h)f$, so that the previous corollary gives us an approximation scheme for $\cI(h)f$ for all $h>0$.
		\item The fact that we get an approximation from below complements the approach in \cite{ecksteinNN2021}, where the authors obtain an approximation from above based on a version of the Daniell-Stone theorem, which leads to a dual representation in terms of a superhedging problem. Therefore, their functional is expressed as the infimum over a set of functions that they approximate via neural networks.\ Combining the two approaches, one can overcome the problem of an unknown rate of convergence as $h\downarrow 0$ by estimating the relevant quantity from below and from above. In practice, one can then train the two algorithms until the gap between the two estimates is below a certain threshold.
		\item The assumption of a bounded activation function can be relaxed, and we can allow for unbounded Lipschitz continuous functions using more layers, cf.\ \cite[Section 3]{HORNIK1991251} and \cite[Remark 3.4]{ecksteinNN2021}. In fact, let $\zeta$ be a bounded, nonconstant, and Lipschitz continuous activation function, and let $\psi$ be an unbounded activation function such that $\zeta \in \fN_1^1(\psi)$. Then, for $m>1$, we have $\fN_d^m(\zeta) \subset \fN_d^m(\psi)$ and hence $\fN_d^m(\psi)$ is dense in $L_p(\mu; \R^d)$, since $\fN_d^m(\psi) \subseteq L_p(\mu; \R^d)$ due to the Lipschitz continuity of $\psi$. This allows us to work, for example, with ReLU activation functions, i.e., $\psi(x) = \max\{x, 0\}$, since one can obtain the bounded function $\zeta(x) = \min\{\max\{x,0\},1\}$ from a ReLU network. Alternatively, note that we can prove Corollary \ref{cor.nn.approx} for neural networks with ReLU activation function directly using the simple approximation result, cf.\ Proposition \ref{prop.append.approx}, without invoking the universal approximation of \cite[Theorem 1]{HORNIK1991251}.\ In fact, a single layer network with ReLU activation function and arbitrary number of neurons can reproduce the span of the set $L_1$ in Remark \ref{rem.append.approx.ex} c).
		\item One could also choose a different class of universal approximators, like polynomials. Our choice of employing a neural network framework comes from their recent applications to optimal transport and penalized problems, as, for example, in \cite{eckstein2020robustrisk} and \cite{ecksteinNN2021}. Moreover, in Section \ref{sec.example.financial}, we shall give a financial interpretation to the neural network approximation.
	\end{enumerate}
\end{remark}

In the numerical implementation of the algorithm, we train the network using as loss function
\[
L(\theta):=- \left( \int_{\R^d} f(z)\,\mu_\theta(\d z) - \phi_h(\|\theta\|_{L_p(\mu; \R^d)}) \right)\quad\text{for all }\theta\in \Theta,
\]
where the integrals are computed via Monte Carlo simulations. Note that this corresponds to a classical stochastic gradient descent, since at each step of the optimization new samples are drawn from the distribution $\mu$.

Finally, we remark that this approach can also be employed when working with empirical distributions.\ In particular, when the sample size is high, stochastic gradient descent might be preferred over the full-batch finite-dimensional optimization proposed in Section \ref{sec.numerics.diracs}, see, for example, \cite{Bottou2010}.

\subsection{Extension of neural network scheme to constrained settings}  \label{sec.numerics.constraints}
In this section, we briefly explain how to extend the neural network scheme proposed in Section \ref{sec.numerics.abscont} to the settings with mean and martingale constraint, cf.\ Section \ref{sec.main.mean.constr} and Section \ref{sec.main.martingale}.

In the case of a mean constraint, one should in principle restrict the optimization to functions in $L_p(\mu;\X)$ with mean zero. Even if, a priori, it is not possible to impose this constraint on the neural network, we can obtain it simply by subtracting componentwise the Monte Carlo mean of the vector field. Therefore, at each step of the optimization, we only have to perform the computation of the quantity $ \int_\X \theta(y)\,\mu(\d y)$, which doesn't require any additional sampling.

The martingale constraint almost doesn't require any additional computational effort. As a matter of fact, in this framework we have
\[
\int_\X f(z)\,\mu_\theta(\d z) = \int_\X \frac{f\big(y + \theta(y)\big) - f\big(y - \theta(y)\big)}{2}\,\mu(\d y)\quad \text{for all }\theta\in L_p(\mu;\X),
\]
and it is clear that we only have to perform one evaluation more in each step of the optimization.

\section{Examples and applications} \label{sec.examples}
The aim of this section is to illustrate the results developed in the previous sections in several examples and applications.

\subsection{A toy model for an earthquake}\label{sec.ex.earthquake}
We start by investigating a toy model for an earthquake, i.e., a one-period model, where the baseline measure $\mu$ describes an a priori estimate for the location of the epicenter of an earthquake in $\R^2$.
Suppose that an insurance company, selling a coverage for damages caused by earthquakes, wants to compute the worst case loss in terms of uncertainty regarding the precise location of the epicenter. For the sake of simplicity and analytical tractability, we choose a loss function that is given by the following compactly supported version of a Gaussian kernel:
\[
f(x_1,x_2) = \begin{cases}
	c\exp \left( 1 + \frac{r^2}{(x_1 - \bar{x}_1)^2 + (x_2 - \bar{x}_2)^2 - r^2}\right), & \text{if } (x_1 - \bar{x}_1)^2 + (x_2 - \bar{x}_2)^2 < r^2, \\
	0, & \text{otherwise,}
\end{cases}
\]
where $\bar{x} \in \R^2$ can, for example, represent the location of a city center, $c>0$ and $r>0$ are positive constants representing, respectively, the maximal loss level and the radius of the area covered by the insurance. The idea is that losses decrease as the epicenter of the earthquake moves away from the city center. When needed, we will write $f(c, r, \bar{x}_1, \bar{x}_2; x_1, x_2)$ to highlight the dependence on $c$, $r$, and $\bar{x}$. Figure \ref{fig.loss.earthquake} depicts this loss function for $r = 1.5$, $c = 1$, and the city center $\bar x$ located in the origin.

\begin{figure}
	\centering
	\includegraphics[scale=0.65]{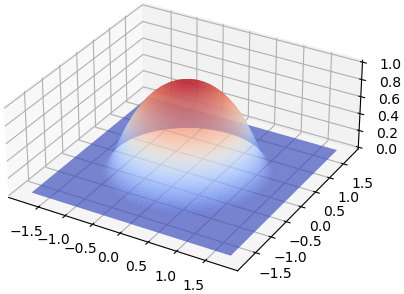}
	\caption{Loss function for the earthquake model}
	\label{fig.loss.earthquake}
\end{figure}

In the following, we work with the $2$-Wasserstein distance, and numerically approximate the quantity
\[
\cI(h)f=\sup_{\nu \in \cP_2} \left( \int_{\R^2} f(z_1,z_2)\,\nu(\d z_1, \d z_2) - \phi_h (\cW_2(\mu, \nu)) \right)\,
\]
computing its parametric version
\[
\cI_{L_2(\mu;\R^2)}(h)f=\sup_{\theta \in L_2(\mu;\R^2)} \left( \int_{\R^2} f(z_1,z_2)\,\mu_\theta(\d z_1, \d z_2) - \phi_h \big(\|\theta\|_{L_2(\mu; \R^2)}\big) \right)\,
\]
through a procedure which is independent of the loss function $f$. We consider the penalty function $\phi(v) = v^2$ for $v\geq 0$, and we explore two different frameworks, one leading to a finite-dimensional optimization and another one based on neural networks.

\subsection{Epicenter as a sum of Dirac measures}
Consider the loss function from Section \ref{sec.ex.earthquake} and a baseline model given by
\[ \mu = \frac{1}{3} \delta_{x_1} + \frac{1}{3} \delta_{x_2} + \frac{1}{3} \delta_{x_3} \]
with $x_1 = (0.55 ,\, 0)$, $x_2 = (0    ,\, 0.85)$, and $x_3 = (-1.10,\, 0)$.\ The optimization suggested by Proposition \ref{prop.dirac.optim} can be easily performed with \emph{out of the box} tools that are nowadays integrated in many scientific programming languages.\
We implement the optimization using Python 3.9\footnote{All source codes are available at \url{https://github.com/sgarale/wasserstein_parametrization}.} and the optimizer included in the package Scipy (version 1.7.3), which applies the Quasi-Newton method BFGS, see \cite[Chapter 6]{Nocedal2006}.\
Figure \ref{fig.dirac.optim} illustrates the initial situation, the optimizers, and the value of the operator $\cI_{L_2(\mu;\R^2)}(h)$ at different uncertainty levels $h$ ranging from $0$ to $0.3$. Looking at Figure \ref{fig.dirac.optim}  (b), we see how the worst case loss is obtained by shifting the three possible locations for the epicenter of the earthquake towards the city center, which means in the direction of the gradient of the loss function. This is in line with Theorem \ref{cor.main.itheta.single} and Remark \ref{rem.meas.direc}, nevertheless it is worth noting that the optimization procedure is independent of the particular loss function. Moreover, note that, because of the penalization term, one has to pay a price to move the atoms of the measure; this is why the worst case loss is obtained by shifting more the points whose incremental contribute to the loss is higher, that is, those points which lie in a region where the loss function is steeper.

\begin{figure}
	\centering
	\subfloat[]{\includegraphics[width=0.40\textwidth]{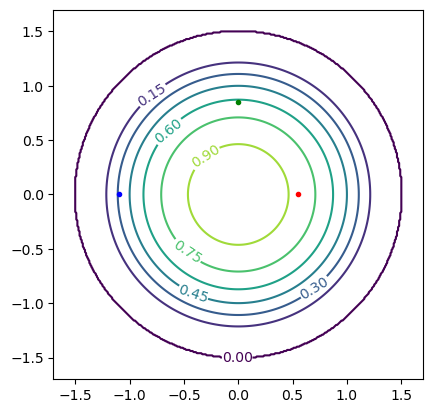}} \quad
	\subfloat[\label{fig.dirac.optim.optimizers}]{\includegraphics[width=0.40\textwidth]{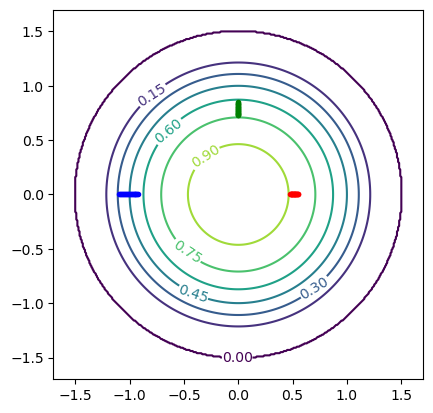}} \\
	\subfloat[]{\includegraphics[width=0.40\textwidth]{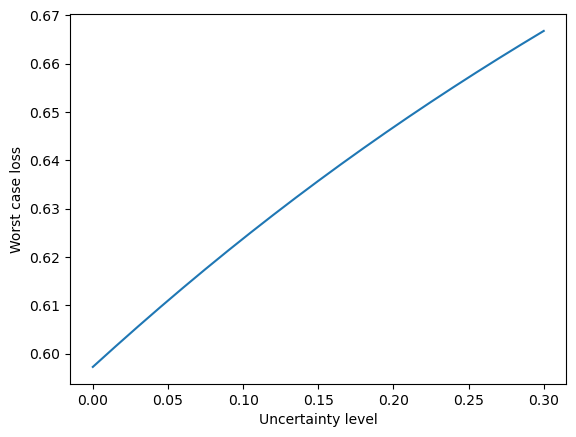}}
	\caption{Calculation of the worst case loss for a reference model given by a convex sum of Dirac measures. Figure (a): loss function and atoms of the baseline measure.\ Figure (b): optimizers as the uncertainty level increases.\ Figure (c): worst case loss for different levels of uncertainty.}
	\label{fig.dirac.optim}
\end{figure}

\subsection{Epicenter with absolutely continuous distribution} \label{sec.example.abscont}
Here, we exploit the neural network approach developed in Section \ref{sec.numerics.abscont} to approximate the value of $\cI_\Theta(h)f$, for $h>0$, in the case where the baseline measure $\mu$ is a two-dimensional Gaussian random variable with mean $(1,\,0)$ and covariance matrix equals the identity matrix. Again, we implement the code in Python 3.9, using the package Pytorch \cite{pytorch2019} (version 1.11.0) to handle the neural network architecture.

\begin{figure}
	\centering
	\subfloat[Uncertainty $0.002$]{\includegraphics[width=0.40\textwidth]{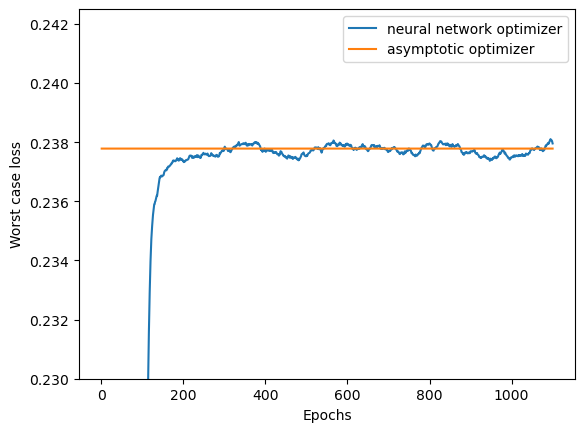}} \quad
	\subfloat[Uncertainty $0.005$]{\includegraphics[width=0.40\textwidth]{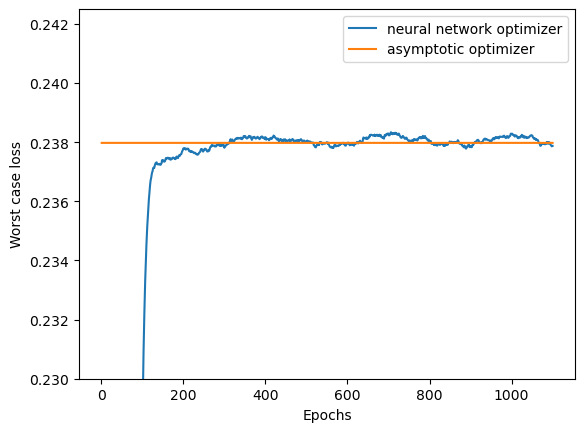}} \\
	\subfloat[Uncertainty $0.02$]{\includegraphics[width=0.40\textwidth]{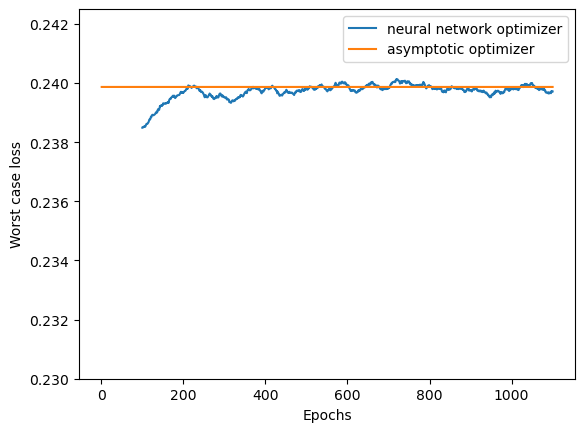}} \quad
	\subfloat[Uncertainty $0.5$ \label{fig.gaussian.optim.training.0.5}]{\includegraphics[width=0.40\textwidth]{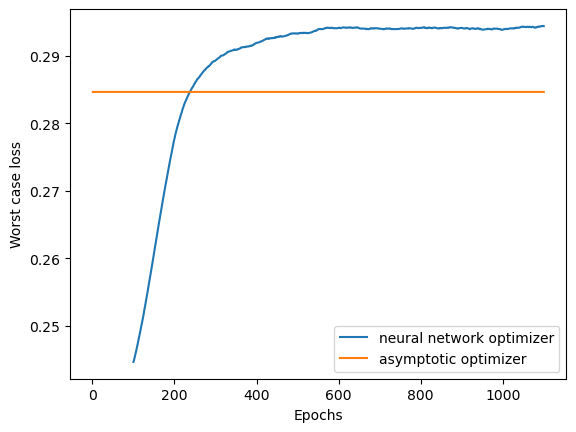}}
	\caption{Traning of neural network for the calculation of the worst case loss for a reference model given by a Gaussian random variable. The yellow line in all the graphs represents the result of the one-dimensional optimization performed using the gradient of the loss function.\ All values are moving averages on a window of 100 epochs.}
	\label{fig.gaussian.optim.training}
\end{figure}

\begin{figure}
	\centering
	\subfloat[Uncertainty $0.02$]{\includegraphics[width=0.40\textwidth]{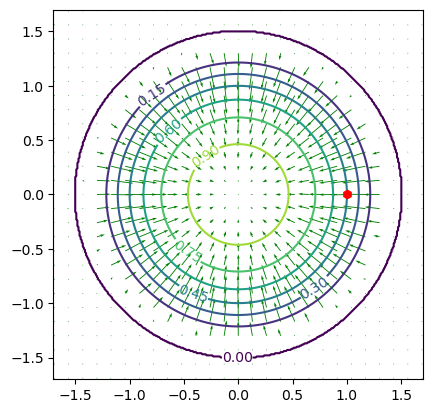}} \quad
	\subfloat[Uncertainty $0.5$]{\includegraphics[width=0.40\textwidth]{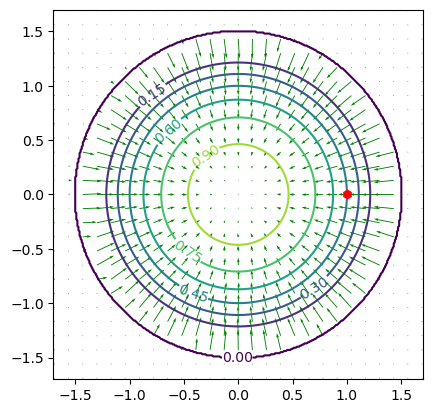}}
	\caption{Optimizers obtained by training a neural network to compute the worst case loss for the gaussian reference model. The contour plots represent the loss function and the red dots represent the location of the mean of the distribution of the epicenter.}
	\label{fig.gaussian.optim.optimizers}
\end{figure}

We use the built-in Adam optimizer, cf.\ \cite{Adam2014}, and present the results obtained from training a neural network with depth 4, width 20, and ReLU activation function $\psi(x)=\max\{x,0\}$. For the optimization, we use a learning rate of $0.001$
and a batch size of $32\,768$ samples (the numbers of points used to compute the integrals via Monte Carlo). The calculations are run on a laptop with Intel\textsuperscript{\textregistered}  Core\texttrademark \,i7-7600U processor (2.90GHz) and 16GB RAM. Computing the worst case loss at a fixed uncertainty level takes less than a minute.

Figure \ref{fig.gaussian.optim.training} shows the training phase of the algorithm at different uncertainty levels and Figure \ref{fig.gaussian.optim.optimizers} shows the optimizers obtained through the training.\ Theorem \ref{cor.main.itheta.single} tells us that, asymptotically as the level of uncertainty tends to zero, the optimizer is a scalar multiple of the gradient of the loss function (we are working with $p=2$), therefore we always compare the neural network optimization with the one-dimensional optimization computed using the gradient. Note how the optimizers resemble the gradient of the loss function for small levels of uncertainty, however, for higher values of $h$, for example $h = 0.5$, the neural network optimization outperforms the optimization along the gradient, see Figure \ref{fig.gaussian.optim.training} (d).

\subsection{An example of transfer learning} \label{sec.example.transferlearning}
We now apply the considerations of Remark \ref{rem.independ.measure} to our neural network framework. We have seen that, when the uncertainty $h>0$ is small, the optimizer for $\cI_\Theta(h)f$ is independent of the reference measure up to a one-dimensional optimization.\ We exploit this observation to provide a simplified framework to compute the worst case loss with respect to different reference measures.
Since, for small values of $h$, the optimizers for different reference measures differ only by a multiplicative factor, we add a multiplicative layer to the network; formally we add the composition with a matrix $M = c \id_{2 \times 2}$, where $\id_{2 \times 2}$ is the two-dimensional identity matrix and $c \in \R$ is the parameter of this additional layer. The network then has the structure
\[
x \mapsto M \circ A_m \circ \psi \circ A_{m - 1} \circ \cdots \circ \psi \circ A_0(x).
\]
Once we have trained the neural network on a starting reference measure, we keep the parameters of the affine transformations $A_0, \dots, A_m$ fixed, and allow only for the training of the last layer, which corresponds to performing a one-dimensional optimization.\ This drastically reduces the computation time of the worst case loss for different reference measures.

To avoid effects due to particular symmetries of the problem, we perform this exercise for the loss function
\[
\bar{f}(x_1, x_2) := f(0.5, 1, 0, 0; x_1, x_2) + f(0.3, 0.75, 1.25, 0; x_1, x_2),
\]
where the function $f$ is defined as in Section \ref{sec.ex.earthquake}.\
From an application perspective, one can think of two cities, a bigger one with the city center located in the origin and a smaller one with the city center located in $(1.25, 0)$.\ As reference measure, we consider a two-dimensional Gaussian random variable with mean located in $(0.75, 0.25)$ and the identity matrix as covariance matrix. Figure \ref{fig.double.dome.optimizer} depicts the training phase and the optimizer obtained by the training at the uncertainty level $h = 0.5$. Again, we observe that, for this level of uncertainty, the optimization of the full vector field outperforms the one-dimensional optimization obtained through the asymptotic optimizer.

\begin{figure}
	\centering
	\subfloat[Training phase]{\includegraphics[width=0.4\textwidth]{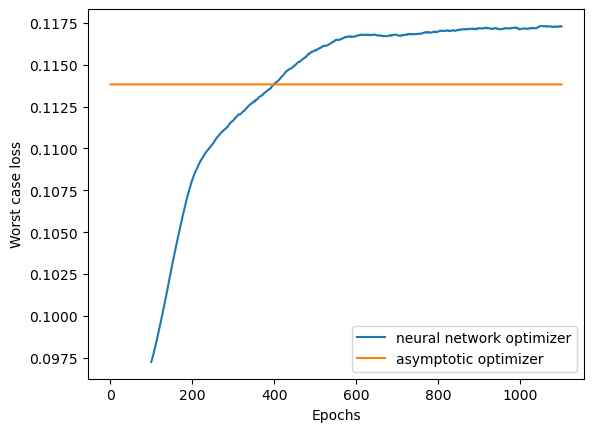}}
	\quad
	\subfloat[Optimizing vector field]{\includegraphics[width=0.4\textwidth]{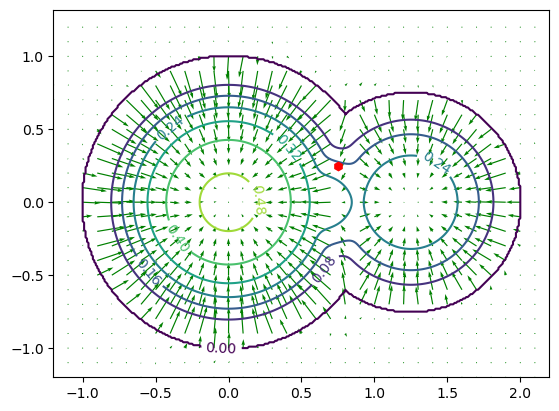}}
	\caption{Training phase and optimizer obtained for the loss function $\bar{f}$ in Section \ref{sec.example.transferlearning}. The red dot is the location of the mean of the distribution of the epicenter.}
	\label{fig.double.dome.optimizer}
\end{figure}

Figure \ref{fig.transfer.training} shows the training phase of the partial training for different reference measures, where the vector field obtained by the training on the initial reference measure is optimally rescaled, in comparison with the full training on each reference measure. In this example, we move the mean of the reference model, which stays a Gaussian random variable with identity matrix as covariance matrix. These results suggest a strong stability with respect to the uncertainty level, even if, as seen before, the asymptotic optimizer is outperformed by the full neural network training for a high value of the uncertainty. In machine learning jargon, this technique is called \emph{transfer learning}; a method that is used both to reduce computational costs and to avoid a large bias from a full training when only sparse data is available. We refer to \cite{Bozinovski2020} for a list of references on this topic.

\begin{figure}
	\centering
	\subfloat[Mean:\,~$(0.75,\,-0.25)$]{\includegraphics[width=0.40\textwidth]{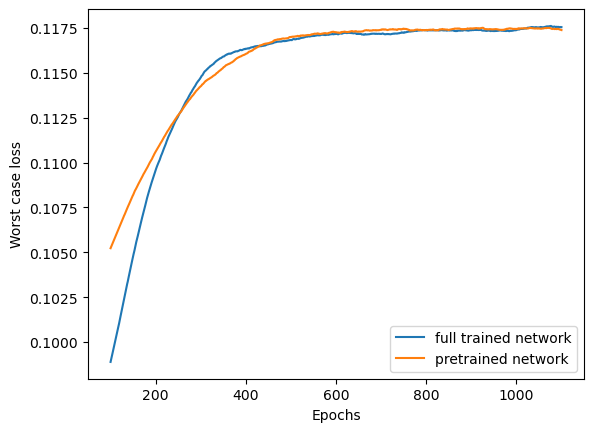}} \quad
	\subfloat[Mean:\,~$(-0.75,\,0.50)$]{\includegraphics[width=0.40\textwidth]{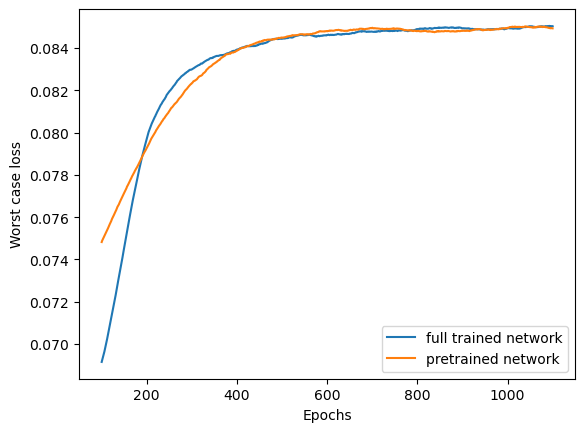}} \\
	\caption{Comparison of full training and partial training (with fixed inner layers) when the mean of the reference model is moved to different locations. Mean of the reference measure for the pretrained network: $(0.75,\,0.25)$.\ Full training batch size: $65\,536$. Partial training batch size: $4\,096$.}
	\label{fig.transfer.training}
\end{figure}

If only sparse empirical data is available, this observation suggests to train the neural network for the approximation of the worst case loss on a randomly generated and sufficiently large dataset, and then to fine-tune it with a one-dimensional training on the available empirical data.

\subsection{A financial perspective: replicating uncertainty via portfolios of call options or digital options} \label{sec.example.financial}
Here, we use a simple approximation result given in the Appendix \ref{app.A} to give a financial interpretation to the parametric approximation of the operator $\cI(h)$, for $h>0$, and to motivate the use of a neural network approximation.\
We assume that $\X = \R$, and consider the payoff function $f\colon \R\to \R$ of a financial contract depending on an underlying risk factor (typically an equity asset or index). The reference measure $\mu$ is then the distribution of this risk factor. Note that in order to include the case of an asset that takes only nonnegative values, one can simply pass to a log-scale, while maintaining the structure of a Hilbert space, cf.\ Section \ref{sec.example.martingale}.

Taking $\Theta = L_p(\mu;\R)$, we obtain a first order approximation of $\cI(h)$ as $h\downarrow 0$ by Theorem \ref{thm.main.itheta}. Moreover, Lemma \ref{lem.approx} tells us that we can rather look at a dense subset of $L_p(\mu;\R)$. As stated in Remark \ref{rem.append.approx.ex} c), for example, portfolios of call options and portfolios of digital options are dense in $L_p(\mu;\R)$. In fact, portfolios of digital options are the span of the set $L_0 := \{ \id_{[K,\infty)}\,|\,K \in \R \}$ and portfolios of call options are the span of the set $L_1:=\{[x \mapsto (x-K)^+]\,|\,K \in \R\}$. This means that we can obtain a first order approximation via the parametric version $\cI_\Theta(h)$, for small $h>0$, where the set $\Theta$ is given by the profiles of all possible portfolios of call options or digital options.

As we remarked in Section \ref{sec.main.mean.constr}, the hypothesis of absence of arbitrage, imposes a mean constraint on perturbations of the reference measure, which leads to so-called \textit{self-financing} portfolios.\ As a matter of fact, the condition $\int_\R \theta(y)\,\mu(\d z)=0$ corresponds to the fact that the portfolio with profile $\theta$ has price zero (we are working in a simple setting where the risk-free rate is zero, but this can be easily generalized), i.e., we can replicate the uncertainty on the price of a financial position without spending additional money.

As a final remark, we point out that an arbitrary portfolio of call options can be built using a single-layer neural network with ReLU activation function, which is exactly the profile of a call option with strike zero.\ Therefore, one can prove the density of ReLU networks using Proposition \ref{prop.append.approx} without invoking universal approximation theorems.\ Moreover, as a byproduct of the training, we get a portfolio of call options which replicates the uncertainty; this insight may have interesting implications for the financial industry.

\subsection{An example for the martingale constraint: pricing of a bull spread shortly after a quoted maturity} \label{sec.example.martingale}
We conclude this section by providing a framework that illustrates the implementation of the martingale constraint.

Suppose that we want to price a bull call spread option with strikes $K_1,K_2\in \R$ with $0 < K_1 < K_2$ and maturity $T+h$, for a small time $h >0$, having quoted options for the shorter maturity $T>0$. The payoff of the option is given by the sum of a long call option on the lower strike $K_1$ and a short call option on the upper strike $K_2$:
\[
f(x) = \max(x - K_1, 0) - \max(x - K_2, 0).
\]
This situation is complementary to a situation described in \cite{Cont2004}. There, it is shown that pricing a call option on a maturity that is shorter than the first quoted maturity, using a diffusion model with deterministic volatility curve, naturally leads to volatility uncertainty.
Here, we reverse the point of view, and assume that we know the distribution of the underlying at time $T$, and we look for the price of the option at time $T+h$. Since no information is provided by the market for the maturity $T+h$, we look for upper and lower price bounds based on Wasserstein uncertainty around the distribution $\mu$ of the underlying at time $T$, which can be seen as a best guess for the distribution of the underlying at time $T+h$. We assume $\mu$ to be a log-normal distribution coming from a Black-Scholes model, cf.\ \cite[Chapter 5]{FoellmerSchied}, with volatility $\sigma>0$ estimated from market data. This choice is quite natural since the log-normal distribution is used by the market to quote options via the implied volatility surface. For simplicity, we assume that the risk-free rate is zero, so that the value of the option at time $0$ for the maturity $T$ is given by
\[
V_{0,T}(f):=\int_\R f\big(e^y\big)\,N\big(-\tfrac{\sigma^2 T}2,\sigma^2 T\big)(\d y),
\]
i.e., we are working with the reference measure $\mu$  on $\R$, which is the distribution of $ e^Y$ for $Y\sim N\big(-\tfrac{\sigma^2 T}2,\sigma^2 T\big)$.
We make the following two assumptions:
\begin{enumerate}[a)]
	\item the uncertainty in the returns grows as the product of the square root of time and the volatility $\sigma$, i.e., the uncertainty level at time $T+h$ is $\sigma \sqrt{h}$,
	\item the market is free of arbitrage.
\end{enumerate}
The first hypothesis is in line with the assumption of a log-normal distribution as a baseline measure. In fact, embedding this distribution in a dynamic setting, we get a Black-Scholes model, where the volatility of the diffusion process increases as the square root of time; we are then imposing the same scaling for the uncertainty. The second hypothesis is standard and, in the spirit of the fundamental theorem of asset pricing, it implies that the asset process is a martingale, see, for instance, \cite[Theorem 1.7]{FoellmerSchied}.

Under these hypotheses, we can bound the value $V_{0,T+h}(f)$ of the option at time $0$ with maturity $T+h$ by
\[
-\cI^\text{Mart}(h)(-f) \le V_{0,T+h}(f) \le \cI^\text{Mart}(h)f\quad\text{for all }h>0,
\]
where the penalty function in the definition of the operator $\cI^\text{Mart}$ is given by $\phi = \infty \cdot\id_{(\sigma^2,\infty)}$.
Therefore, for small values of $h$, we can approximate these bounds estimating the operator $\cI_\Theta^\text{Mart}(h)f$.

Using an adjusted neural network scheme as explained in Section \ref{sec.numerics.constraints} and thinking of the Black-Scholes price as the price without uncertainty, we depict the outcome of this estimate in Figure \ref{fig.option.bounds} with reference time $T=1$ (one year) and intervals $h$ ranging from $0$ to $1/12$ (one month).\ We work with  $p=3$ and, in order to avoid numerical problems related to the value $\infty$ in the convex indicator function, we replace the penalization function $\phi= \infty \cdot\id_{(\sigma^2,\infty)}$ by the real-valued function $\phi(x) = \frac{1}{n} (x/\sigma^2)^n$ with $n=5$. Heuristically, this is coherent with our assumptions, since, thanks to Lemma \ref{lem.restriction.2nd}, we have the desired rescaling of the uncertainty up to a constant factor, and, moreover, this penalization converges pointwise to $\infty \cdot \id_{(\sigma^2,\infty)}$ as $n\to \infty$.\ Moreover, we remark that, although the payoff function of a bull spread is not twice continuously differentiable, at least on a discrete level, it is indistinguishable from a smoothened version of it, also in view of the fact that, in this example, the set of kinks has measure zero under $\mu$.
\begin{figure}
	\centering
	\includegraphics[width=0.50\textwidth]{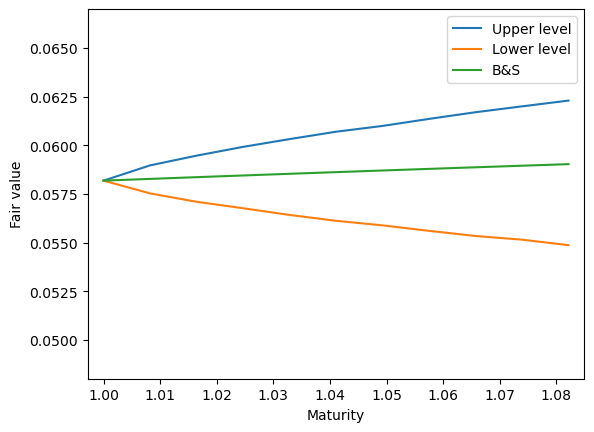}
	\caption{Black-Scholes option price with upper and lower bounds obtained by imposing the martingale constraint in the perturbation of the reference model.\ For the optimization, a neural network with width $20$ and depth $4$ has been used.\ Batch size: $2^{19}$ points.}
	\label{fig.option.bounds}
\end{figure}

\section{Proofs} \label{sec.proof.main}
In order to prove Theorem \ref{cor.main.itheta.single} and Theorem \ref{thm.main.itheta}, we need the following two auxiliary results.\ The first one is a version of a  standard result, cf.\ \cite{BartlEckstKuppRandWalks2021} and \cite{nendel2021wasspert}, specific to our setting. For the sake of a self-contained exposition, we provide a short proof.

\begin{lemma}\label{lem.restriction}
	Let $f\in \Lip_p$. Then, there exist $h_0>0$ and $a\geq 0$ (depending only on $f$ and the penalty function $\phi$) such that
	\begin{equation}\label{eq.restriction}
		\cI(h)f=\sup_{\cW_p(\mu,\nu)\leq ah} \left(\int_\X f(z)\, \nu(\d z)-\phi_h\big(\cW_p(\mu,\nu)\big)\right)\quad \text{for all }h\in (0,h_0].
	\end{equation}
\end{lemma}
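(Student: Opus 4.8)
The plan is to exploit the gap between the super-$p$ growth of the penalty and the at-most-$p$ growth of the reward: for small $h$, any $\nu$ whose distance $\cW_p(\mu,\nu)$ exceeds a threshold proportional to $h$ gets penalized so heavily that it cannot beat the trivial competitor $\nu=\mu$, which already yields the value $\mu f$. Precisely, I would produce $h_0>0$ and $a\geq 0$, depending only on $f$ and $\phi$, such that
\[
\int_\X f(z)\,\nu(\d z) - \phi_h\big(\cW_p(\mu,\nu)\big) \leq \mu f
\]
whenever $\cW_p(\mu,\nu) > ah$ and $0 < h \leq h_0$. Since $\nu=\mu$ realizes $\mu f$ and lies in the ball $\{\cW_p(\mu,\nu)\leq ah\}$, splitting the supremum in \eqref{def.I} over that ball and its complement then yields \eqref{eq.restriction}. (If $L_f=0$ then $f$ is constant and the claim is trivial, so I assume $L_f>0$.)

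Step one is to bound the reward. Fix $\nu\in\cP_p$, let $\pi$ be an optimal coupling of $\mu$ and $\nu$, and put $W:=\cW_p(\mu,\nu)$, so that $\int_\X f\,\d\nu - \mu f = \int_{\X\times\X}\big(f(z)-f(y)\big)\,\pi(\d y,\d z)$. I would estimate the integrand via the $\Lip_p$-inequality \eqref{eq.lipp} together with $\max\{\|y\|,\|z\|\}\leq \|y\|+\|y-z\|$, then apply Hölder's inequality with exponents $q$ and $p$ (using $(p-1)q=p$), and finally $(s+t)^p\leq 2^{p-1}(s^p+t^p)$, to obtain
\[
\int_\X f\,\d\nu - \mu f \leq L_f\big(2^{p-1}(m+W^p)\big)^{1/q}\,W \leq L_f\,2^{(p-1)/q}\big(m^{1/q}+W^{p-1}\big)\,W,
\]
where $m:=\int_\X(1+\|y\|)^p\,\mu(\d y)<\infty$ because $\mu\in\cP_p$, and the last step uses subadditivity of $t\mapsto t^{1/q}$. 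Step two is to bound the penalty from below: by \eqref{eq: property of phi} there are $c>0$ and $v_0\geq 0$ with $\phi(v)\geq c\,v^p$ for $v\geq v_0$, hence $\phi_h(W)=h\,\phi(W/h)\geq c\,W^p h^{-(p-1)}$ as soon as $W\geq v_0 h$. Assuming $a\geq v_0$, so that $W>ah$ forces $W/h>a\geq v_0$, combining the two steps gives
\[
\int_\X f\,\d\nu - \phi_h(W) - \mu f \leq W\Big(L_f 2^{(p-1)/q} m^{1/q} + W^{p-1}\big(L_f 2^{(p-1)/q} - c\,h^{-(p-1)}\big)\Big).
\]

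I would then close the argument by choosing constants in the right order. First fix $h_0$ small enough that $c\,h^{-(p-1)}\geq 2 L_f 2^{(p-1)/q}$ for all $h\leq h_0$; then the coefficient of $W^{p-1}$ above is at most $-\tfrac12 c\,h^{-(p-1)}$, and since $W>ah$ we have $W^{p-1}h^{-(p-1)}>a^{p-1}$, so the bracket is at most $L_f 2^{(p-1)/q} m^{1/q} - \tfrac12 c\,a^{p-1}$. It remains to enlarge $a$ (still keeping $a\geq v_0$) so that this last quantity is $\leq 0$, which makes the whole expression nonpositive, as required.

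I do not expect a genuine obstacle here; the lemma is a quantitative bookkeeping exercise, collecting the $\Lip_p$ growth estimate against the super-$p$ growth of $\phi$. The only delicate point is precisely this ordering of the choices: $h_0$ must be fixed first, from the requirement $c\,h^{-(p-1)}\geq 2 L_f 2^{(p-1)/q}$, so that the $W^{p-1}$-term becomes negative and dominant, and only afterwards is $a$ enlarged to absorb the lower-order term $L_f 2^{(p-1)/q} m^{1/q}\,W$. As a byproduct, applying the Step-one estimate on the ball $\{\cW_p(\mu,\nu)\leq ah\}$ shows the right-hand side of \eqref{eq.restriction} is finite, so in particular $\cI(h)f<\infty$ for $h\leq h_0$, matching the remark made after \eqref{def.I}.
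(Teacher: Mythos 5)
Your proof is correct and follows essentially the same route as the paper: the identical $\Lip_p$--H\"older estimate with exponents $q$ and $p$ pitted against the super-$p$ growth of $\phi$ from \eqref{eq: property of phi}. The only (cosmetic) difference is the bookkeeping — you show every $\nu$ outside the ball of radius $ah$ is dominated by the competitor $\nu=\mu$, whereas the paper takes an $h$-near-optimizer and shows the growth condition forces it into the ball; the quantitative content is the same.
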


\begin{proof}
	Since $f\in \Lip_p$, there exists a constant $L_f\geq 0$ such that
	\[
	|f(x_1)-f(x_2)|\leq L_f\big(1+\max\{\|x_1\|,\|x_2\|\}\big)^{p-1}\|x_1-x_2\|\quad \text{for all }x_1,x_2\in \X.
	\]
	By \eqref{eq: property of phi}, there exist constants $a \geq 0$ and $h_0>0$ such that
	\begin{equation}\label{eq.growthphi}
		\phi(u)>1+L_f(1+|\mu|_p+h_0 u)^{p-1} u\quad \text{for all } u\in (a,\infty).
	\end{equation}
	Let $h\in (0,h_0]$, $\nu\in \cP_p$ with
	\[
	\int_\X f(y)\,\mu(\d y) \le \cI(h)f\leq h+\int_\X f(z)\, \nu(\d z)-\phi_h\big(\cW_p(\mu,\nu)\big),
	\]
	and $\pi\in \cpl(\mu,\nu)$ be an optimal coupling between $\mu$ and $\nu$.
	Then, using H\"older's inequality and Minkowski's inequality,
	\begin{align*}
		\phi\left(\frac{\cW_p(\mu,\nu)}{h}\right)&\leq 1+\int_{\X\times \X}\frac{f(z)-f(y)}{h}\, \pi(\d y,\d z)\\
		&\leq 1+L_f \int_{\X\times \X} \big(1+\max\{\|y\|,\|z\|\}\big)^{p-1}\frac{\|z-y\|}{h}\, \pi(\d y,\d z)\\
		&\leq 1+L_f\bigg(\int_{\X\times \X} \big(1+\|y\|+\|z-y\|\big)^p\, \pi(\d y,\d z)\bigg)^{1/q}\frac{\cW_p(\mu,\nu)}{h}\\
		&\leq 1+L_f\bigg(1+|\mu|_p+h_0\frac{\cW_p(\mu,\nu)}{h}\bigg)^{p-1} \frac{\cW_p(\mu,\nu)}{h}.
	\end{align*}
	By \eqref{eq.growthphi}, it follows that $\frac{\cW_p(\mu,\nu)}{h}\leq a$.
\end{proof}

\begin{lemma}\label{lem.approx}
	Let $\Theta$ be dense in $L_p(\mu;\X)$ with $0\in \Theta$. Then,
	\[
	\cI_\Theta(h)f=\cI_{L_p(\mu;\X)}(h)f\quad\text{for all }f\in \Lip_p \text{ and }h>0.
	\]
\end{lemma}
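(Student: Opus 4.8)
The plan is to establish the two inequalities separately. The inequality $\cI_\Theta(h)f \le \cI_{L_p(\mu;\X)}(h)f$ is immediate from $\Theta \subseteq L_p(\mu;\X)$, since both sides are suprema of the very same expression $\int_\X f(z)\,\mu_\theta(\d z) - \phi_h(\|\theta\|_{L_p(\mu;\X)})$, taken over a smaller and a larger index set, respectively. All the content lies in the reverse inequality, for which it suffices to show: for every $\theta \in L_p(\mu;\X)$ and every $\eta > 0$ there is some $\theta' \in \Theta$ with
\[
\int_\X f(z)\,\mu_{\theta'}(\d z) - \phi_h\big(\|\theta'\|_{L_p(\mu;\X)}\big) > \int_\X f(z)\,\mu_\theta(\d z) - \phi_h\big(\|\theta\|_{L_p(\mu;\X)}\big) - \eta.
\]

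The delicate point is that $\phi$, hence $\phi_h$, is only assumed nondecreasing, not continuous, so approximating $\theta$ directly in $L_p(\mu;\X)$ could produce a $\theta'$ whose penalty $\phi_h(\|\theta'\|_{L_p(\mu;\X)})$ overshoots $\phi_h(\|\theta\|_{L_p(\mu;\X)})$ across a jump. To neutralize this I would approximate a slightly contracted direction instead. Assuming $\theta \neq 0$ (the case $\theta = 0 \in \Theta$ being trivial), fix $\delta \in (0,1)$ and use density of $\Theta$ to choose $\theta' \in \Theta$ with $\|\theta' - (1-\delta)\theta\|_{L_p(\mu;\X)} < \delta\|\theta\|_{L_p(\mu;\X)}$. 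The triangle inequality then gives, on the one hand, $\|\theta'\|_{L_p(\mu;\X)} \le (1-\delta)\|\theta\|_{L_p(\mu;\X)} + \delta\|\theta\|_{L_p(\mu;\X)} = \|\theta\|_{L_p(\mu;\X)}$, whence $\phi_h(\|\theta'\|_{L_p(\mu;\X)}) \le \phi_h(\|\theta\|_{L_p(\mu;\X)})$ by monotonicity; and, on the other hand, $\|\theta' - \theta\|_{L_p(\mu;\X)} < 2\delta\|\theta\|_{L_p(\mu;\X)} \to 0$ as $\delta \downarrow 0$.

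It remains to compare $\int_\X f(z)\,\mu_{\theta'}(\d z)$ with $\int_\X f(z)\,\mu_\theta(\d z)$, and here I would invoke $f \in \Lip_p$: by \eqref{eq.lipp},
\[
\big|f\big(y+\theta'(y)\big) - f\big(y+\theta(y)\big)\big| \le L_f\big(1 + \|y\| + \|\theta'(y)\| + \|\theta(y)\|\big)^{p-1}\|\theta'(y) - \theta(y)\|,
\]
so that integrating against $\mu$ and applying H\"older's inequality with exponents $q$ and $p$ (together with Minkowski's inequality and $q(p-1) = p$) bounds $\big|\int_\X f(z)\,\mu_{\theta'}(\d z) - \int_\X f(z)\,\mu_\theta(\d z)\big|$ by $L_f\big(1 + |\mu|_p + \|\theta'\|_{L_p(\mu;\X)} + \|\theta\|_{L_p(\mu;\X)}\big)^{p-1}\|\theta' - \theta\|_{L_p(\mu;\X)}$. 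Since $\|\theta'\|_{L_p(\mu;\X)} \le \|\theta\|_{L_p(\mu;\X)}$ uniformly in $\delta$, the first factor stays bounded, so this quantity vanishes as $\delta \downarrow 0$ by the previous paragraph. Hence, for $\delta$ small enough, $\int_\X f(z)\,\mu_{\theta'}(\d z) > \int_\X f(z)\,\mu_\theta(\d z) - \eta$, which combined with $\phi_h(\|\theta'\|_{L_p(\mu;\X)}) \le \phi_h(\|\theta\|_{L_p(\mu;\X)})$ exhibits the desired $\theta'$. Taking the supremum over $\theta \in L_p(\mu;\X)$ and then letting $\eta \downarrow 0$ yields $\cI_\Theta(h)f \ge \cI_{L_p(\mu;\X)}(h)f$, completing the argument. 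Apart from the contraction trick forced by the possible discontinuity of $\phi$, everything reduces to the $\Lip_p$/H\"older estimate already employed in the proof of Lemma \ref{lem.restriction}.
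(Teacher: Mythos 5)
Your proof is correct and follows essentially the same strategy as the paper's: both neutralize the possible discontinuity of $\phi$ by approximating a slightly contracted direction so that the approximant's norm does not exceed $\|\theta\|_{L_p(\mu;\X)}$, and then invoke monotonicity of $\phi_h$ to control the penalty term. The only difference is cosmetic --- where the paper compares $\int_\X f\,\d\mu_{\theta'}$ with $\int_\X f\,\d\mu_{\theta}$ via dominated convergence, you use the quantitative $\Lip_p$/H\"older estimate, which works equally well.
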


\begin{proof}
	Let $h>0$ and $f \in \Lip_p$. 
	Clearly, $\cI_\Theta (h)f \le \cI_{L_p(\mu; \X)} (h) f$ since $\Theta \subset L_p(\mu; \X)$.
	To prove equality, fix $\eps > 0$, and let $\theta_0 \in L_p(\mu;\X)$ with
	\[
	\cI_{L_p(\mu;\X)}(h)f\leq  \int_\X f(z)\,\mu_{\theta_0}(\d z) - \phi_h(\| \theta_0 \|_{L_p(\mu;\X)})+\frac\eps3.
	\]
	Then, by dominated convergence, there exists some $\lambda\in (0,1)$ such that
	\[
	\int_\X f(z)\,\mu_{\theta_0}(\d z)\leq  \int_\X f(z)\,\mu_{\lambda \theta_0}(\d z) +\frac{\eps}3.
	\]
	Again, by dominated convergence and since $\Theta$ is dense in $L_p(\mu;\X)$ with $0\in \Theta$, there exists some $\theta\in \Theta$ with $\|\theta\|_{L_p(\mu;\X)}\leq \|\theta_0\|_{L_p(\mu;\X)}$ and
	\[
	\int_\X f(z)\,\mu_{\lambda \theta_0}(\d z)\leq \int_\X f(z)\,\mu_{\theta}(\d z)+\frac\eps3.
	\]
	Therefore, since $\phi$ is nondecreasing,
	\[
	\cI_{L_p(\mu;\X)}(h)f\leq  \int_\X f(z)\,\mu_{\theta}(\d z) - \phi_h(\| \theta \|_{L_p(\mu;\X)})+\eps\leq \cI_\Theta (h)f+\eps.
	\]
	Letting $\eps\downarrow 0$, we find that $\cI_\Theta(h)f=\cI_{L_p(\mu;\X)}(h)f$.
\end{proof}

\begin{proof}[Proof of Theorem \ref{cor.main.itheta.single}]
	The proof is divided in two steps. In a first step, we show that
	\begin{equation}\label{eq.proof.main.todo.1}
		\limsup_{h\downarrow 0} \frac{\cI(h)f-\mu f}h\leq \phi^*\Big( \big\| |f|_\Lip\big\|_{L_q(\mu)}\Big),
	\end{equation}
	and, in a second step, we prove
	\begin{equation}\label{eq.proof.main.todo.2}
		\liminf_{h\downarrow 0} \frac{\cI_\Theta(h)f-\mu f}h\geq \phi^*\Big( \big\| |f|_\Lip\big\|_{L_q(\mu)}\Big),
	\end{equation}
	whenever $\{b\theta\, |\, b\geq 0\}\subset \Theta$ for $\theta$ given by \eqref{eq.opt.theta}.
	In order to prove \eqref{eq.proof.main.todo.1}, let $h_0>0$ and $a\geq 0$ as in Lemma \ref{lem.restriction}. For $\delta>0$, let $|f|_{\Lip,\delta}\colon \X\to \R$ be given by
	\[
	|f|_{\Lip,\delta}(x):=\sup_{\gamma\in (0,\delta)}\sup_{\|u\|\in\{ 0,1\}}\frac{f(x+\gamma u)-f(x)}{\gamma}\quad\text{for all }x\in \X.
	\]
	Then, $|f|_{\Lip,\delta}$ is lower semicontinuous and thus measurable. Moreover, since $f\in \Lip_p$, for all $\delta>0$,
	\[
	|f|_{\Lip,\delta}(x)\leq \big(1+\delta+\|x\|\big)^{p-1}\quad\text{for all }x\in \X,
	\]
	which shows that $|f|_{\Lip,\delta}\in L^q(\mu)$.\ Let $\eps>0$.\ By dominated convergence, there exists some $\delta>0$, such that
	\begin{equation}\label{eq.proof.main.delta}
		\int_\X\big(|f|_{\Lip,\delta}(y)\big)^q\, \mu(\d y)\leq  \int_\X\big(|f|_{\Lip}(y)\big)^q\, \mu(\d y)+\eps.
	\end{equation}
	Let $h\in (0,h_0]$, $\nu\in \cP_p$ with $\cW_p(\mu,\nu)\leq ah$ and
	$$ \cI(h)f \leq \eps h+\int_\X f(z)\, \nu(\d z)-\phi_h\big(\cW_p(\mu,\nu)\big),$$
	and $\pi\in \cpl(\mu,\nu)$ be an optimal coupling between $\mu$ and $\nu$. Then, for arbitrary $\delta>0$,
	\begin{align}
		\notag \frac{\cI(h)f - \mu f}{h} & \leq \eps+\frac{1}{h} \int_{\X \times \X} f(z) - f(y)\,\pi (\d y, \d z) -  \phi\left(\frac{\cW_p(\mu, \nu)}{h}\right) \\
		\notag& = \eps+\frac{1}{h} \int_{\|z - y\| < \delta} f(z) - f(y)\,\pi (\d y, \d z) - \phi\left(\frac{\cW_p(\mu, \nu)}{h}\right) \\
		\label{eq.proof.main.estimate.1}& \quad + \frac{1}{h}\int_{\|z - y\| \ge \delta} f(z) - f(y)\,\pi(\d y, \d z).
	\end{align}
	We start by estimating the last term on the right-hand side in \eqref{eq.proof.main.estimate.1}.\ Since $f\in \Lip_p$,
	\begin{align*}
		\int_{\|z - y\| \ge \delta} f(z) - & f(y)\,\pi(\d y, \d z) \le L_f \int_{\|z - y\| \ge \delta} \big(1+\max\{\|y\|,\|z\|\}\big)^{p-1}\|z-y\|\, \pi(\d y,\d z) \\
		&\leq 2^{p-1}L_f\bigg(	\int_{\|z - y\| \ge \delta} \big(1+\|y\|\big)^{p-1}\|z-y\|\, \pi(\d y,\d z)+\cW_p(\mu,\nu)^p\bigg)\\
		& \leq 2^{p-1}L_f\cW_p(\mu,\nu)\bigg(	\int_{\|z - y\| \ge \delta} \big(1+\|y\|\big)^p\, \pi(\d y,\d z)+\cW_p(\mu,\nu)^{p-1}\bigg),
	\end{align*}
	where, in the last step, we used H\"older's inequality together with the fact that $u^{1/q}\leq u$ for all $u\in [1,\infty)$.\ By the monotone convergence theorem, there exists some some $\lambda\in (0,1)$ such that
	\[
	\int_{\X} \big(1+\|y\|\big)^p-\big(1+\|y\|\big)^{\lambda p}\, \mu(\d y)\leq \frac\eps2.
	\]
	Using Hölder's inequality with $\frac1\lambda$ and $\frac1{1-\lambda}$, Minkowski's inequality, and Lemma \ref{lem.restriction}, we find that
	\begin{align*}
		\int_{\|z - y\| \ge \delta} \big(1+\|y\|\big)^p\, \pi(\d y, \d z) &\le  \int_{\|z - y\| \ge \delta} \big(1+\|y\|\big)^{\lambda p}\, \pi(\d y,\d z) +\frac\eps2\\
		&\leq \frac{1}{\delta^{(1-\lambda)p}}\int_{\X\times \X} \big(1+\|y\|\big)^{\lambda p} \|z-y\|^{(1-\lambda)p}\, \pi(\d y,\d z)+\frac\eps2 \\
		&\leq \frac{1}{\delta^{(1-\lambda)p}}(1+|\mu|_p)^{\lambda p}\cW_p(\mu,\nu)^{(1-\lambda)p}+\frac\eps2\leq \eps
	\end{align*}
	for $h>0$ sufficiently small.\ Again, by Lemma \ref{lem.restriction}, we can estimate $\cW_p(\mu,\nu)^{p-1}\leq \eps$, for $h>0$ sufficiently small, and end up with
	\begin{equation}\label{eq.proof.main.firstterm}
		\int_{\|z - y\| \ge \delta} f(z) - f(y)\,\pi(\d y, \d z) \le 2^pL_f\eps \cW_p(\mu,\nu).
	\end{equation}
	In order to estimate the first term on the right-hand side in \eqref{eq.proof.main.estimate.1}, we use H\"older's inequality to observe that
	\begin{align*}
		\int_{\|z - y\| < \delta} {f(z) - f(y)}\,\pi(\d y, \d z) &\leq \int_{\X\times \X} |f|_{\Lip,\delta}(y)\|z-y\|\, \pi(\d y,\d z)\\
		&\leq \big\||f|_{\Lip,\delta}\big\|_{L_q(\mu)}\cW_p(\mu,\nu).
	\end{align*}
	Hence, a combination of \eqref{eq.proof.main.delta}, \eqref{eq.proof.main.estimate.1}, and \eqref{eq.proof.main.firstterm} yields that
	\begin{align*}
		\frac{\cI(h)f - \mu f}{h}& \le \eps +  \phi^*\bigg(2^{p}L_f\eps+\big\||f|_{\Lip,\delta}\big\|_{L_q(\mu)}\bigg) \\
		& \leq \eps +  \phi^*\bigg(\big(1+2^{p}L_f\big)\eps+\big\||f|_{\Lip}\big\|_{L_q(\mu)}\bigg).
	\end{align*}
	The bound from above \eqref{eq.proof.main.todo.1} is then proved thanks to the arbitrariness of $\eps >0$ and the continuity of $\phi^*$.
	
	For the bound from below \eqref{eq.proof.main.todo.2}, consider the function $\theta\in L_p(\mu;\X)$, which, for $x\in \X$, is given by
	\[
	\theta(x) = v(x) \big(| f |_\Lip(x)  \big)^{q-1},
	\]
	where $v\colon \X\to \X$ is a measurable direction of steepest ascent for $f$.\ We observe that, by definition of $\theta$, $\int_\X \|\theta(y)\|^p\, \mu(\d y) = \big\| |f|_\Lip \big\|_{L_q(\mu)}^q$ and, for $b\geq0$,
	\[
	\frac{\cI_\Theta(h)f - \mu f}{h} \ge \int_\X \frac{f\big(y+bh\theta(y)\big) - f(y)}{h}\, \mu(\d y) -\varphi\big(b\|\theta\|_{L_p(\mu;\X)}\big),
	\]
	which, by Fatou's lemma, leads to
	\begin{align*}
		\liminf_{h \downarrow 0} \frac{\cI_\Theta(h)f - \mu f}{h} & \ge b \int_\X \big(|f|_\Lip(y)\big)^q\,\mu(\d y) - \phi\big(b\|\theta\|_{L_p(\mu;\X)}\big) \\
		& = b \big\||f|_\Lip\big\|^q_{L_q(\mu)} -\phi\Big(b \big\| |f|_\Lip \big\|_{L_q(\mu)}^{q - 1}\Big)\quad\text{for all }b\geq 0.
	\end{align*}
	We pass to the convex conjugate $\phi^*$ of $\phi$ by taking the supremum over all $b\geq 0$, and end up with \eqref{eq.proof.main.todo.2}.
	
\end{proof}

\begin{proof}[Proof of Theorem \ref{thm.mean.constraint}]
	Consider the auxiliary function
	\[
	\tilde{f}(x) = f(x) - \Big\langle \int_\X \nabla f(y)\,\mu(\d y), x - \int_\X y\,\mu(\d y) \Big\rangle\quad\text{for all }x\in \X,
	\]
	and note that $\nu f = \nu \tilde{f}$ for all $\nu\in \cP_p(\mu)$. In particular, $\cI^\text{Mean}(h) f=\cI^\text{Mean}(h) \tilde{f}$ for all $h>0$.
	Since the supremum in the definition of $\cI(h)$ runs over a larger set than the one in the definition of $\cI^\text{Mean}(h)$, we have $\cI^\text{Mean}(h) \tilde{f} \le \cI(h) \tilde{f}$ for all $h > 0$. 
	Define 
	\begin{equation} \label{eq.optimiz.direct.mean}
		\theta(x) := \nabla \tilde{f}(x)=\nabla f(x) - \int_\X \nabla f(y)\,\mu(\d y)\quad\text{for all }x\in \X,
	\end{equation}
	and let $\Theta:=\{b\theta\, |\, b\geq 0\}$.\
	By definition, $\int_\X\theta(y)\, \mu(\d y)=0$, so that $\cI_{\Theta}(h) f \leq  \cI^\text{Mean}(h) f$ and $\cI_{\Theta}(h) f =  \cI_\Theta(h)\tilde f$ for all $h>0$. This implies that
	\[
	0\leq \frac{\cI^\text{Mean}(h) f - \cI_\Theta(h) f}h \leq \frac{\cI(h) \tilde{f} - \cI_\Theta(h) \tilde{f}}h\quad\text{for all }h>0,
	\]
	and therefore, by Theorem \ref{thm.main.itheta},
	\[
	\lim_{h\downarrow0}\frac{\cI^\text{Mean}(h) f - \cI_\Theta(h) f}h =\lim_{h\downarrow0} \frac{\cI(h) \tilde{f} - \cI_\Theta(h) \tilde{f}}h=0.
	\]
	Again, by Theorem \ref{thm.main.itheta}, it follows that 
	\[
	\lim_{h\downarrow0}\frac{\cI^\text{Mean}(h) f-\mu f}h=\lim_{h\downarrow0}\frac{\cI(h) \tilde{f} - \mu \tilde{f}}h=\phi^*\Big(\big\|\nabla \tilde f\big\|_{L_p(\mu;\X)}\Big).
	\]
\end{proof}

Before turning our focus on the proof of Theorem \ref{thm.martingale.constraint}, we provide the following two modifications of Lemma \ref{lem.restriction} and Lemma \ref{lem.approx}.

\begin{lemma} \label{lem.restriction.2nd}
	Let $f \in C^2_p$.\ Then, there exist $h_0>0$ and constant $a \geq 0$ (depending only on $f$ and the penalty function $\phi$) such that
	\[
	\cI^\mart(h)f = \sup_{\cW_p^\mart (\mu,\nu)\leq \sqrt{a h}} \left( \int_\X f(z)\, \nu(\d z) - \phi_h\big(\cW_p^\mart(\mu, \nu)^2\big) \right)\quad \text{for all }h\in (0,h_0].
	\]
\end{lemma}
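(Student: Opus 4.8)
The plan is to mirror the proof of Lemma~\ref{lem.restriction}, exploiting the one genuinely new feature of the martingale setting: along a martingale coupling the first-order term of a Taylor expansion of $f$ averages out, so that the gain $\int_\X f(z)\,\nu(\d z)-\mu f$ is governed by the \emph{square} of the transport displacement --- exactly what the penalty, now evaluated at $\cW_p^\mart(\mu,\nu)^2$, charges for. Taking $\theta=0$ in \eqref{eq.I.theta.mart} gives $\mu f\le\cI^\mart(h)f$, so, fixing $h>0$, picking $\nu\in\cM_p(\mu)$ with $\mu f\le\cI^\mart(h)f\le h+\int_\X f(z)\,\nu(\d z)-\phi_h\big(\cW_p^\mart(\mu,\nu)^2\big)$, and choosing an optimal $\pi\in\mart(\mu,\nu)$, so that $W:=\cW_p^\mart(\mu,\nu)=\big(\int_{\X\times\X}\|z-y\|^p\,\pi(\d y,\d z)\big)^{1/p}$, one rearranges (with $\phi_h(v)=h\phi(v/h)$ and the fact that $\pi$ has marginals $\mu$ and $\nu$) into
\[
\phi\!\left(\frac{W^2}{h}\right)\le 1+\frac1h\int_{\X\times\X}\big(f(z)-f(y)\big)\,\pi(\d y,\d z).
\]

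The next step is to cancel the first-order term. Since $f\in C^2_p\subseteq C^1_p$, $\|\nabla f(y)\|\le C(1+\|y\|)^{p-1}$, hence $\nabla f\in L_q(\mu;\X)$ (as $(p-1)q=p$), so $\int_{\X\times\X}|\langle\nabla f(y),z-y\rangle|\,\pi(\d y,\d z)\le\|\nabla f\|_{L_q(\mu;\X)}\,W<\infty$. Disintegrating $\pi$ along $\mu$ and using the martingale property (the barycenter of $\pi(\,\cdot\mid y)$ equals $y$ for $\mu$-a.e.\ $y$), $\int_{\X\times\X}\langle\nabla f(y),z-y\rangle\,\pi(\d y,\d z)=0$. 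Writing $R(y,z):=f(z)-f(y)-\langle\nabla f(y),z-y\rangle$ and using Taylor's formula with integral remainder, $R(y,z)=\int_0^1(1-t)\big\langle z-y,\nabla^2 f\big(y+t(z-y)\big)(z-y)\big\rangle\,\d t$, so $\|\nabla^2 f(x)\|\le C(1+\|x\|)^{p-2}$ gives, for $p\ge2$,
\[
|R(y,z)|\le\tfrac{C}{2}\big(1+\|y\|+\|z-y\|\big)^{p-2}\|z-y\|^2 .
\]

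Now Hölder's inequality with exponents $\tfrac{p}{p-2}$ and $\tfrac p2$ and Minkowski's inequality (applied to $1+\|y\|+\|z-y\|$ in $L_p(\pi)$) bound $\int_{\X\times\X}|R(y,z)|\,\pi(\d y,\d z)$ by $\tfrac{C}{2}(1+|\mu|_p+W)^{p-2}W^2$, so that
\[
\phi\!\left(\frac{W^2}{h}\right)\le 1+\frac{C}{2}\,(1+|\mu|_p+W)^{p-2}\,\frac{W^2}{h}.
\]
Set $u:=W^2/h$; since $W=\sqrt{hu}\le\sqrt{h_0u}$ for $h\le h_0$, this reads $\phi(u)\le 1+\tfrac{C}{2}(1+|\mu|_p+\sqrt{h_0u})^{p-2}u$. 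By \eqref{eq: property of phi_v2} there are $c_0>0$, $v_0\ge0$ with $\phi(v)\ge c_0v^{p/2}$ for $v\ge v_0$; for $p>2$, choosing $h_0$ so small that $\tfrac{C}{2}2^{p-2}h_0^{(p-2)/2}<c_0$ drives the right-hand side strictly below $c_0u^{p/2}\le\phi(u)$ once $u$ exceeds some $a\ge0$ depending only on $f$ and $\phi$ --- a contradiction. Hence every such (near-optimal) $\nu$ has $\cW_p^\mart(\mu,\nu)^2\le ah$, which yields the claimed identity for $h\in(0,h_0]$; and since $\mart(\mu,\nu)\subseteq\cpl(\mu,\nu)$ implies $\cW_p(\mu,\nu)\le\cW_p^\mart(\mu,\nu)$, the restricted set sits inside a fixed $\cW_p$-ball around $\mu$, on which $\int_\X f(z)\,\nu(\d z)$ is bounded (by $|f(x)|\le C'(1+\|x\|)^p$), so $\cI^\mart(h)f<\infty$.

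The conceptual hurdle is to make the cancellation of the linear term rigorous --- the disintegration of a martingale coupling in a possibly infinite-dimensional separable Hilbert space and the absolute convergence of $\int\langle\nabla f(y),z-y\rangle\,\pi(\d y,\d z)$. The quantitative hurdle is the constant bookkeeping in the last step: the argument genuinely needs $p>2$ and the freedom to shrink $h_0$, since for $p=2$ the factor $(1+|\mu|_p+\sqrt{h_0u})^{p-2}$ collapses to $1$ and the growth of $\phi$ need not dominate $\tfrac C2u$; the boundary case $p=2$, as well as $1<p<2$ (where $\nabla^2 f$ is bounded and a crude preliminary restriction in the spirit of Lemma~\ref{lem.restriction} is needed first), therefore requires a separate, more careful treatment.
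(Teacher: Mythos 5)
Your proof is correct and follows essentially the same route as the paper's: restrict to near-optimal $\nu$, Taylor-expand $f$ along an optimal martingale coupling so the linear term cancels, bound the quadratic remainder via H\"older (with exponents $\tfrac{p}{p-2}$ and $\tfrac{p}{2}$) and Minkowski, and contradict the growth condition \eqref{eq: property of phi_v2} for $\cW_p^\mart(\mu,\nu)^2/h$ large. Your additions --- the integrability/disintegration argument justifying the vanishing of the first-order term, and the explicit remark that the constant bookkeeping requires $p>2$ and a small $h_0$ --- are refinements of steps the paper leaves implicit, not a different approach.
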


\begin{proof}
	The proof is similar to the one of Lemma \ref{lem.restriction}.\ Since $f \in C^2_p$, there exists a constant $C \geq 0$ such that
	\[
	\big\|\nabla^2 f(x)\big\| \le C(1 + \|x\|)^{p-2} \quad \text{for all } x \in \X.
	\]
	Moreover, by Equation \eqref{eq: property of phi_v2}, there exist constants $a \ge 0$ and $h_0 > 0$ such that
	\begin{equation} \label{eq.growthphi.squared}
		\phi(u^2) > 1 + \frac{C}{2}\Big(1 + |\mu|_p + \sqrt{h_0} u\Big)^{p-2} u^2 \quad \text{for all } u \in (\sqrt{a}, \infty).
	\end{equation}
	Let $h \in (0, h_0]$, $\nu \in \cP_p$ with
	\[
	\int_\X f(y)\,\mu(\d y) \le \cI(h)f \le h + \int_\X f(z)\,\nu(\d z) - \phi_h\big(\cW_p^\text{Mart}(\mu,\nu)^2\big),
	\]
	and $\pi \in \cpl(\mu, \nu)$ be an optimal martingale coupling between $\mu$ and $\nu$.\ Using Taylor's theorem,
	\begin{align*}
		\phi\left( \frac{\cW_p^\text{Mart}(\mu, \nu)^2}{h} \right) & \le 1 + \int_{\X \times \X} \frac{f(z) - f(y)}{h}\,\pi(\d y, \d z) \\
		& = 1 + \frac{1}{h} \bigg(\int_{\X \times \X} \big\langle \nabla f(y), z - y \big\rangle \, \pi(\d y, \d z) \\
		& \quad\; + \int_{\X \times \X} \int_0^1 (1-t) \big\langle \nabla^2 f\big(y + t(z-y)\big) (z - y), z - y \big\rangle \,\d t \, \pi(\d y, \d z) \bigg).
	\end{align*}
	Thanks to the martingale constraint, the first term in the Taylor expansion vanishes, so that,
	using Fubini's theorem, Hölder's inequality, and Minkowski's inequality, we obtain
	\begin{align*}
		\phi\Bigg( & \frac{\cW_p^\text{Mart}(\mu, \nu)^2}{h} \Bigg) \le 1 + \frac{1}{h} \int_0^1 (1-t) \int_{\X \times \X} \big|\nabla^2 f(y+t(z-y))\big|_\Max \|z - y\|^2\,\pi(\d y, \d z)\,\d t \\
		& \qquad \le 1 + \int_0^1 (1-t) \left( \int_{\X \times \X} \big|\nabla^2 f(y+t(z-y))\big|_\Max^\frac{p}{p-2}\,\pi(\d y, \d z) \right)^\frac{p-2}{p}\,\d t \frac{\cW_p^\text{Mart}(\mu, \nu)^2}{h} \\
		& \qquad \le 1 + \frac{C}{2} \left( \int_{\X \times \X} \big(1 + \|y\| + \|z - y\|\big)^p\,\pi(\d y,\d z) \right)^\frac{p-2}{p} \frac{\cW_p^\text{Mart}(\mu, \nu)^2}{h} \\
		& \qquad \le 1 + \frac{C}{2} \left( 1 + |\mu|_p + \sqrt{h_0}\frac{\cW_p^\text{Mart}(\mu, \nu)}{\sqrt{h}} \right)^{p - 2} \frac{\cW_p^\text{Mart}(\mu, \nu)^2}{h}.
	\end{align*}
	By Equation \eqref{eq.growthphi.squared}, it follows that $\frac{\cW_p^\text{Mart}(\mu, \nu)}{\sqrt h} \le \sqrt{a}$.
\end{proof}

\begin{lemma}\label{lem.approx2}
	Let $\Theta$ be dense in $L_p(\mu;\X)$ with $0\in \Theta$. Then,
	\[
	\cI^\mart_\Theta(h)f=\cI^\mart_{L_p(\mu;\X)}(h)f\quad\text{for all }f\in \Lip_p \text{ and }h>0.
	\]
\end{lemma}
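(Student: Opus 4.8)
The plan is to mirror the proof of Lemma \ref{lem.approx} essentially verbatim, replacing the shifted measure $\mu_\theta$ by its symmetrization $\mu_\theta^\mart$ and the penalty argument $\|\theta\|_{L_p(\mu;\X)}$ by $\|\theta\|_{L_p(\mu;\X)}^2$. As there, the inequality $\cI^\mart_\Theta(h)f \le \cI^\mart_{L_p(\mu;\X)}(h)f$ is immediate from $\Theta \subset L_p(\mu;\X)$, so only the reverse inequality requires work. I would fix $h>0$, $f\in\Lip_p$, and $\eps>0$, and pick $\theta_0\in L_p(\mu;\X)$ that is $\eps/3$-optimal for $\cI^\mart_{L_p(\mu;\X)}(h)f$.

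The engine of the argument is the continuity on $L_p(\mu;\X)$ of the map $\theta \mapsto \int_\X \tfrac{f(y+\theta(y))+f(y-\theta(y))}{2}\,\mu(\d y)$. I would deduce this from $f\in\Lip_p$: along any $L_p(\mu;\X)$-convergent sequence of parameters, the growth bound \eqref{eq.growth.lip} furnishes (after passing to a $\mu$-a.e.\ convergent, dominated subsequence) a $\mu$-integrable majorant of the integrands $y\mapsto f(y\pm\theta_n(y))$, so dominated convergence together with the continuity of $f$ applies, and the usual subsequence principle upgrades this to genuine continuity. Granting this, I would first invoke dominated convergence to produce $\lambda\in(0,1)$ with $\int_\X \tfrac{f(y+\theta_0)+f(y-\theta_0)}{2}\,\mu(\d y) \le \int_\X \tfrac{f(y+\lambda\theta_0)+f(y-\lambda\theta_0)}{2}\,\mu(\d y) + \eps/3$, and then, using the density of $\Theta$ in $L_p(\mu;\X)$, a parameter $\theta\in\Theta$ close to $\lambda\theta_0$ in $L_p(\mu;\X)$ for which $\int_\X \tfrac{f(y+\lambda\theta_0)+f(y-\lambda\theta_0)}{2}\,\mu(\d y) \le \int_\X \tfrac{f(y+\theta)+f(y-\theta)}{2}\,\mu(\d y) + \eps/3$ and $\|\theta\|_{L_p(\mu;\X)} \le \|\theta_0\|_{L_p(\mu;\X)}$; the last inequality is available because the $L_p(\mu;\X)$-norm along any sequence approximating $\lambda\theta_0$ tends to $\lambda\|\theta_0\|_{L_p(\mu;\X)} < \|\theta_0\|_{L_p(\mu;\X)}$.

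Finally, since $\phi$ — and hence $\phi_h$ — is nondecreasing, $\phi_h(\|\theta\|_{L_p(\mu;\X)}^2) \le \phi_h(\|\theta_0\|_{L_p(\mu;\X)}^2)$, and chaining the three estimates gives $\cI^\mart_{L_p(\mu;\X)}(h)f \le \int_\X \tfrac{f(y+\theta)+f(y-\theta)}{2}\,\mu(\d y) - \phi_h(\|\theta\|_{L_p(\mu;\X)}^2) + \eps \le \cI^\mart_\Theta(h)f + \eps$. Letting $\eps\downarrow 0$ yields the claimed equality. I do not anticipate any genuine obstacle: the structure is word-for-word that of Lemma \ref{lem.approx}, and the only point deserving an explicit line of justification is the $L_p(\mu;\X)$-continuity of the symmetrized payoff functional, which is routine given the $\Lip_p$ growth estimate.
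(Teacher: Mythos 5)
Your proposal is correct and follows essentially the same route as the paper's proof, which likewise transplants the argument of Lemma \ref{lem.approx} verbatim, replacing $\mu_\theta$ by $\mu_\theta^{\mart}$ and the penalty argument by its square: the $\lambda$-rescaling to secure $\|\theta\|_{L_p(\mu;\X)}\le\|\theta_0\|_{L_p(\mu;\X)}$, the density of $\Theta$, dominated convergence for the continuity of the symmetrized payoff, and the monotonicity of $\phi$ all appear in the same order. Your explicit justification of the $L_p(\mu;\X)$-continuity via a dominated, $\mu$-a.e.\ convergent subsequence is slightly more detailed than the paper's one-line appeal to dominated convergence, but it is the same argument.
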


\begin{proof}
	We proceed exactly as in the proof of Lemma \ref{lem.approx}. Let $h>0$ and $f \in C_p^2$. 
	Again, $\cI^\mart_\Theta (h)f \le \cI^\mart_{L_p(\mu; \X)} (h) f$, since $\Theta \subset L_p(\mu; \X)$.
	Let $\eps > 0$ and $\theta_0 \in L_p(\mu;\X)$ with
	\[
	\cI^\mart_{L_p(\mu;\X)}(h)f\leq  \int_\X f(z)\,\mu^\mart_{\theta_0}(\d z) - \phi_h(\| \theta_0 \|^2_{L_p(\mu;\X)})+\frac\eps3.
	\]
	By dominated convergence, we first find $\lambda\in (0,1)$ with
	\[
	\int_\X f(z)\,\mu^\mart_{\theta_0}(\d z)\leq  \int_\X f(z)\,\mu^\mart_{\lambda \theta_0}(\d z)
	+\frac{\eps}3,
	\]
	and then $\theta\in \Theta$ with $\|\theta\|_{L_p(\mu;\X)}\leq \|\theta_0\|_{L_p(\mu;\X)}$ and
	\[
	\int_\X f(z)\,\mu^\mart_{\lambda \theta_0}(\d z)\leq \int_\X f(z)\,\mu^\mart_{\theta}(\d z)+\frac\eps3,
	\]
	where we used the fact that $\Theta$ is dense in $L_p(\mu;\X)$ and $0\in \Theta$.  
	Since $\phi$ is nondecreasing,
	\[
	\cI^\mart_{L_p(\mu;\X)}(h)f\leq  \int_\X f(z)\,\mu^\mart_{\theta}(\d z) - \phi_h(\| \theta \|_{L_p(\mu;\X)}^2)+\eps\leq \cI^\mart_\Theta (h)f+\eps.
	\]
	Letting $\eps\downarrow 0$ the claim follows.
\end{proof}

\begin{proof}[Proof of Theorem \ref{thm.martingale.constraint}]
	
	We use a similar approach as in the proof of Theorem \ref{cor.main.itheta.single}, providing a bound from above for $\cI^\mart(h)f-\mu f$ and a bound from below for $\cI^\mart_\Theta(h)f-\mu f$ in the limit $h\downarrow 0$. We start with the bound from above writing a Taylor expansion for $f$ as we did in the proof of Lemma \ref{lem.restriction.2nd}. To that end, let $a\geq 0$ and $h_0>0$ as in Lemma \ref{lem.restriction.2nd}. For all $h\in (0,h_0]$, let $\nu_h\in \cM_p(\mu)$ with
	$\cW^\mart_p(\mu,\nu_h)\leq  \sqrt{a h}$ and
	\[
	\cI^\text{Mart}(h)f\leq h^2+ \int_\X f(z)\, \nu_h(\d z)-\phi_h\big(\cW^\mart(\mu,\nu_h)^2\big),
	\]
	and $\pi_h\in \mart(\mu,\nu)$ be a martingale coupling between $\mu$ and $\nu_h$ that attains the infimum in definition of the $p$-martingale Wasserstein distance between $\mu$ and $\nu_h$, cf.\ \cite[Theorem 1.7]{Beiglbck2016}.\ Then, for all $h\in (0,h_0]$,
	\begin{align*}
		\frac{\cI^\text{Mart}(h)f - \mu f}{h} & \leq h+\frac{1}{h} \int_{\X \times \X} f(z) - f(y) \, \pi_h(\d y, \d z) - \phi\bigg(\frac{\cW_p^\text{Mart}(\mu, \nu_h)^2}{h}\bigg) \\
		& = h + \frac1h \int_{\X \times \X} \int_0^1 (1-t) \big\langle \nabla^2 f\big(y + t(z-y)\big) (z - y), z - y \big\rangle \,\d t \, \pi_h(\d y, \d z) \\
		& \quad- \phi\bigg(\frac{\cW_p^\text{Mart}(\mu, \nu_h)^2}h\bigg),
	\end{align*}
	where, thanks to the martingale constraint, the first order term in the Taylor expansion vanishes as in the proof of Lemma \ref{lem.restriction.2nd}.
	
	In the second order term, we renormalize the vector $z-y$ multiplying and dividing by $\|z - y\|$.
	Then, using Fubini's theorem and Hölder's inequality, we obtain, for all $h\in (0,h_0]$,
	\begin{align*}
		\frac{1}{h} \int_{\X \times \X} &\int_0^1  (1-t) \big\langle \nabla^2 f\big(y + t(z-y)\big) (z - y), z - y \big\rangle \,\d t \, \pi_h(\d y, \d z) \\
		& \le \frac{1}{h} \int_0^1 \int_{\X \times \X} (1-t) \big|\nabla^2 f(y + t(z - y))\big|_\Max \| z - y \|^2 \,\pi_h(\d y, \d z) \,\d t \\
		& \le \int_0^1 (1-t) \left( \int_{\X \times \X} \big|\nabla^2 f(y + t(z - y))\big|_\Max^{\frac{p}{p-2}}\,\pi_h(\d y, \d z) \right)^{\frac{p-2}p}\,\d t \frac{\cW_p^\text{Mart}(\mu, \nu_h)^2}{h}.
	\end{align*}
	Substituting the last inequality in the previous computations and passing to the convex conjugate of $\phi$, we find that
	\[
	\frac{\cI^\text{Mart}(h)f - \mu f}{h} \le h+ \phi^*\left( \int_0^1 (1-t)\left( \int_{\X \times \X} \big|\nabla^2 f(y + t(z - y))\big|_\Max^{\frac{p}{p-2}} \, \pi_h(\d y, \d z) \right)^{\frac{p-2}{p}} \,\d t \right)
	\]
	for all $h\in (0,h_0]$. By assumption, $f$ is an element of $C_p^2$, so that $|\nabla^2f|_{\Max}\colon \X\to \R$ is continuous, and there exists a constant $C\geq 0$ such that $$\big|\nabla^2 f(x)\big|_\Max^{\frac{p}{p-2}} \le \|\nabla^2 f (x)\|^{\frac{p}{p-2}} \le C(1 + \|x\|)^p\quad\text{for all }x\in \X.$$ Moreover,
	\[
	\bigg(\int_{\X\times \X}\|y-z\|^p\, \pi_h(\d \mu,\d \nu_h)\bigg)^{1/p}=\cW_p^\mart(\mu,\nu_h)\leq \sqrt{ah}\quad \text{for all }h\in (0,h_0].
	\]
	Therefore, $\pi_h$ concentrates on the diagonal as $h\downarrow 0$, i.e.,
	\[
	\int_{\X\times \X} g(y,z)\,\pi_h(\d y,\d z)\to  \int_{\X\times \X} g(y,y)\, \mu(\d y) \quad\text{as }h\downarrow 0
	\]
	for all continuous functions $g\colon \X\times \X\to \R$ with $|g(x_1,x_2)|\leq M(1+\|x_1\|+\|x_2\|)^p$ for all $x_1,x_2\in \X$ and some constant $M\geq 0$, cf.\ \cite[Theorem 6.9]{villani2008optimal}, and the continuity of $\phi^*$ implies that
	\[
	\limsup_{h\downarrow 0}\frac{\cI^\text{Mart}(h)f - \mu f}{h} \le \phi^*\left(\frac{1}{2} \left(\int_{\X \times \X} \big|\nabla^2 f(y)\big|_\Max^{\frac{p}{p-2}} \, \mu(\d y) \right)^{\frac{p-2}{p}} \right).
	\]
	For the bound from below, let $\eps>0$ and $v \colon \X \to \S_0$ satisfying \eqref{eq.measurable curv} in Remark \ref{rem.mart.meas.select}, and define
	\begin{equation} \label{eq.optimiz.direct.mart}
		\theta(x) := v(x) \big(\big|\nabla^2 f(x)\big|_\Max\big)^\frac{1}{p-2}.
	\end{equation}
	Since $f \in C^2_p$, it follows that $\theta \in L_p(\mu; \X)$, so that, for every $b \geq 0$,
	\[
	\frac{\cI^\text{Mart}_\Theta (h)f - \mu f}{h} \ge \int_{\X}\int_\R \frac{f\big(y+s\sqrt{bh}\theta(y)\big) - f(y)}{h}\,B_{\sym}(\d s)\, \mu(\d y)  - \phi\Big( b\|\theta\|^2_{L_p(\mu;\X)}\Big). 
	\]
	Using Fatou's lemma, this leads to
	\begin{align*}
		\liminf_{h\downarrow 0} & \frac{\cI^\text{Mart}_\Theta (h)f - \mu f}{h}\geq \int_{\X} \frac{b}2|\nabla^2f(y)|_\Max^{\frac{p}{p-2}}\,\mu(\d y)  - \phi\Big( b\|\theta\|^2_{L_p(\mu;\X)}\Big)-\eps\\
		& \qquad \qquad = \frac{b}2\int_{\X}|\nabla^2f(y)|_\Max^{\frac{p}{p-2}}\,\mu(\d y) - \phi\left( b\bigg(\int_{\X}|\nabla^2f(y)|_\Max^{\frac{p}{p-2}}\,\mu(\d y)\bigg)^{\frac2p}\right)-\eps
	\end{align*}
	for all $b\geq0$. Taking the supremum over all $b\geq 0$, we can pass to the conjugate $\phi^*$ of $\phi$, and obtain that
	\[
	\liminf_{h\downarrow 0}\frac{\cI^\text{Mart}_\Theta (h)f - \mu f}{h}\geq \phi^*\left(\frac12 \bigg(\int_{\X}|\nabla^2f(y)|_\Max^{\frac{p}{p-2}}\,\mu(\d y)\bigg)^{\frac{p-2}p}\right)-\eps.
	\]
	Taking the limit $\eps\downarrow 0$ and invoking Lemma \ref{lem.approx2} completes the proof.
\end{proof}

\medskip
\noindent\textbf{Competing interests}\smallskip\

\noindent The authors have no relevant financial or non-financial interests to disclose.

\appendix
	
\section{A simple but useful approximation result}\label{app.A}

\begin{proposition}\label{prop.append.approx}
	Let $(\Omega, \mathcal F,\mu)$ be a $\sigma$-finite measure space, $(S_n)_{n\in \N}\subset \mathcal F$ with $\mu(S_n)<\infty$ for all $n\in \N$ and $\Omega=\bigcup_{n\in \N}S_n$, $X$ a Banach space, endowed with the Borel $\sigma$-algebra, and $\mathcal A$ an $\cap$-stable generator
	of $\cF$ with $\Omega\in \cA$. Let $p\in [1,\infty)$ and $L\subset L_p(\mu)$ with $\id_{A\cap S_n}\in \overline{\spn L}$ (closure of $\spn L$ in $L_p(\mu)$) for all $A\in \cA$ and $n\in \N$. Then, $$\overline{\Xspn L}=L_p(\mu;X),$$
	where $\Xspn L:={\rm spn}\{x\cdot f\,|\, x\in X,\, f\in L\}$.
\end{proposition}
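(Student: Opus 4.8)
The plan is to deduce the statement in three steps: first upgrade the hypothesis from the generator $\cA$ to all of $\cF$ by a Dynkin-system argument, then pass from finite-measure $\cF$-indicators to arbitrary $X$-valued simple functions, and finally invoke the density of simple functions in the Bochner space $L_p(\mu;X)$. For Step~1, fix $n\in\N$ and consider $\cD_n:=\{A\in\cF:\id_{A\cap S_n}\in\overline{\spn L}\}$. Using that $\overline{\spn L}$ is a closed linear subspace of $L_p(\mu)$, I would check that $\cD_n$ is a $\lambda$-system: $\Omega\in\cD_n$ because $\Omega\in\cA$ gives $\id_{S_n}\in\overline{\spn L}$; $\cD_n$ is stable under proper differences since $\id_{(B\setminus A)\cap S_n}=\id_{B\cap S_n}-\id_{A\cap S_n}$; and it is stable under increasing unions, since for $A_k\uparrow A$ one has $\id_{A_k\cap S_n}\to\id_{A\cap S_n}$ in $L_p(\mu)$ by dominated convergence with dominating function $\id_{S_n}\in L_p(\mu)$ (here $\mu(S_n)<\infty$ and $p<\infty$), and $\overline{\spn L}$ is closed. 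As $\cD_n$ contains the $\cap$-stable generator $\cA$, Dynkin's $\pi$--$\lambda$ theorem yields $\cD_n=\cF$, so $\id_{A\cap S_n}\in\overline{\spn L}$ for every $A\in\cF$ and every $n\in\N$.

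For Step~2, let $B\in\cF$ with $\mu(B)<\infty$ and $n\in\N$. Writing $B\cap\bigcup_{k\le n}S_k=\bigcup_{k\le n}(B\cap S_k)$ and expanding the indicator by inclusion--exclusion, one gets a finite linear combination of indicators of sets $B\cap\bigcap_{i\in I}S_i=\bigl(B\cap\bigcap_{i\in I\setminus\{j\}}S_i\bigr)\cap S_j$ with $\emptyset\ne I\subseteq\{1,\dots,n\}$ and $j\in I$; each such indicator lies in $\overline{\spn L}$ by Step~1, applied to the $\cF$-set $B\cap\bigcap_{i\in I\setminus\{j\}}S_i$ and index $j$. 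Hence $\id_{B\cap\bigcup_{k\le n}S_k}\in\overline{\spn L}$, and letting $n\to\infty$ (using $\bigcup_n S_n=\Omega$ and dominated convergence with dominating function $\id_B\in L_p(\mu)$) yields $\id_B\in\overline{\spn L}$ for every $B\in\cF$ of finite $\mu$-measure.

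For Step~3, given $x\in X$ and $B\in\cF$ with $\mu(B)<\infty$, pick $g_m\in\spn L$ with $g_m\to\id_B$ in $L_p(\mu)$; then $x\,g_m\in\Xspn L$ and $\|x\,g_m-x\,\id_B\|_{L_p(\mu;X)}=\|x\|\,\|g_m-\id_B\|_{L_p(\mu)}\to 0$, so $x\,\id_B\in\overline{\Xspn L}$. Since $\overline{\Xspn L}$ is a closed subspace, it contains every $X$-valued simple function $\sum_{i=1}^m x_i\,\id_{B_i}$ with $\mu(B_i)<\infty$. As $\mu$ is $\sigma$-finite and $p\in[1,\infty)$, such simple functions are dense in $L_p(\mu;X)$ (a standard Bochner-space fact, via strong measurability and truncation), so $\overline{\Xspn L}\supseteq L_p(\mu;X)$; the reverse inclusion is trivial, giving equality.

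The argument is essentially routine measure theory plus one standard Bochner-space density fact; the points that need care are that $\overline{\spn L}$ is genuinely closed (so that the dominated limits in Steps~1--2 remain inside it) and that the exhausting sequence $(S_n)$ is \emph{not} assumed increasing, which is why Step~2 uses the inclusion--exclusion bookkeeping over $\bigcup_{k\le n}S_k$ rather than a one-line monotone-limit argument. I do not expect a genuine obstacle beyond this bookkeeping and the invocation of the density of simple functions in $L_p(\mu;X)$.
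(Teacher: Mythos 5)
Your proof is correct and follows essentially the same route as the paper's: a Dynkin ($\pi$--$\lambda$) argument to upgrade the hypothesis from the generator $\cA$ to all of $\cF$, followed by the standard density of ($X$-valued) simple functions in the Bochner space $L_p(\mu;X)$. The only organizational difference is that you insert an explicit inclusion--exclusion step to pass from $\id_{A\cap S_n}$ to $\id_B$ for arbitrary finite-measure $B$, whereas the paper keeps the sets $B\cap S_n$ throughout and cites the density of $\spn\{x\cdot\id_{B\cap S_n}\}$ directly; both are valid.
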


\begin{proof}
	Let $\mathcal D: = \{B\in \mathcal F\,|\, \id_{B\cap A_n} \in \overline{\spn L}\text{ for all }n\in \N\}$. By assumption,
	$A\subset \mathcal  D$ and, in particular, $\Omega\in \mathcal D$. Let $B\in \mathcal D$ and $n\in \N$. Then, there exist sequences $(f_k)_{k\in \N}\subset \spn L$ and $(g_k)_{k\in \N}\subset \spn L$ with
	$$\lim_{k\to \infty}\|f_k-\id_{S_n}\|_{L_p(\mu)}=0\quad \text{and}\quad \lim_{k\to \infty}\|g_k-\id_{B\cap S_n}\|_{L_p(\mu)}=0.$$
	Then, $f_k-g_k\in \spn L$ for all $k\in \N$, and
	$$
	\|f_k-g_k-\id_{B^c\cap S_n}\|_{L_p(\mu)}\leq \|f_k-\id_{S_n}\|_{L_p(\mu)}+\|g_k-\id_{B\cap S_n}\|_{L_p(\mu)}\to 0\quad \text{as }k\to \infty.
	$$
	Now, let $(A_k)_{k\in \N}\subset \cD$ pairwise
	disjoint, $\eps> 0$, $n\in \N$, $A_k^n:=A_k\cap S_n$ for all $k\in \N$, and $A^n:=\dot\bigcup_{k\in \N}A_k^n$. By $\sigma$-additivity of $\mu$, there exists some $k \in \N$ with $$\bigg\|\id_{A^n}-\sum_{i=1}^k\id_{A_i^n}\bigg\|_{L_p(\mu)}<2^{-k}\eps.$$
	On the other hand, there exists $f_1,\ldots,f_n\in \spn L$ with $\|f_i-\id_{A_i^n}\|_{L_p(\mu)}<2^{-i}\eps$ for all $i\in \{1,\ldots,k\}$. Therefore,
	\begin{align*}
		\bigg\|\id_{A^n}-\sum_{i=1}^k f_i \bigg\|_{L_p(\mu)}&\leq \bigg\|\id_{A^n}-\sum_{i=1}^k\id_{A_i^n}\bigg\|_{L_p(\mu)}+\sum_{i=1}^k\big\|\id_{A_i^n}-f_i\big\|_{L_p(\mu)}<\eps.
	\end{align*}
	This shows that $\dot\bigcup_{k \in \N}A_k \in \cD$. By Dynkin's lemma, $\cD = \cF$. Now the statement
	follows from the fact that $L_p(\mu;X)$ is the closure of ${\rm spn}\{ x\cdot \id_{B\cap S_n}\,|\, x\in X,\, B\in \cF, \, n\in \N\}$, which is a consequence of the dominated convergence theorem, cf.\ \cite[Proposition 1.2.5]{MR3617205}, and \cite[Lemma 1.2.19(1)]{MR3617205}.
\end{proof}

As a consequence of the previous proposition, we obtain the following approximation results.
\begin{remark}\ \label{rem.append.approx.ex}
	Let $(\Omega, \cF,\mu)$ be a $\sigma$-finite measure space, $X$ a Banach space, endowed with the Borel $\sigma$-algebra, and $p\in [1,\infty)$.
	\begin{enumerate}[a)]
		\item As a consequence of Proposition \ref{prop.append.approx}, we obtain the following well-known approximation result, see, e.g., \cite[Lemma A.1]{MR4319244} for a proof in the special case $X=\R$: Let $\Omega$ be a metric space with the Borel $\sigma$-algebra $\mathcal F$ and a finite measure $\mu$.\ Then, $\Xspn \Lip_b(\Omega)$ is dense in $L_p(\mu;X)$, where $\Lip_b(\Omega)$ denotes the space of all bounded Lipschitz continuous functions $\Omega\to \R$. In particular, if $\Omega$ or, alternatively, $X$ is assumed to be separable, the space $\Lip_b(\Omega;X)$ of all bounded Lipschitz continuous functions $\Omega\to X$ is a dense subset of $L_p(\mu;X)$.\ In order to deduce this approximation result from Proposition \ref{prop.append.approx}, recall that, for every open set $U\subset \Omega$, $\id_U$ can be obtained as the pointwise monotone limit of bounded Lipschitz continuous functions using the so-called $\inf$-convolution
		$$
		f_\delta(\omega):=\inf_{\omega_0\in \Omega}\bigg(\id_U(\omega_0)+\frac{d(\omega,\omega_0)}{\delta}\bigg)\quad \text{for }\omega\in \Omega \text{ and }\delta>0.
		$$
		\item If $\Omega=H$ is a separable Hilbert space, $\mathcal F$ is the Borel $\sigma$-algebra, and every ball in $H$ has finite measure under $\mu$, then $\Xspn C_{b,b}^\infty(H)$ is dense in $L_p(\mu;X)$, where $C_{b,b}^\infty(H)$ denotes the space of all bounded infinitely smooth functions $H\to \R$ with bounded support.\ In particular, the space $C_{b,b}^\infty(H;X)$ of all bounded infinitely smooth functions $H\to X$ with bounded support is dense in $L_p(\mu;X)$.
		This approximation result follows directly from Proposition \ref{prop.append.approx} together with the fact that the indicator function of every open ball with radius $r>0$ and barycenter $x_0\in H$ is the pointwise monotone limit of the functions
		\[
		f_\delta(x):=\begin{cases}
			\exp\Big({-\tfrac{\delta\|x-x_0\|^2}{r^2-\|x-x_0\|^2}}\Big),& \|x-x_0\|<r,\\
			0,&\|x-x_0\|\geq r,
		\end{cases} \quad \text{for }x\in H\text{ and }\delta>0,
		\]
		and finite intersections of such balls are monotone limits of finite products of such functions. 
		\item \label{rem.append.approx.ex.options}Let $\Omega=\R$, $\mathcal F$ be the Borel $\sigma$-algebra, and $\mu$ be a finite measure. Then, for $$L_0:=\big\{1_{[K,\infty)}\, \big|\, K\in \R\big\},$$ the space $\Xspn L_0$ is dense in $L_p(\mu;X)$. If, additionally, $\int_\R |y|^p\, \mu(\d y)<\infty$ and $$ L_1:=\Big\{\big[x\mapsto (x-K)^+\big]\, \Big|\, K\in \R\Big\},$$
		then also $\Xspn L_1$ is dense in $L_p(\mu;X)$.
		\item Let $\mu$ be a finite measure and $\mathcal A$ an $\cap$-stable generator of $\mathcal F$. Then, $\Xspn \{ \id_A\, |\, A\in \mathcal A\}$ is dense in $L_p(\mu;X)$.\ Note that this approximation holds for an arbitrary $\cap$-stable generator $\mathcal A$ without any additional assumptions.\ As a consequence, for $X=\R$, one obtains the well-known Carath\'eodory approximation: if $\mu$ is a finite measure and $\mathcal A$ is an algebra with $\sigma(\mathcal A)=\mathcal F$, then, for all $B\in \mathcal F$ and all $\eps>0$, there exists some $A\in \mathcal A$ with
		\[
		\mu(A\Delta B)=\|\id_A-\id_B\|_{L_1(\mu)}<\eps.
		\]
		Indeed, let $f\in \spn \{ \id_A\, |\, A\in \mathcal A\}$ with $\|\id_B-f\|_{L_1(\mu)}<\eps/2$. Then, since $\mathcal A$ is an algebra, $A:=\{\omega\in \Omega\, |\, f(\omega)> 1-\eps/2\}$ is an element of $\mathcal A$, and
		\[
		\mu(A\Delta B)=\mu(A\setminus B)+\mu(B\setminus A)<\eps.
		\]
	\end{enumerate}
\end{remark}

\bibliographystyle{abbrv}

\end{document}